\documentclass[english,11pt]{article}

\usepackage[T1]{fontenc}
\usepackage[utf8]{inputenc}
\usepackage[margin=1in]{geometry}
\usepackage{lmodern}
\usepackage{microtype}
\usepackage{amsmath,amssymb,amsfonts,amsthm}
\usepackage{array}
\usepackage{booktabs}
\usepackage{graphicx}
\usepackage{algorithm}
\usepackage{algpseudocode}
\usepackage{natbib}
\usepackage{tabularx}
\usepackage{xcolor}
\usepackage{bm}
\usepackage[toc,page,header]{appendix}
\usepackage{authblk}
\usepackage[colorlinks=true,linkcolor=blue!50!black,citecolor=blue!50!black,urlcolor=blue!50!black,breaklinks=true]{hyperref}
\newcommand{\R}{\mathbb{R}}
\newcommand{\simplex}{\Delta_n}
\newcommand{\hadpow}[2]{#1^{\circ #2}}

\newtheorem{theorem}{Theorem}
\newtheorem{proposition}[theorem]{Proposition}
\newtheorem{corollary}[theorem]{Corollary}
\newtheorem{lemma}[theorem]{Lemma}
\newtheorem{assumption}[theorem]{Assumption}
\theoremstyle{definition}
\newtheorem{definition}[theorem]{Definition}
\theoremstyle{remark}
\newtheorem{remark}[theorem]{Remark}

\newcommand{\mvskTitle}{Yau's Affine-Normal Descent for Large-Scale Unrestricted Higher-Moment Portfolio Optimization}
\newcommand{\mvskAbstractText}{%
Unrestricted mean-variance-skewness-kurtosis portfolio optimization can capture asymmetry and tail risk, but sample-moment formulations become computationally impractical when the asset universe is large: they produce dense nonconvex quartic objectives with prohibitive coskewness and cokurtosis tensors and anisotropic, ill-conditioned level sets. We develop a structure-exploiting algorithm based on Yau's affine-normal descent that follows affine-normal directions of the current level set while working directly with the return matrix. The method avoids explicit higher-order tensors and exploits the quartic structure for exact sample oracles, derivative evaluation, and exact line search. We also provide theory for the reduced simplex formulation, including regularity and convexity conditions that separate data-map geometry from investor preference coefficients. Computational results show a clear implementation split: a direct configuration is effective on the standard small benchmark, whereas a preconditioned conjugate-gradient configuration with stall recovery becomes the preferred large-scale implementation by the upper end of the hundreds and remains competitive as the asset universe moves into the thousands. On a 5-minute A-share panel with 5,440 stocks, the method makes direct full-universe comparisons with exact mean-variance portfolios feasible and shows on the baseline split that the incremental value of higher moments is strongest at moderate return targets.%
}
\newcommand{\mvskKeywordsText}{Yau's affine-normal descent, portfolio optimization, higher moments, computational finance}
\newcommand{\mvskSubjectClassText}{Portfolio theory; nonlinear programming; large-scale algorithms}

\hypersetup{
  pdftitle={Yau's Affine-Normal Descent for Large-Scale Unrestricted Higher-Moment Portfolio Optimization},
  pdfauthor={Ya-Juan Wang, Yi-Shuai Niu, Artan Sheshmani, Shing-Tung Yau}
}

\title{\mvskTitle}
\author[1]{Ya-Juan Wang\thanks{\texttt{wangyajuan@bimsa.cn}}}
\author[1]{Yi-Shuai Niu\thanks{Corresponding author. \texttt{niuyishuai@bimsa.cn}}}
\author[1,3]{Artan Sheshmani\thanks{\texttt{artan@mit.edu}}}
\author[1,2]{Shing-Tung Yau\thanks{Corresponding author. \texttt{styau@tsinghua.edu.cn}}}
\affil[1]{Beijing Institute of Mathematical Sciences and Applications (BIMSA), Beijing, China}
\affil[2]{Yau Mathematical Sciences Center, Tsinghua University, Beijing, China}
\affil[3]{IAIFI Institute, Massachusetts Institute of Technology, Cambridge, MA 02139, USA}
\date{}

\begin{document}

\maketitle

\begin{abstract}
\mvskAbstractText
\end{abstract}

\noindent\textbf{Keywords.} \mvskKeywordsText

\medskip
\noindent\textbf{Subject classifications.} \mvskSubjectClassText

\bigskip
\tableofcontents
\bigskip

\section{Introduction}
\label{sec:introduction}
\enlargethispage{\baselineskip}

Mean--variance analysis remains the standard benchmark for portfolio design, but many investors care about asymmetry and downside tail exposure as well as variance \citep{KrausLitzenberger1976,ScottHorvath1980}. This motivates higher-moment extensions such as mean--variance--skewness--kurtosis portfolio optimization \citep{Markowitz1952,deAthaydeFlores2004,JondeauRockinger2006,NiuWangThiPham2019,ZhouPalomar2021,Palomar2025}. The practical question, however, is not whether higher moments are conceptually attractive. It is whether the unrestricted sample-moment model can be solved at a scale large enough to compare it credibly with exact mean--variance portfolios and to determine when the extra modeling flexibility matters economically.

That question has remained difficult mainly for computational reasons. The unrestricted model is dense, nonconvex, and expensive to evaluate through its traditional tensor representation. Explicit covariance, coskewness, and cokurtosis objects become costly to build, store, and differentiate as the asset universe grows. As a result, much of the empirical literature remains concentrated on modest dimensions \citep{deAthaydeFlores2004,JondeauRockinger2006,ZhouPalomar2021}. This makes it hard to tell whether higher moments continue to affect portfolio choice after diversification or whether the reported gains are mainly a small-universe artifact.

This paper addresses that bottleneck with a dedicated YAND-MVSK algorithm based on Yau's affine-normal descent (YAND). The YAND terminology is used here in its algorithmic sense. \citet{ChengChengYau2005} introduced the affine-normal direction as a geometric search direction for minimization. The recent YAND framework of \citet{NiuSheshmaniYau2026YAND} turns that idea into a line-search method for smooth unconstrained optimization, including possibly nonconvex objectives: the search direction is defined by the affine normal of the current level set, adapts to anisotropic curvature, is robust to affine scaling, and comes with convergence guarantees ranging from global stationarity under standard smoothness conditions to linear rates under strong convexity or PL-type structure and local quadratic convergence near nondegenerate minimizers. A companion computational development shows that affine-normal evaluation can be made matrix-free through log-determinant geometry, tangent linear solves, Hessian-vector products, and structured derivative kernels rather than explicit high-order tensor contraction \citep{NiuSheshmaniYau2026LogDet}.

This viewpoint is especially relevant for unrestricted MVSK: its sample objective is a linear return map followed by a separable quartic response, and much of its numerical difficulty appears as anisotropic scaling and elongated level-set geometry. Our contribution is to specialize the general YAND framework to the simplex-constrained sample-moment MVSK problem. The investor-facing model is unchanged, but the numerical interface is rebuilt around the return matrix. The method combines exact sample-oracle calculations, reduced-coordinate affine-normal steps on the simplex, exact quartic line search, and boundary-aware continuation, thereby avoiding explicit higher-order tensors while preserving the unrestricted sample-moment objective.

The paper also explains when this interface should work well. We show that the simplex-constrained problem admits an exact reduced-coordinate representation for YAND, derive regularity bounds separating data geometry from preference-driven curvature, and give a coefficient-only convexity test covering the standard CRRA calibration.

The empirical evidence is organized to answer three linked questions. Can the method handle controlled geometric deterioration inside a convex regime? How should it be configured as the asset dimension grows on standard unrestricted-MVSK benchmarks? Once unrestricted MVSK becomes feasible at full-universe scale, when does it materially change the comparison with exact mean--variance portfolios?

Three findings stand out. First, the exact sample-oracle interface makes unrestricted MVSK operational in dimensions that are out of reach for explicit-tensor implementations. Second, the computational study reveals a clear implementation split: the direct configuration remains effective on the common small benchmark, whereas the PCG-based large-scale configuration with stall recovery becomes the preferred implementation by the upper end of the hundreds and remains competitive as the asset universe moves into the thousands. Third, the baseline real-data split shows that the value of higher moments is conditional rather than uniform. Kurtosis-aware MVSK adds the most at moderate return floors on that split, while appendix rolling-window checks show that the separation from exact MV is market-window dependent rather than universal.

The remainder of the paper proceeds as follows. Section~\ref{sec:problem-setting} formulates the unrestricted sample-moment MVSK problem and explains why scale matters for the financial comparison with mean--variance portfolios. Section~\ref{sec:literature-positioning} positions the paper relative to the main strands of the higher-moment literature. Section~\ref{sec:yand-algorithm} presents the YAND-MVSK solver. Section~\ref{sec:theory} develops the theoretical results. Section~\ref{sec:computational-evidence} reports the computational study, and Section~\ref{sec:when-higher-moments-matter} turns to the large-scale economic comparison with exact mean--variance. Section~\ref{sec:conclusion} concludes.

\section{Problem Setting and the Role of Scale}
\label{sec:problem-setting}

Let $x \in \simplex := \{x \in \R^n_+ : \bm{1}^{\top}x = 1\}$ denote the portfolio weights. In the unrestricted sample-moment MVSK model, the objective is
\begin{equation}
\label{eq:mvsk}
\min_{x \in \simplex} f(x), \qquad
f(x)= -c_1 m_1(x) + c_2 m_2(x) - c_3 m_3(x) + c_4 m_4(x),
\end{equation}
We refer to \eqref{eq:mvsk} as the unrestricted sample-moment MVSK problem, where $c_1,c_2,c_3,c_4 \ge 0$ encode the investor's trade-off across mean, variance, skewness, and kurtosis.

Let $R \in \R^{T \times n}$ be the sample-return matrix, let $r_t^\top$ denote its $t$th row, and let $\mu:=T^{-1}R^\top \bm{1}$ be the sample mean vector. We define
\[
m_1(x):=\mu^\top x, \qquad
m_p(x):=\frac{1}{T}\sum_{t=1}^T \bigl(r_t^\top x-\mu^\top x\bigr)^p, \quad p=2,3,4.
\]
Writing $A:=R-\bm{1}\mu^\top$ and letting $a_t^\top$ denote the $t$th row of $A$, the centered moments can also be expressed through the explicit comoment tensors
\[
\Sigma:=\frac{1}{T}A^\top A \in \R^{n\times n}, \qquad
\mathcal S:=\frac{1}{T}\sum_{t=1}^T a_t^{\otimes 3}, \qquad
\mathcal K:=\frac{1}{T}\sum_{t=1}^T a_t^{\otimes 4},
\]
so that
\[
m_2(x)=x^\top \Sigma x,\qquad
m_3(x)=\langle \mathcal S, x^{\otimes 3}\rangle,\qquad
m_4(x)=\langle \mathcal K, x^{\otimes 4}\rangle.
\]
In this form, unrestricted sample-moment MVSK is visibly a dense quartic polynomial optimization problem on the simplex \citep{deAthaydeFlores2004,JondeauRockinger2006,NiuWangThiPham2019,ZhouPalomar2021}. The model is appealing because it stays close to the data and does not force a factor restriction or a parametric return family. The same generality is also what makes the problem difficult.

It is also useful to record the objective in lifted form. If
\[
\psi(s):=c_2 s^2-c_3 s^3+c_4 s^4,
\qquad
\Phi(z):=\frac{1}{T}\sum_{t=1}^T \psi(z_t),
\]
then
\[
f(x)=-c_1\mu^\top x+\Phi(Ax).
\]
Thus unrestricted MVSK is a linear embedding $x\mapsto Ax$ followed by a separable quartic nonlinearity. The matrix $A$ drives most of the Euclidean anisotropy and scaling distortion, whereas $\psi$ drives the intrinsic nonconvexity. This is a natural setting for the affine-normal viewpoint: YAND is built to respond to level-set geometry rather than to coordinate distortion, and recent YAND theory shows that its behavior is robust under anisotropic affine reparameterizations \citep{NiuSheshmaniYau2026YAND}.

The computational bottleneck is not only nonconvexity but also the explicit tensor interface. In unrestricted instances these objects are typically dense, so forming them from the return matrix requires $O(Tn^2)$, $O(Tn^3)$, and $O(Tn^4)$ arithmetic for $\Sigma$, $\mathcal S$, and $\mathcal K$, while storing them requires $\Theta(n^2)$, $\Theta(n^3)$, and $\Theta(n^4)$ coefficients. The fourth-order tensor alone already has $n^4$ entries: for $n=200$ this is $1.6\times 10^9$ coefficients, and for $n=500$ it is $6.25\times 10^{10}$. Once these tensors are built, objective, gradient, and Hessian calculations are dominated by dense cubic and quartic contractions. Existing unrestricted-MVSK studies therefore either focus on carefully designed local algorithms for the explicit quartic model or adopt return structures that simplify the moment expressions \citep{BirgeChavezBedoya2016,NiuWangThiPham2019,ZhouPalomar2021,WangEtAl2023}.

That diagnosis changes the paper's substantive target. If unrestricted MVSK can be solved only for very small universes, it is hard to tell whether higher moments remain valuable after diversification or whether an MV approximation is already sufficient. If, by contrast, unrestricted MVSK can be solved for hundreds or thousands of assets, one can study when skewness and kurtosis materially change portfolio choice, when tail risk improves out of sample, and when the added modeling complexity is worth paying. The managerial question is therefore inseparable from the computational one.

For this reason the paper uses a scale-separated empirical design. Moderate dimensions are used for head-to-head comparison with established unrestricted-MVSK baselines, because that is where such comparisons remain fair. Larger dimensions are used both for configuration diagnosis and for direct MV-versus-MVSK evaluation in more realistic universes. The paper is organized around this distinction.

\section{Related Literature}
\label{sec:literature-positioning}

The higher-moment portfolio literature contains several distinct responses to the difficulty of MVSK optimization. A first strand improves the statistical side of the problem through shrinkage, structured comoment estimation, or multifactor moment models \citep{MartelliniZiemann2010,BoudtLuPeeters2015,WangHuangLu2025}. A second strand changes the return model itself through skewed or heavy-tailed parametric families that make the optimization problem more tractable \citep{BirgeChavezBedoya2016,WangEtAl2023}. A related portfolio-construction strand studies tilting, convex scalarizations, or Pareto-front approximations that preserve higher-moment preferences while altering the optimization geometry \citep{BoudtCornillyVanHolleWillems2020,Steenkamp2023,LeCourtoisXu2024}. A third strand attacks the unrestricted quartic program directly, most notably through successive convex approximation and DC-based local methods \citep{PhamNiu2011,NiuWangThiPham2019,ZhouPalomar2021,ZhangNiu2024}. Recent surveys synthesize these routes and emphasize the persistent trade-off between statistical tractability, computational tractability, and fidelity to the unrestricted sample-moment model \citep{MandalThakur2024}. A fourth strand comes from affine geometry and affine-normal descent. \citet{ChengChengYau2005} introduced affine-normal search directions for minimization; \citet{NiuSheshmaniYau2026YAND} developed YAND as an algorithmic framework for smooth convex and nonconvex optimization with descent, convergence, and rate guarantees; and \citet{NiuSheshmaniYau2026LogDet} supplied a scalable matrix-free route for computing affine-normal directions through log-determinant tangent geometry and structured derivative actions.

Our paper sits closest to the third and fourth strands. Q-MVSK and the DC family are the natural unrestricted-MVSK baselines because they target the same investor-facing model. YAND provides a different computational route while preserving that model. Relative to the convex-scalarization and Pareto-front papers, the focus here is not efficient-set construction but direct large-scale solution of the scalarized unrestricted sample-moment quartic benchmark. Relative to the existing literature more broadly, the paper's distinguishing feature is the construction of a dedicated YAND solver for unrestricted sample-moment MVSK together with an MVSK-specific structural rate analysis that makes the reduced convergence constants explicit.

This positioning also clarifies the paper's scope. The paper does not aim to provide a broad benchmark across unrelated local directions or a general comparison of nonlinear-programming heuristics. Its focus is a YAND-based solver for unrestricted MVSK and the financial analysis that becomes possible once that solver is available at larger scales.

\section{A Structure-Exploiting YAND Solver for MVSK}
\label{sec:yand-algorithm}

\subsection{Exact sample-oracle formulation}

The key implementation choice is to keep unrestricted MVSK exact while refusing to build the dense quartic model explicitly. Let
\[
A:=R-\bm{1}\mu^\top \in \R^{T\times n},
\qquad
z(x):=Ax.
\]
Then the centered sample moments are functions of the projected centered return vector $z(x)$ alone. This means that the value, gradient, Hessian--vector products, and the directional third-order actions required by matrix-free YAND can all be computed from repeated multiplications by $A$ and $A^\top$, together with elementwise operations on $z(x)$ and projected directions. Subsection~\ref{subsec:algorithmic-summary} records the exact formulas.

This exact sample-oracle view is what makes YAND-MVSK computationally different from the explicit quartic route. The solver stores $(\mu,A,c)$ rather than covariance, coskewness, and cokurtosis tensors. At a fixed iterate it reuses the cached projection $z=Ax$ across all derivative queries generated inside the YAND direction routine. The dominant stored object is therefore the return matrix itself, not an explicit fourth-order polynomial representation.

\subsection{Affine-normal direction on the simplex}

YAND is naturally expressed on an unconstrained domain, so we solve the simplex-constrained problem through reduced coordinates. Fix an interior reference portfolio $x^{\mathrm{ref}} \in \operatorname{ri}(\simplex)$ and an orthonormal basis $U \in \R^{n\times (n-1)}$ of the simplex tangent space
\[
{\mathcal T}:=\{v \in \R^n : \bm{1}^\top v = 0\}.
\]
Writing
\[
x=x^{\mathrm{ref}}+Uy,
\qquad
\phi(y):=f(x^{\mathrm{ref}}+Uy),
\]
turns the interior simplex problem into an unconstrained reduced problem.

At a nonstationary reduced iterate $y$, the YAND step is assembled from the reduced gradient, a normal-aligned tangent basis, the tangent Hessian block, and an affine-normal correction term computed from the exact oracle actions summarized in Subsection~\ref{subsec:algorithmic-summary}. The method therefore computes the affine-normal step associated with the current MVSK level set and then lifts that step back to the simplex through the fixed basis $U$.
More concretely, if
\[
\bar g(y):=\nabla\phi(y),\qquad
\nu(y):=\frac{\bar g(y)}{\|\bar g(y)\|_2},
\]
and $Q_y\in\R^{(n-1)\times(n-2)}$ is the Householder frame satisfying $Q_y^\top Q_y=I_{n-2}$ and $Q_y^\top \nu(y)=0$, then the exact-trace YAND routine uses
\[
h(y):=Q_y^\top \nabla^2\phi(y)\nu(y),
\qquad
\mathcal H_{T,\lambda}(y)\eta:=Q_y^\top \nabla^2\phi(y)(Q_y\eta)+\lambda\eta,
\]
together with the exact log-determinant correction vector $a(y)$ returned by the reduced third-order oracle, and solves
\[
u(y)=\mathcal H_{T,\lambda}(y)^{-1}
\Bigl(h(y)-\frac{\|\bar g(y)\|_2}{n}a(y)\Bigr),
\qquad
d_y(y):=Q_yu(y)-\nu(y).
\]
The ambient simplex direction is then $d(x):=Ud_y(y)$. Thus the core YAND computation is a reduced linear solve driven by exact Hessian and third-order actions, not a black-box nonlinear-programming subroutine.

This geometric specialization matters because MVSK landscapes are typically anisotropic. The four preference coefficients can induce very different local scalings across mean, variance, skewness, and kurtosis. YAND is attractive precisely because its step is built from level-set affine geometry rather than from a purely Euclidean first-order model.

\subsection{Quartic line search and simplex boundary handling}

Along any fixed feasible tangent direction $d$, the unrestricted sample-moment MVSK objective satisfies
\[
\alpha \mapsto f(x+\alpha d),
\]
which is a univariate quartic polynomial on the feasible interval. This yields an exact step-selection rule: the line-search problem reduces to evaluating a quartic on a bounded interval and checking the real roots of its cubic derivative. Subsection~\ref{subsec:algorithmic-summary} formalizes this exact quartic line search from sample power sums.

Feasibility on the simplex is handled through an interior slice or active face. For $\tau \in [0,1/n)$, define
\[
\simplex(\tau):=\{x \in \simplex : x_i \ge \tau \text{ for all } i\},
\qquad
\alpha_{\max}(x,d;\tau):=
\min_{i:d_i<0}\frac{x_i-\tau}{-d_i}.
\]
The raw YAND step is first restricted to $[0,\alpha_{\max}(x,d;\tau)]$. If a nominal step still leaves the current simplex slice or exposed face, the implementation projects the trial point back to the feasible set, line-searches along the induced feasible segment, and activates a lower-dimensional face solver only when the projected segment still fails to provide a usable descent step. This safeguards the iteration against false termination at feasible but nonstationary boundary points.

\subsection{Algorithmic summary}
\label{subsec:algorithmic-summary}

Algorithm~\ref{alg:yand-mvsk} summarizes the resulting solver. To make clear what is actually computed at each iterate, we record the exact sample-oracle kernels that feed the YAND direction and the line search. At $x^k$, with $z^k:=Ax^k$, the exact first-, second-, and third-order actions are
\begin{align*}
g^k &= -c_1\mu + \frac{2c_2}{T}A^\top z^k - \frac{3c_3}{T}A^\top (z^k)^{\circ 2} + \frac{4c_4}{T}A^\top (z^k)^{\circ 3},\\
H_k v &= \frac{2c_2}{T}A^\top(Av) - \frac{6c_3}{T}A^\top(z^k \circ Av) + \frac{12c_4}{T}A^\top\bigl((z^k)^{\circ 2} \circ Av\bigr),\\
\mathcal T_k(u,v) &= -\frac{6c_3}{T}A^\top\!\bigl((Au)\circ(Av)\bigr) + \frac{24c_4}{T}A^\top\!\bigl(z^k \circ (Au)\circ(Av)\bigr).
\end{align*}
In reduced coordinates, with $y^k:=U^\top(x^k-x^{\mathrm{ref}})$, $\bar g^k:=U^\top g^k$, and $\nu^k:=\bar g^k/\|\bar g^k\|_2$, let $Q_k^\top Q_k=I_{n-2}$ and $Q_k^\top \nu^k=0$. The exact reduced YAND step is obtained from
\[
h^k:=Q_k^\top U^\top H_k(U\nu^k),\qquad
\mathcal H_{T,\lambda}^k\eta:=Q_k^\top U^\top H_k(UQ_k\eta)+\lambda\eta,
\]
\[
u^k=(\mathcal H_{T,\lambda}^k)^{-1}\Bigl(h^k-\frac{\|\bar g^k\|_2}{n}a^k\Bigr),
\qquad
d_y^k:=Q_ku^k-\nu^k,
\qquad
d^k:=Ud_y^k,
\]
where $a^k$ is the exact log-determinant correction vector computed from the reduced third-order oracle. Along $d^k$, with $w^k:=Ad^k$ and
\[
s_{rs}^k:=\frac1T\sum_{t=1}^T (z_t^k)^r (w_t^k)^s,
\qquad r+s\le 4,
\]
the line-restricted objective is
\[
\varphi_k(\alpha)=f(x^k+\alpha d^k)=A_0^k+A_1^k\alpha+A_2^k\alpha^2+A_3^k\alpha^3+A_4^k\alpha^4
\]
with
\[
A_0^k=f(x^k),\qquad
A_1^k=-c_1\mu^\top d^k+2c_2s_{11}^k-3c_3s_{21}^k+4c_4s_{31}^k,
\]
\[
A_2^k=c_2s_{02}^k-3c_3s_{12}^k+6c_4s_{22}^k,\qquad
A_3^k=-c_3s_{03}^k+4c_4s_{13}^k,\qquad
A_4^k=c_4s_{04}^k.
\]
In these formulas, $\nu^k$ is the normalized reduced gradient, $Q_k$ removes the normal direction and keeps only tangent motion along the current level set, $\mathcal H_{T,\lambda}^k$ is the regularized reduced tangent Hessian operator, and $a^k$ is the affine-normal correction. Thus the YAND step is assembled by solving for the tangential component $u^k$, reinserting the normal component through $d_y^k=Q_ku^k-\nu^k$, and then minimizing the resulting quartic on $[0,\alpha_{\max}(x^k,d^k;\tau)]$. The iteration therefore has four blocks: exact sample-oracle evaluation, affine-normal direction construction, exact quartic step selection, and boundary-aware acceptance.

\begin{algorithm}[ht]
\caption{YAND-MVSK algorithm.}
\label{alg:yand-mvsk}
\begin{algorithmic}[1]
\State \textbf{Input:} return matrix $R$, preference vector $c$, interior start $x^0 \in \operatorname{ri}(\simplex)$, interior margin $\tau$, tolerance $\varepsilon$
\State Compute $\mu$, $A=R-\bm{1}\mu^\top$, set $x^{\mathrm{ref}}:=x^0$, and choose an orthonormal tangent basis $U$
\For{$k=0,1,2,\ldots$}
    \Statex \textbf{Oracle block}
    \State Evaluate the exact sample oracle at $x^k$: form $z^k=Ax^k$, $g^k=\nabla f(x^k)$, and $\bar g^k=U^\top g^k$
    \If{$\|\bar g^k\|_2\le \varepsilon$}
        \State \textbf{break}
    \EndIf

    \Statex \textbf{Affine-normal block}
    \State Normalize the reduced gradient, build the Householder frame $Q_k$, assemble the reduced tangent system, and solve for the YAND direction $d^k$

    \Statex \textbf{Quartic step block}
    \State Form the quartic line model $\varphi_k(\alpha)=f(x^k+\alpha d^k)$ on $[0,\alpha_{\max}(x^k,d^k;\tau)]$
    \State Choose $\alpha_k\in\arg\min \varphi_k(\alpha)$ by checking the interval endpoints and the real roots of $\varphi_k'$, and set $\tilde x^{k+1}=x^k+\alpha_k d^k$

    \Statex \textbf{Boundary block}
    \If{$\tilde x^{k+1}$ leaves the current simplex slice or face}
        \State Project the trial back to the feasible set and line-search on the induced feasible segment
        \If{the corrected segment still fails to yield a positive descent step}
            \State Continue the iteration on the newly exposed active face
        \EndIf
    \EndIf
    \State Accept the resulting feasible point as $x^{k+1}$
\EndFor
\State \textbf{Output:} $x^\star:=x^k$
\end{algorithmic}
\end{algorithm}

\section{Structural Analysis and Convergence Guarantees}
\label{sec:theory}

This section formalizes the exact sample-oracle YAND view of unrestricted MVSK. The appendix records the omitted operator proofs, reduced-coordinate derivations, and wrapper-level implementation details.

Let $R \in \R^{T \times n}$ denote the sample-return matrix, let
\[
A:=R-\bm{1}\mu^\top \in \R^{T \times n}
\]
be the centered sample-return matrix, and define
\[
z(x):=Ax \in \R^T.
\]
For a vector $u \in \R^T$, we write $u_t$ for its $t$th component and $\hadpow{u}{p}$ for its componentwise $p$th power. We use $\|\cdot\|_2$ for the Euclidean norm and $\|\cdot\|_{\mathrm{op}}$ for the spectral operator norm.

\subsection{Operator Structure and Curvature Decomposition}

\begin{proposition}[Exact operator representation]
\label{prop:oracle-representation}
For every $x \in \simplex$, the unrestricted sample-moment MVSK objective can be written exactly as
\begin{equation}
\label{eq:operator-f}
f(x)= -c_1 \mu^\top x + \frac{c_2}{T}\|z(x)\|_2^2 - \frac{c_3}{T}\sum_{t=1}^T z_t(x)^3 + \frac{c_4}{T}\sum_{t=1}^T z_t(x)^4.
\end{equation}
Moreover,
\begin{equation}
\label{eq:operator-grad}
\nabla f(x)= -c_1\mu + \frac{2c_2}{T}A^\top z - \frac{3c_3}{T}A^\top \hadpow{z}{2} + \frac{4c_4}{T}A^\top \hadpow{z}{3},
\end{equation}
and for every direction $v \in \R^n$,
\begin{equation}
\label{eq:operator-hvp}
\nabla^2 f(x)v = \frac{2c_2}{T}A^\top(Av) - \frac{6c_3}{T}A^\top(z \circ Av) + \frac{12c_4}{T}A^\top(\hadpow{z}{2} \circ Av),
\end{equation}
where $z=z(x)$. Finally, if for directions $u,v \in \R^n$ we write
\[
\mathcal T_3(x;u,v):=D^3f(x)[u,v,\cdot] \in \R^n,
\]
then
\begin{equation}
\label{eq:operator-third}
\mathcal T_3(x;u,v)= -\frac{6c_3}{T}A^\top\!\bigl((Au)\circ(Av)\bigr) + \frac{24c_4}{T}A^\top\!\bigl(z \circ (Au)\circ(Av)\bigr).
\end{equation}
\end{proposition}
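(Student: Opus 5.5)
The plan is to prove everything through the lifted form $f(x)=-c_1\mu^\top x+\Phi(Ax)$ recorded in Section~\ref{sec:problem-setting}, so that the chain rule does all the work. The first step establishes \eqref{eq:operator-f}. Since $\mu=T^{-1}R^\top\bm{1}$, the $t$th component of $z(x)=Ax=Rx-\bm{1}\mu^\top x$ is $r_t^\top x-\mu^\top x$. Substituting this into the definitions of $m_2,m_3,m_4$ gives $m_p(x)=T^{-1}\sum_t z_t(x)^p$ directly, and $m_2=T^{-1}\|z\|_2^2$ in particular. Collecting the terms with the coefficients $-c_1,c_2,-c_3,c_4$ yields \eqref{eq:operator-f}, that is, $f(x)=-c_1\mu^\top x+\Phi(z(x))$ with $\Phi(z)=T^{-1}\sum_t\psi(z_t)$. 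The identity holds for all $x\in\R^n$, not only on $\simplex$. This matters because the derivatives below are ambient Euclidean derivatives of this polynomial.

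Next, I would compute the derivatives of $\Phi$. Because $\Phi$ is separable, $\nabla\Phi(z)=T^{-1}\psi'(z)$ and $D^2\Phi(z)[w]=T^{-1}\psi''(z)\circ w$, both applied componentwise. Likewise $D^3\Phi(z)[w_1,w_2]=T^{-1}\psi'''(z)\circ w_1\circ w_2$. The scalar derivatives are
\[
\psi'(s)=2c_2s-3c_3s^2+4c_4s^3,\qquad \psi''(s)=2c_2-6c_3s+12c_4s^2,\qquad \psi'''(s)=-6c_3+24c_4s.
\]

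The third step composes with the linear map $x\mapsto Ax$. Linearity gives $D^kf(x)[v_1,\dots,v_k]=D^k\Phi(Ax)[Av_1,\dots,Av_k]$ for $k\ge2$, plus the term $-c_1\mu^\top v$ at $k=1$. Identifying the covector $w\mapsto\langle p,Aw\rangle$ with $A^\top p$ turns this into a gradient or operator statement. At first order this gives $\nabla f=-c_1\mu+A^\top\nabla\Phi(z)$, which is \eqref{eq:operator-grad}. At second order, $\nabla^2f(x)v=A^\top D^2\Phi(z)[Av]$, which expands to \eqref{eq:operator-hvp}. At third order, $\mathcal T_3(x;u,v)=A^\top\bigl(T^{-1}\psi'''(z)\circ(Au)\circ(Av)\bigr)$, which expands to \eqref{eq:operator-third}. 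The constant term $-6c_3$ yields the $z$-free first summand, and $24c_4 z$ yields the second.

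There is no genuine obstacle here. The only care needed is in the bookkeeping: the linear term in $\mu$ must appear only in the gradient, and the third-order object $D^3f(x)[u,v,\cdot]$ must be identified with a vector in $\R^n$ through the Euclidean inner product. Both points follow from the same transpose identification. As a consistency check, I would verify symmetry in $(u,v)$ and confirm agreement with the tensor forms $m_3=\langle\mathcal S,x^{\otimes3}\rangle$ and $m_4=\langle\mathcal K,x^{\otimes4}\rangle$.
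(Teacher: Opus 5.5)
Your proposal is correct and follows essentially the same route as the paper. The paper's proof also identifies the centered moments as $T^{-1}\sum_t z_t(x)^p$, writes $f$ as $-c_1\mu^\top x$ plus $T^{-1}\sum_t$ of the scalar quartic $c_2u^2-c_3u^3+c_4u^4$ evaluated at $z_t(x)$, and obtains the gradient, Hessian--vector, and third-order formulas by applying the chain rule through the linear map $x\mapsto Ax$, using the componentwise first, second, and third derivatives of that quartic.
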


Proposition~\ref{prop:oracle-representation} is the computational core of YAND-MVSK. Once the objective is written through projected centered returns, every oracle quantity required by the matrix-free YAND routine is obtained from matrix--vector operations with $A$ and $A^\top$ together with elementwise vector algebra.

\begin{proposition}[Lifted separability, curvature decomposition, and tangential conditioning]
\label{prop:curvature-decomposition}
Define
\[
\psi(s):=c_2 s^2-c_3 s^3+c_4 s^4,
\qquad
\Phi(z):=\frac{1}{T}\sum_{t=1}^T \psi(z_t).
\]
Then
\[
f(x)=-c_1\mu^\top x+\Phi(Ax).
\]
Let
\[
{\mathcal T}:=\{v\in\R^n:\bm{1}^\top v=0\}
\]
denote the simplex tangent space.
Moreover, for every $x\in\simplex$,
\[
\nabla^2 f(x)=\frac{1}{T}A^\top D(x)A,
\qquad
D(x):=\operatorname{Diag}\!\bigl(\psi''(z_1(x)),\ldots,\psi''(z_T(x))\bigr),
\]
where
\[
\psi''(s)=2c_2-6c_3 s+12c_4 s^2.
\]
If $\mathcal X\subset\simplex$ is such that, for some constant $\widehat\gamma\ge 0$,
\[
\bigl|\psi''\bigl((Ax)_t\bigr)\bigr|\le \widehat\gamma
\qquad
\text{for all }x\in\mathcal X,\; t=1,\ldots,T,
\]
then for every $x\in\mathcal X$ and every tangent direction $d\in{\mathcal T}$,
\[
\bigl|d^\top \nabla^2 f(x)d\bigr|
\le
\frac{\widehat\gamma}{T}\|Ad\|_2^2.
\]
If, in addition, for some constants $0<\underline\gamma\le \overline\gamma$,
\[
\underline\gamma\le \psi''\bigl((Ax)_t\bigr)\le \overline\gamma
\qquad
\text{for all }x\in\mathcal X,\; t=1,\ldots,T,
\]
then for every $x\in\mathcal X$ and every tangent direction $d\in{\mathcal T}$,
\[
\frac{\underline\gamma}{T}\|Ad\|_2^2
\le
d^\top \nabla^2 f(x)d
\le
\frac{\overline\gamma}{T}\|Ad\|_2^2.
\]
Consequently, whenever $A$ is injective on ${\mathcal T}$,
\[
\kappa_{{\mathcal T}}\!\bigl(\nabla^2 f(x)\bigr)
\le
\frac{\overline\gamma}{\underline\gamma}\,\kappa_{{\mathcal T}}(A)^2,
\]
where
\[
\kappa_{{\mathcal T}}(A):=
\frac{\max\{\|Ad\|_2:\ d\in{\mathcal T},\ \|d\|_2=1\}}
{\min\{\|Ad\|_2:\ d\in{\mathcal T},\ \|d\|_2=1\}}.
\]
\end{proposition}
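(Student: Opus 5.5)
The plan is to derive everything from the lifted form and then reduce each curvature bound to a one-line quadratic-form estimate.

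\textbf{Lifted identity and Hessian.} By Proposition~\ref{prop:oracle-representation}, equation \eqref{eq:operator-f} expresses $f$ through $z=Ax$. Grouping the last three terms termwise gives $\frac1T\sum_t\psi(z_t)=\Phi(Ax)$, which is the identity $f(x)=-c_1\mu^\top x+\Phi(Ax)$. The linear term has zero Hessian. For the Hessian of the composition I will use the chain rule for a linear map, $\nabla^2(\Phi\circ A)(x)=A^\top\nabla^2\Phi(Ax)A$. Since $\Phi$ is separable, $\nabla^2\Phi(z)=\frac1T\operatorname{Diag}(\psi''(z_t))$, and differentiating $\psi$ twice gives $\psi''(s)=2c_2-6c_3s+12c_4s^2$. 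As a cross-check, this is consistent with \eqref{eq:operator-hvp}, whose three terms are exactly $\frac1T A^\top(D(x)Av)$.

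\textbf{Quadratic-form bounds.} For $d\in\mathcal T$, set $w=Ad$. Then
\[
d^\top\nabla^2 f(x)d=\frac1T\sum_t\psi''(z_t(x))\,w_t^2 .
\]
\begin{itemize}
\item If $|\psi''|\le\widehat\gamma$ on $\mathcal X$, the triangle inequality gives $|d^\top\nabla^2 f(x)d|\le\frac{\widehat\gamma}{T}\|w\|_2^2$.
\item If $\underline\gamma\le\psi''\le\overline\gamma$, then bounding each weight termwise, using $w_t^2\ge0$, yields the two-sided bound.
\end{itemize}
Tangency of $d$ is not actually used here; it only matters for the condition-number statement.

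\textbf{Condition number.} This is the only step needing care. I take $\kappa_{\mathcal T}(\nabla^2 f(x))$ to mean the ratio of the maximum to the minimum of the Rayleigh quotient $d^\top\nabla^2 f(x)d$ over unit $d\in\mathcal T$, i.e.\ the condition number of the compressed operator $P_{\mathcal T}\nabla^2 f P_{\mathcal T}$ restricted to $\mathcal T$. Write $\sigma_{\max}$ and $\sigma_{\min}$ for the max and min of $\|Ad\|_2$ over unit tangent $d$; injectivity of $A$ on $\mathcal T$ gives $\sigma_{\min}>0$.
\begin{itemize}
\item The maximum of the Rayleigh quotient is at most $\frac{\overline\gamma}{T}\sigma_{\max}^2$.
\item The minimum is at least $\frac{\underline\gamma}{T}\sigma_{\min}^2>0$. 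This also shows the compressed Hessian is positive definite, so the ratio is well defined.
\end{itemize}
Dividing, the factors $1/T$ cancel and
\[
\kappa_{\mathcal T}\bigl(\nabla^2 f(x)\bigr)\le\frac{\overline\gamma}{\underline\gamma}\,\frac{\sigma_{\max}^2}{\sigma_{\min}^2}=\frac{\overline\gamma}{\underline\gamma}\,\kappa_{\mathcal T}(A)^2 .
\]
The main point to state explicitly is this definition of $\kappa_{\mathcal T}$ for the Hessian. Equivalently, in the basis $U$ it is the condition number of $U^\top\nabla^2 f(x)U$, which matches the reduced Hessian of $\phi$.
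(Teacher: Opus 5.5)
Your proposal is correct and follows essentially the same route as the paper's proof. Both write $f$ in lifted form, apply the linear-map chain rule $\nabla^2 f(x)=A^\top\nabla^2\Phi(Ax)A$ with the diagonal $\nabla^2\Phi$, and bound the quadratic form $\frac1T\sum_t\psi''(z_t)(a_t^\top d)^2$ term by term; they then divide the maximal and minimal Rayleigh quotients over unit $d\in\mathcal T$. Your interpretation of $\kappa_{\mathcal T}(\nabla^2 f(x))$ is exactly the Rayleigh-quotient definition that the appendix version of the proposition states explicitly.
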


\begin{remark}[What this decomposition explains]
\label{rem:curvature-decomposition}
Proposition~\ref{prop:curvature-decomposition} has two layers. First, the exact factorization $\nabla^2 f(x)=T^{-1}A^\top D(x)A$ and the absolute curvature bound show that tangential curvature magnitudes are filtered through the data map $A$ and weighted by the scalar responses $\psi''((Ax)_t)$. This decomposition remains valid on nonconvex regions: the matrix $A$ controls the Euclidean anisotropy induced by the data, while $\psi$ controls the intrinsic nonconvexity, sign changes, and local curvature variation created by skewness and kurtosis.

Second, on regions where $\psi''$ is uniformly positive, the same decomposition sharpens to a genuine tangential condition-number bound, separating the curvature spread $\overline\gamma/\underline\gamma$ from the geometric distortion $\kappa_{\mathcal T}(A)^2$. Outside such sign-definite regions, even if $A$ is injective on ${\mathcal T}$, mixed signs in $D(x)$ can make the tangential Hessian indefinite or singular, so no finite positive-definite condition number need exist.

This does not mean that YAND removes the nonconvexity of MVSK. It does explain why affine-normal descent is a natural computational match: the method is designed to be insensitive to affine distortion of level sets, so the $A$-induced part of the ill-conditioning is precisely the part that should matter less under the affine-normal geometry than under a purely Euclidean method. This interpretation is consistent with the affine-scaling theory in \citet[Section~9]{NiuSheshmaniYau2026YAND}: when ill-conditioning is created purely by anisotropic affine scaling, the mapped YAND dynamics are invariant and the relevant convergence constants are inherited from the unscaled base objective. Proposition~\ref{prop:curvature-decomposition} does not claim that unrestricted MVSK reduces completely to that setting, but it does identify the same structural mechanism on which those YAND robustness results act.

Because $\bm{1}_T^\top A=0$, one has $\operatorname{rank}(A)\le T-1$, so injectivity on ${\mathcal T}$ can occur on the full simplex only if $n\le T$. If $A$ is not injective on ${\mathcal T}$, then there exists a nonzero tangent direction $d\in{\mathcal T}$ with $Ad=0$, and therefore $\nabla^2 f(x)d=0$ for every $x$. Thus the tangential Hessian is singular on the full simplex, and the condition-number bound above should be read as meaningful only on tangent subspaces, or reduced faces, where the relevant data map is injective.
\end{remark}

\begin{corollary}[Matrix-free complexity]
\label{cor:oracle-complexity}
Suppose the centered sample matrix $A$ is stored explicitly. Then evaluating $f(x)$, $\nabla f(x)$, one Hessian--vector product $\nabla^2 f(x)v$, and one directional third-order action $\mathcal T_3(x;u,v)$ from \eqref{eq:operator-f}--\eqref{eq:operator-third} each requires $O(Tn)$ arithmetic operations and $O(T+n)$ working memory. By contrast, storing the full coskewness and cokurtosis tensors requires $\Theta(n^3+n^4)$ coefficients before optimization begins.
\end{corollary}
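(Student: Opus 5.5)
The argument is a direct operation count on the formulas of Proposition~\ref{prop:oracle-representation}, with $A\in\R^{T\times n}$ and $\mu\in\R^n$ taken as given (stored) data. The plan is to split every oracle into three kinds of primitive operations and cost each one.

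\emph{Forward products.} A product $Ax$ or $Av$ with $x,v\in\R^n$ costs $O(Tn)$ multiply--adds and returns a vector in $\R^T$.

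\emph{Elementwise work.} Hadamard powers and products such as $\hadpow{z}{2}$, $\hadpow{z}{3}$, $z\circ Av$, and $z\circ(Au)\circ(Av)$ each cost $O(T)$.

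\emph{Back-projections.} A product $A^\top w$ with $w\in\R^T$ costs $O(Tn)$ and returns a vector in $\R^n$.

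\emph{Inner product.} The term $\mu^\top x$ costs $O(n)$.

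Applying this to each oracle in turn:

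\begin{itemize}
\item For $f(x)$, I form $z=Ax$ once and then evaluate the three power sums in \eqref{eq:operator-f} in $O(T)$. The total is $O(Tn)$.
\item For $\nabla f(x)$, I first combine the three $T$-vectors into the single vector $w=\tfrac{2c_2}{T}z-\tfrac{3c_3}{T}\hadpow{z}{2}+\tfrac{4c_4}{T}\hadpow{z}{3}$, using linearity of $A^\top$. This means \eqref{eq:operator-grad} needs only one back-projection, $A^\top w$, followed by subtraction of $c_1\mu$.
\item For the Hessian--vector product \eqref{eq:operator-hvp}, I compute $Av$, assemble one $T$-vector, and apply $A^\top$ once.
\item For the third-order action \eqref{eq:operator-third}, I compute $Au$ and $Av$ (two forward products), assemble one $T$-vector, and apply $A^\top$ once.
\end{itemize}

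Each oracle therefore uses a constant number (at most three) of $O(Tn)$ products plus $O(T+n)$ vector work, giving $O(Tn)$ arithmetic.

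For working memory, I would list the auxiliary objects held beyond the stored matrix $A$. These are a constant number of $T$-vectors ($z$, $Au$, $Av$, the assembled weight vector) and a constant number of $n$-vectors (the output and inputs). That gives $O(T+n)$.

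The main point to be careful about is to state explicitly that the $Tn$ entries of $A$ count as input data rather than working memory. The claim is meaningful only under that convention, which is the "stored explicitly" hypothesis of the corollary.

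For the contrast with the tensor interface, the coskewness tensor $\mathcal S$ and cokurtosis tensor $\mathcal K$ are generically dense arrays with $n^3$ and $n^4$ entries, so storing them takes $\Theta(n^3+n^4)$ coefficients. One could refine this using symmetry, which leaves $\binom{n+2}{3}$ and $\binom{n+3}{4}$ distinct entries, but these are still $\Theta(n^3)$ and $\Theta(n^4)$. The lower bound holds because for generic $A$ these distinct entries are nonzero and independent.

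I expect no real obstacle. The only subtlety is the linearity trick of collapsing several back-projections into one, and even without it the count stays $O(Tn)$.
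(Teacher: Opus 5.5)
Your operation count is correct and matches the accounting the paper records in the oracle-kernel discussion of the implementation appendix and in the complexity parts of the proofs of Propositions~\ref{prop:ec-quartic-exact-linesearch} and~\ref{prop:ec-reduced-oracle}: one forward projection by $A$ per fresh direction, one weighted back-projection by $A^\top$, and $O(T)$ elementwise work, with $z=Ax$ cached at the iterate. Two minor fixes: the third-order kernel also needs $z$, so without the cache it uses four rather than three products (still $O(Tn)$); and the $\Theta(n^3+n^4)$ bound is just the entry count of a dense or symmetric array with generically nonzero entries, so drop the word ``independent,'' which is false whenever $Tn$ is below the number of distinct entries, because $\mathcal S$ and $\mathcal K$ are determined by $A$.
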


\subsection{Feasible Quartic Line Search and Simplex Reduction}

\begin{proposition}[Quartic exact line search from sample power sums]
\label{prop:quartic-exact-linesearch}
Fix $x\in \simplex(\tau)$ and a nonzero tangent direction $d\in {\mathcal T}$. Let
\[
z:=Ax,
\qquad
w:=Ad,
\]
and define the mixed sample power sums
\[
s_{rs}(x,d):=\frac1T\sum_{t=1}^T z_t^r w_t^s,
\qquad
r,s\in \mathbb{N}\cup\{0\},
\quad
r+s\le 4.
\]
Then the line-restricted objective on the feasible interval $[0,\alpha_{\max}(x,d;\tau)]$ can be written as
\[
\varphi_{x,d}(\alpha):=f(x+\alpha d)=A_0+A_1\alpha+A_2\alpha^2+A_3\alpha^3+A_4\alpha^4,
\]
where
\[
A_0=f(x),
\qquad
A_1=-c_1\mu^\top d+2c_2 s_{11}(x,d)-3c_3 s_{21}(x,d)+4c_4 s_{31}(x,d),
\]
\[
A_2=c_2 s_{02}(x,d)-3c_3 s_{12}(x,d)+6c_4 s_{22}(x,d),
\]
\[
A_3=-c_3 s_{03}(x,d)+4c_4 s_{13}(x,d),
\qquad
A_4=c_4 s_{04}(x,d).
\]
Consequently, $\varphi_{x,d}'(\alpha)$ is a cubic polynomial, and an exact feasible line-search step is obtained by checking the interval endpoints together with the real roots of $\varphi_{x,d}'$ that lie in $(0,\alpha_{\max}(x,d;\tau))$.
\end{proposition}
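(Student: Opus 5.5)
The plan is to start from the exact operator representation \eqref{eq:operator-f} of Proposition~\ref{prop:oracle-representation} and use linearity of the data map. For any $\alpha$,
\[
z(x+\alpha d)=A(x+\alpha d)=z+\alpha w ,
\]
so $f(x+\alpha d)=-c_1\mu^\top x-\alpha c_1\mu^\top d+\frac1T\sum_{t}\psi(z_t+\alpha w_t)$, where $\psi(s)=c_2s^2-c_3s^3+c_4s^4$ as in Proposition~\ref{prop:curvature-decomposition}. Since $\psi$ is a polynomial of degree at most four, $\alpha\mapsto f(x+\alpha d)$ is a quartic in $\alpha$. Its coefficients follow from the binomial theorem.

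The coefficient extraction goes as follows. Expand
\[
(z_t+\alpha w_t)^p=\sum_{j=0}^p\binom{p}{j}z_t^{p-j}w_t^j\alpha^j
\]
for $p=2,3,4$, then average over $t$, so that each average $\frac1T\sum_t z_t^{p-j}w_t^j$ becomes $s_{p-j,j}(x,d)$. Collecting powers of $\alpha$ gives:
\begin{itemize}
\item the $\alpha^0$ terms reproduce $f(x)$ by \eqref{eq:operator-f};
\item the $\alpha^1$ coefficient is $-c_1\mu^\top d+2c_2s_{11}-3c_3s_{21}+4c_4s_{31}$;
\item the $\alpha^2$ coefficient is $c_2\binom22 s_{02}-c_3\binom32 s_{12}+c_4\binom42 s_{22}$, which is exactly $A_2$;
\item the $\alpha^3$ coefficient is $-c_3s_{03}+4c_4s_{13}$;
\item the $\alpha^4$ coefficient is $c_4s_{04}$.
\end{itemize}
As a consistency check, $A_1=\nabla f(x)^\top d$ by \eqref{eq:operator-grad}, since $\mu^\top d$ enters linearly and $(A^\top z^{\circ p})^\top d=\sum_t z_t^p w_t$. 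Similarly, $2A_2=d^\top\nabla^2 f(x)d$ by \eqref{eq:operator-hvp}.

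For the feasibility part, I will verify that $x+\alpha d\in\simplex(\tau)$ for every $\alpha\in[0,\alpha_{\max}(x,d;\tau)]$:
\begin{itemize}
\item $\bm 1^\top(x+\alpha d)=1$ because $d\in\mathcal T$;
\item for indices with $d_i\ge0$, the coordinate $x_i+\alpha d_i\ge x_i\ge\tau$;
\item for indices with $d_i<0$, the definition of $\alpha_{\max}$ gives $x_i+\alpha d_i\ge\tau$.
\end{itemize}
Since $d\ne0$ and $\bm 1^\top d=0$, some $d_i<0$, so $\alpha_{\max}$ is finite. It is nonnegative because $x\in\simplex(\tau)$. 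The polynomial identity above holds for all real $\alpha$, so it holds in particular on this interval.

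The exact line-search claim then follows by elementary calculus. $\varphi_{x,d}$ is a continuous polynomial on the compact interval $[0,\alpha_{\max}]$, so it attains its minimum there. An interior minimizer must satisfy $\varphi'_{x,d}=0$, and $\varphi'_{x,d}(\alpha)=A_1+2A_2\alpha+3A_3\alpha^2+4A_4\alpha^3$ is a cubic, or of lower degree if $c_4s_{04}=0$. One degenerate case needs a remark: if $\varphi'$ vanishes identically, then $\varphi$ is constant and the endpoints already suffice. In every other case $\varphi'$ has at most three roots, so the candidate set is finite, and comparing $\varphi$ at the endpoints and at the real roots in $(0,\alpha_{\max})$ yields a global minimizer on the interval.

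I expect no real obstacle here. The only care needed is the bookkeeping of the binomial coefficients, for example that the $3c_3s_{12}$ term comes from $\binom32$, and the edge cases $\alpha_{\max}=0$ and identically vanishing $\varphi'$.
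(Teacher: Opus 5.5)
Your proposal is correct and follows the paper's proof essentially step for step: substitute $A(x+\alpha d)=z+\alpha w$ into the operator form of $f$, expand the powers binomially, collect coefficients into the sums $s_{rs}(x,d)$, use the feasibility cap (which you re-derive inline where the paper cites Lemma~\ref{lem:ec-feasibility-cap}), and reduce the interval minimization to the endpoints plus the interior critical points. Your extra remark on the case $\varphi_{x,d}'\equiv 0$ (which can occur, e.g., when $Ad=0$ and $\mu^\top d=0$) is a small but genuine tightening of the paper's wording ``finite candidate set''.
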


\begin{remark}[Practical interpretation]
\label{rem:quartic-exact-linesearch}
Proposition~\ref{prop:quartic-exact-linesearch} is a structural statement. In implementation one does not need a symbolic quartic package; one needs only the cached projection $z=Ax$, one additional projected direction $w=Ad$, the scalar sums $s_{rs}(x,d)$, and a stable cubic-root routine. Accordingly, exact line search is a natural default in YAND-MVSK rather than an ancillary refinement.
\end{remark}

\begin{definition}[Tangent-gradient residual on the simplex interior]
\label{def:stationarity}
Let
\[
{\mathcal T}:=\{v \in \R^n : \bm{1}^{\top}v=0\},
\qquad
P_{\mathcal T}:=I_n-\frac{1}{n}\bm{1}\bm{1}^{\top}.
\]
For $x \in \operatorname{ri}(\simplex)$, define the tangent-gradient residual
\[
G_{\mathcal T}(x):=P_{\mathcal T}\nabla f(x).
\]
\end{definition}

\begin{proposition}[Interior stationarity criterion]
\label{prop:interior-stationarity}
For $x \in \operatorname{ri}(\simplex)$,
\[
G_{\mathcal T}(x)=0
\qquad\Longleftrightarrow\qquad
x \text{ is a first-order stationary point of \eqref{eq:mvsk}.}
\]
\end{proposition}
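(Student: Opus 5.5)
The plan is to identify first-order stationarity for \eqref{eq:mvsk} with the KKT conditions of the simplex-constrained problem, and then show that at a relative-interior point the KKT conditions collapse to vanishing of the tangential gradient. Concretely, I take ``first-order stationary'' to mean the variational inequality $\nabla f(x)^\top(x'-x)\ge 0$ for all $x'\in\simplex$. Equivalently, it means the KKT system
\[
\nabla f(x)=\lambda\bm{1}+\eta,\qquad \eta\ge 0,\qquad \eta_i x_i=0\ \ (i=1,\ldots,n),
\]
holds for some $\lambda\in\R$. The constraints are affine, so no constraint qualification issue arises and the two forms are interchangeable. Since $x\in\operatorname{ri}(\simplex)$ means $x_i>0$ for every $i$, complementary slackness forces $\eta=0$. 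The KKT system therefore reduces to $\nabla f(x)\in\operatorname{span}\{\bm{1}\}$.

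Next I use that $P_{\mathcal T}=I_n-\frac1n\bm{1}\bm{1}^\top$ is the orthogonal projector onto ${\mathcal T}=\bm{1}^\perp$, so its kernel is exactly $\operatorname{span}\{\bm{1}\}$. Then $G_{\mathcal T}(x)=0$ if and only if $\nabla f(x)=\frac1n(\bm{1}^\top\nabla f(x))\bm{1}$. This holds if and only if the KKT system is satisfied with $\lambda=\frac1n\bm{1}^\top\nabla f(x)$ and $\eta=0$. This gives both directions at once.

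As a cross-check, and to cover the variational-inequality definition directly, I would also argue the directions by hand. For ($\Leftarrow$): interiority gives some $\epsilon>0$ with $x\pm\epsilon d\in\simplex$ for every unit $d\in{\mathcal T}$. The variational inequality then yields $\nabla f(x)^\top d\ge 0$ and $-\nabla f(x)^\top d\ge 0$, so $\nabla f(x)\perp{\mathcal T}$, i.e.\ $G_{\mathcal T}(x)=0$. For ($\Rightarrow$): every $x'-x$ with $x'\in\simplex$ lies in ${\mathcal T}$, so
\[
\nabla f(x)^\top(x'-x)=G_{\mathcal T}(x)^\top(x'-x)=0\ge 0.
\]

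There is no serious obstacle. The only point needing care is fixing which notion of first-order stationarity is meant and noting that the variational-inequality and KKT forms agree because the constraints are linear. The one substantive ingredient is the interior perturbation argument in the ($\Leftarrow$) direction, which fails on the boundary. That is why the statement is restricted to $\operatorname{ri}(\simplex)$.
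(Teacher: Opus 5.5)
Your proof is correct and follows the same route as the paper. At a relative-interior point only the equality constraint $\bm{1}^\top x=1$ is active, so the KKT condition reduces to $\nabla f(x)\in\operatorname{span}\{\bm{1}\}$, which is exactly $P_{\mathcal T}\nabla f(x)=0$; your complementary-slackness step and variational-inequality cross-check just spell out what the paper's one-line argument leaves implicit.
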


\begin{proposition}[Affine-subspace reduction]
\label{prop:simplex-reduction}
Fix $x^{\mathrm{ref}} \in \operatorname{ri}(\simplex)$ and let $U \in \R^{n \times (n-1)}$ have orthonormal columns spanning ${\mathcal T}$. Define
\[
\mathcal Y:=\{y \in \R^{n-1} : x^{\mathrm{ref}}+Uy \in \operatorname{ri}(\simplex)\},
\qquad
\phi(y):=f(x^{\mathrm{ref}}+Uy).
\]
Then, for every $x=x^{\mathrm{ref}}+Uy \in \operatorname{ri}(\simplex)$,
\[
\nabla \phi(y)=U^\top \nabla f(x),
\qquad
\nabla^2 \phi(y)=U^\top \nabla^2 f(x)U.
\]
Moreover,
\[
G_{\mathcal T}(x)=0
\qquad\Longleftrightarrow\qquad
\nabla \phi(y)=0.
\]
\end{proposition}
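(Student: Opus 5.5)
The plan is to prove the two derivative identities by the chain rule for the affine map $y\mapsto x^{\mathrm{ref}}+Uy$, and then to obtain the stationarity equivalence from the fact that $UU^\top$ is the orthogonal projector onto ${\mathcal T}$.

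\textbf{Derivatives.} Let $\ell(y):=x^{\mathrm{ref}}+Uy$. This map is affine with constant Jacobian $U$, and $f$ is a polynomial, hence $C^\infty$ on all of $\R^n$. The chain rule gives $D\phi(y)[v]=Df(\ell(y))[Uv]=\nabla f(x)^\top Uv$ for every $v\in\R^{n-1}$, so $\nabla\phi(y)=U^\top\nabla f(x)$. Differentiating once more, and using that $U$ does not depend on $y$, we get $D^2\phi(y)[v,w]=D^2f(x)[Uv,Uw]$, which gives $\nabla^2\phi(y)=U^\top\nabla^2 f(x)U$. Since $f$ is defined on all of $\R^n$, no boundary issue arises. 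The set $\mathcal Y$ only records which reduced points correspond to interior portfolios; it is open, because $\operatorname{ri}(\simplex)$ is relatively open in the affine hull $x^{\mathrm{ref}}+{\mathcal T}$ and $\ell$ is an affine bijection onto that hull.

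\textbf{Projector identity.} The columns of $U$ are orthonormal, so $U^\top U=I_{n-1}$. They span ${\mathcal T}=\bm{1}^\perp$, so $UU^\top$ is the orthogonal projector onto ${\mathcal T}$. That projector is also $P_{\mathcal T}=I_n-\tfrac1n\bm{1}\bm{1}^\top$, because both maps fix ${\mathcal T}$ and annihilate $\operatorname{span}(\bm{1})$, and a linear map is determined by its action on these two complementary subspaces. Hence $P_{\mathcal T}=UU^\top$.

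\textbf{Equivalence.} From the projector identity, $G_{\mathcal T}(x)=UU^\top\nabla f(x)=U\nabla\phi(y)$. If $\nabla\phi(y)=0$, then $G_{\mathcal T}(x)=0$ immediately. Conversely, if $G_{\mathcal T}(x)=0$, multiply on the left by $U^\top$: since $U^\top U=I_{n-1}$, this gives $\nabla\phi(y)=U^\top G_{\mathcal T}(x)=0$. If desired, Proposition~\ref{prop:interior-stationarity} then links this to first-order stationarity of \eqref{eq:mvsk}.

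There is no real obstacle; every step is routine linear algebra and calculus. The only point that needs care is the identification $P_{\mathcal T}=UU^\top$. It uses both that the columns of $U$ are orthonormal and that they span exactly ${\mathcal T}$, not merely that they lie in it.
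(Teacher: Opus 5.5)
Your proposal is correct and follows the paper's proof essentially step for step. Both use the chain rule through the affine map with Jacobian $U$, then the identity $UU^\top=P_{\mathcal T}$ together with $U^\top U=I_{n-1}$ to get $G_{\mathcal T}(x)=0\Leftrightarrow\nabla\phi(y)=0$; you additionally justify $P_{\mathcal T}=UU^\top$ and the openness of $\mathcal Y$, which the paper takes for granted, but the route is the same.
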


\begin{proposition}[Reduced-coordinate exact oracle]
\label{prop:reduced-oracle}
Under Proposition~\ref{prop:simplex-reduction}, define for $y \in \mathcal Y$ and directions $u,v \in \R^{n-1}$
\[
\mathcal T_{3,\phi}(y;u,v):=U^\top \mathcal T_3(x;Uu,Uv),
\qquad
x=x^{\mathrm{ref}}+Uy.
\]
Then the reduced objective $\phi$ admits exact oracle kernels
\[
\nabla \phi(y)=U^\top \nabla f(x),
\qquad
\nabla^2 \phi(y)u=U^\top \nabla^2 f(x)(Uu),
\]
\[
\mathcal T_{3,\phi}(y;u,v)=U^\top \mathcal T_3(x;Uu,Uv).
\]
If $A$ is stored explicitly and the tangent basis $U$ is represented densely, one evaluation of $\phi(y)$, $\nabla \phi(y)$, one reduced Hessian--vector action, and one reduced directional third-order action each requires $O(Tn+n^2)$ arithmetic operations.
\end{proposition}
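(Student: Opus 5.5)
The plan is to apply the chain rule to the affine map $y\mapsto x^{\mathrm{ref}}+Uy$. The identities for $\nabla\phi$ and $\nabla^2\phi$ then come directly from Proposition~\ref{prop:simplex-reduction}. The third-order kernel comes from the same multilinear chain rule, and the cost bound is read off Corollary~\ref{cor:oracle-complexity} plus the price of multiplying by $U$ and $U^\top$.

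\textbf{Step 1 (derivative identities).} Write $x(y)=x^{\mathrm{ref}}+Uy$. Its Jacobian is the constant matrix $U$, and all its higher derivatives vanish. For any $k\le 3$ and reduced directions $u_1,\dots,u_k$, the chain rule therefore gives
\[
D^k\phi(y)[u_1,\dots,u_k]=D^kf(x)[Uu_1,\dots,Uu_k].
\]
With $k=1$ this is $\nabla\phi(y)=U^\top\nabla f(x)$, and with $k=2$ it is $\nabla^2\phi(y)u=U^\top\nabla^2 f(x)(Uu)$; both agree with Proposition~\ref{prop:simplex-reduction}. With $k=3$, fix $u,v$ and regard $D^3\phi(y)[u,v,\cdot]$ as a linear functional on $\R^{n-1}$. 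Evaluated at $e$, it equals $D^3f(x)[Uu,Uv,Ue]=\mathcal T_3(x;Uu,Uv)^\top Ue$. Its Riesz representer is therefore $U^\top\mathcal T_3(x;Uu,Uv)$, which is the stated formula. Since $f$ is a polynomial, it is $C^\infty$ and the chain rule applies with no regularity issues.

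\textbf{Step 2 (exactness).} The closed forms \eqref{eq:operator-grad}, \eqref{eq:operator-hvp} and \eqref{eq:operator-third} from Proposition~\ref{prop:oracle-representation} hold for every $x$, not only on the simplex, because they are identities between polynomials. Substituting them into Step~1 gives exact reduced kernels that use only $A$, $A^\top$, $U$ and $U^\top$.

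\textbf{Step 3 (complexity).} For the value $\phi(y)$, form $x=x^{\mathrm{ref}}+Uy$ at $O(n^2)$ cost and then evaluate \eqref{eq:operator-f} at $O(Tn)$ cost. For the gradient, compute $\nabla f(x)$ in $O(Tn)$ and apply $U^\top$ in $O(n^2)$. For the Hessian action, form $Uu$ in $O(n^2)$, apply \eqref{eq:operator-hvp} in $O(Tn)$, and apply $U^\top$ in $O(n^2)$. The third-order action is handled the same way, with two lifts $Uu$ and $Uv$ in $O(n^2)$, one evaluation of \eqref{eq:operator-third} in $O(Tn)$, and one application of $U^\top$ in $O(n^2)$. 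Because $U$ is dense with $n(n-1)$ entries, each product with $U$ or $U^\top$ costs $\Theta(n^2)$, and the total is $O(Tn+n^2)$.

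The only step that needs care is the third-order identification in Step~1. It requires fixing the convention that $\mathcal T_3(x;u,v)$ is the gradient-type vector of the trilinear form $D^3f(x)[u,v,\cdot]$. Everything else is a routine chain-rule computation and operation count.
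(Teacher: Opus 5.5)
Your proposal is correct and follows essentially the same route as the paper. In both, the chain rule through the affine map $y\mapsto x^{\mathrm{ref}}+Uy$ with constant Jacobian $U$ gives all three kernels; your Riesz-representer identification of $D^3\phi(y)[u,v,\cdot]$ simply spells out the paper's one-line step. The cost is likewise split into $O(Tn)$ sample-oracle work plus $O(n^2)$ for the dense basis changes with $U$ and $U^\top$.
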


\subsection{Regularity, Convexity Certificates, YAND Transfer, and Structural PL}

\begin{proposition}[Explicit MVSK regularity constants on an interior simplex slice]
\label{prop:mvsk-explicit-constants}
Fix $\tau \in (0,1/n)$ and define
\[
B_\tau:=\sup_{x\in\simplex(\tau)}\|Ax\|_\infty.
\]
Then for every $x,y\in \simplex(\tau)$,
\[
\|\nabla^2 f(x)\|_{\mathrm{op}}
\le
L_\tau:=
\frac{\|A\|_{\mathrm{op}}^2}{T}
\bigl(2c_2+6c_3B_\tau+12c_4B_\tau^2\bigr),
\]
and
\[
\|\nabla^2 f(x)-\nabla^2 f(y)\|_{\mathrm{op}}
\le
M_\tau\|x-y\|_2,
\qquad
M_\tau:=
\frac{\|A\|_{\mathrm{op}}^3}{T}
\bigl(6c_3+24c_4B_\tau\bigr).
\]
Thus $f$ is $L_\tau$-smooth and has $M_\tau$-Lipschitz Hessian on $\simplex(\tau)$. Moreover, because $\|x\|_2\le 1$ on $\simplex$, one always has
\[
B_\tau\le \|A\|_{\mathrm{op}},
\]
so these constants are controlled explicitly by $\|A\|_{\mathrm{op}}$ and $c$ alone.
\end{proposition}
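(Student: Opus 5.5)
The plan is to work entirely from the factorization in Proposition~\ref{prop:curvature-decomposition}, $\nabla^2 f(x)=T^{-1}A^\top D(x)A$ with $D(x)=\operatorname{Diag}(\psi''(z_t(x)))$. Both constants then come from bounding a diagonal matrix in operator norm and sandwiching it between $A^\top$ and $A$. The operator norm of a diagonal matrix is its largest absolute entry.

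\textbf{Smoothness bound.} For $x\in\simplex(\tau)$, every component satisfies $|z_t(x)|\le\|Ax\|_\infty\le B_\tau$. The triangle inequality applied to $\psi''(s)=2c_2-6c_3s+12c_4s^2$, using $c_i\ge0$, gives
\[
|\psi''(z_t)|\le 2c_2+6c_3B_\tau+12c_4B_\tau^2 .
\]
Then
\[
\|\nabla^2 f(x)\|_{\mathrm{op}}\le T^{-1}\|A^\top\|_{\mathrm{op}}\|D(x)\|_{\mathrm{op}}\|A\|_{\mathrm{op}}\le L_\tau .
\]
Since $\simplex(\tau)$ is convex, integrating the Hessian along segments gives $\|\nabla f(x)-\nabla f(y)\|_2\le L_\tau\|x-y\|_2$, which is the $L_\tau$-smoothness claim.

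\textbf{Hessian Lipschitz bound.} We write
\[
\nabla^2 f(x)-\nabla^2 f(y)=T^{-1}A^\top\bigl(D(x)-D(y)\bigr)A .
\]
The $t$th diagonal entry of $D(x)-D(y)$ is
\[
\psi''(z_t(x))-\psi''(z_t(y))=\bigl(z_t(x)-z_t(y)\bigr)\bigl(-6c_3+12c_4(z_t(x)+z_t(y))\bigr).
\]
Its absolute value is at most $|(A(x-y))_t|\,(6c_3+24c_4B_\tau)$. Next we use $|(A(x-y))_t|\le\|A(x-y)\|_\infty\le\|A(x-y)\|_2\le\|A\|_{\mathrm{op}}\|x-y\|_2$. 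This yields
\[
\|D(x)-D(y)\|_{\mathrm{op}}\le\|A\|_{\mathrm{op}}(6c_3+24c_4B_\tau)\|x-y\|_2 .
\]
Multiplying by $T^{-1}\|A\|_{\mathrm{op}}^2$ gives $M_\tau$. The one place that needs care is the crude passage from the $\ell_\infty$ norm to the $\ell_2$ norm. It is what produces the cubic power of $\|A\|_{\mathrm{op}}$ and makes the constant dimension-free. It is valid but loose, and I would accept the looseness since it matches the stated $M_\tau$.

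\textbf{Bound on $B_\tau$.} For $x\in\simplex$ we have $\|x\|_2^2\le\sum_i x_i=1$, because $0\le x_i\le1$. Hence
\[
\|Ax\|_\infty\le\|Ax\|_2\le\|A\|_{\mathrm{op}}\|x\|_2\le\|A\|_{\mathrm{op}},
\]
and taking the supremum over $\simplex(\tau)\subset\simplex$ gives $B_\tau\le\|A\|_{\mathrm{op}}$.

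No step is a genuine obstacle. The only point to state explicitly is that convexity of $\simplex(\tau)$ is what lets the Hessian bound be converted into Lipschitz continuity of the gradient.
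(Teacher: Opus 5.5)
Your proposal is correct and follows the paper's proof essentially step for step. Both use the factorization $\nabla^2 f(x)=T^{-1}A^\top D(x)A$, the triangle-inequality bound on $\psi''$, the passage $\ell_\infty\le\ell_2$ that produces the cubic power of $\|A\|_{\mathrm{op}}$, and $\|x\|_2\le 1$ on the simplex. The only cosmetic differences are that you factor $\psi''(a)-\psi''(b)=(a-b)\bigl(-6c_3+12c_4(a+b)\bigr)$ exactly where the paper applies the mean-value theorem with $\max_{|s|\le B_\tau}|\psi'''(s)|$ (same constant, since $\psi''$ is quadratic), and that you state explicitly the convexity of $\simplex(\tau)$ which the paper uses tacitly for $L_\tau$-smoothness.
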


Two additional technical closure results are deferred to the appendix. First, once the iterates are safeguarded on an interior slice $\simplex(\tau)$, the smoothness and bounded-below requirements used below become automatic for the polynomial MVSK objective. Second, the appendix records a sufficient condition under which the imported reduced-coordinate bounded-direction hypothesis from \citet{NiuSheshmaniYau2026YAND} is satisfied. We keep these slice-closure and bounded-direction details in the appendix so that the main text can focus on the transfer statement itself and on the explicit MVSK constants that enter it.

\begin{assumption}[Safeguarded interior YAND regularity]
\label{ass:yand-transfer}
Consider an exact-oracle YAND sequence $\{x^k\}\subset \operatorname{ri}(\simplex)$ generated by feasible updates
\[
x^{k+1}=x^k+\alpha_k U d_y^k,
\]
where $U$ is as in Proposition~\ref{prop:simplex-reduction}, $d_y^k$ is the reduced-coordinate YAND direction for $\phi$, and $\alpha_k$ is chosen by Armijo backtracking on $\phi$. Assume that:
\begin{enumerate}
\item there exists a compact set $K\subset \operatorname{ri}(\simplex)$ containing all iterates, and $f$ is $L$-smooth and bounded below on a neighborhood of $K$;
\item the reduced-coordinate YAND directions satisfy the bounded-direction condition of \citet[Assumption~7.1]{NiuSheshmaniYau2026YAND} with some constant $\beta_{\mathrm{dir}}<\infty$; in particular, their angle with $-\nabla \phi(y^k)$ is uniformly bounded away from $\pi/2$;
\item the accepted Armijo steps, computed with parameter $\sigma\in(0,1)$, remain in $K$.
\end{enumerate}
\end{assumption}

\begin{theorem}[YAND transfer on the simplex interior]
\label{thm:yand-stationarity}
Under Assumption~\ref{ass:yand-transfer}, either the exact-oracle YAND method terminates at an interior first-order stationary point of \eqref{eq:mvsk}, or
\[
\lim_{k\to\infty}\|G_{\mathcal T}(x^k)\|_2 = 0.
\]
If $\{x^k\}$ has an accumulation point $\bar x \in K$, then $\bar x$ is a first-order stationary point of \eqref{eq:mvsk}. If, in addition, the reduced objective $\phi$ satisfies the Polyak--Lojasiewicz inequality on
\[
\mathcal Y_K:=\{y : x^{\mathrm{ref}}+Uy \in K\},
\]
namely,
\[
\frac12\|\nabla \phi(y)\|_2^2
\ge
\mu_{\mathrm{PL}}\bigl(\phi(y)-\phi^\star\bigr)
\qquad
\text{for all } y\in\mathcal Y_K,
\]
with constant $\mu_{\mathrm{PL}}>0$ and $\phi^\star:=\inf_{y\in\mathcal Y_K}\phi(y)$, then
\[
f(x^{k+1})-f^\star
\le
\bigl(1-\rho_{\mathrm{PL}}\bigr)\bigl(f(x^k)-f^\star\bigr),
\qquad
\rho_{\mathrm{PL}}=
\frac{2\sigma(1-\sigma)\mu_{\mathrm{PL}}}{L(1+\beta_{\mathrm{dir}}^2)},
\]
where $f^\star:=\inf_{x\in K}f(x)$. Under the local nondegeneracy and step-size assumptions of \citet[Assumption~7.14, Theorem~7.16, and Remark~7.17]{NiuSheshmaniYau2026YAND}, the same reduced-coordinate transfer also yields local quadratic convergence near an interior nondegenerate minimizer.
\end{theorem}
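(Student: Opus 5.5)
The plan is to reduce everything to the unconstrained reduced problem for $\phi(y)=f(x^{\mathrm{ref}}+Uy)$ and then import the corresponding results of \citet{NiuSheshmaniYau2026YAND}. Proposition~\ref{prop:simplex-reduction} identifies $\nabla\phi(y)=U^\top\nabla f(x)$. Because $U$ has orthonormal columns spanning ${\mathcal T}$, we have $UU^\top=P_{\mathcal T}$, and hence
\[
\|G_{\mathcal T}(x)\|_2=\|P_{\mathcal T}\nabla f(x)\|_2=\|UU^\top\nabla f(x)\|_2=\|\nabla\phi(y)\|_2 .
\]
Every statement about $\|G_{\mathcal T}(x^k)\|_2$ is therefore a statement about $\|\nabla\phi(y^k)\|_2$. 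Likewise $f(x^k)=\phi(y^k)$ and $f^\star=\phi^\star$, since $K$ and $\mathcal Y_K$ correspond bijectively under the affine isometry $y\mapsto x^{\mathrm{ref}}+Uy$. Interior stationarity of \eqref{eq:mvsk} is then equivalent to $\nabla\phi(y)=0$ by Propositions~\ref{prop:interior-stationarity} and~\ref{prop:simplex-reduction}. The hypotheses also transfer: $\nabla^2\phi=U^\top\nabla^2 f\,U$ gives $L$-smoothness of $\phi$ on the preimage of a neighborhood of $K$ with the same $L$, and bounded-belowness is inherited.

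For the stationarity claim, I will rerun the standard Armijo argument under the bounded-direction condition rather than merely cite it, so that the constant is explicit. First, the bounded-direction condition of \citet[Assumption~7.1]{NiuSheshmaniYau2026YAND} gives two bounds: $-\nabla\phi^\top d_y\ge\|\nabla\phi\|_2\|d_y\|_2/\sqrt{1+\beta_{\mathrm{dir}}^2}$, and $\|d_y\|$ comparable to $\|\nabla\phi\|$. I will use the form that yields the factor $(1+\beta_{\mathrm{dir}}^2)$. Second, the descent lemma for $L$-smooth $\phi$, valid because the segment stays in $K$ by item~3, shows that backtracking accepts a step no smaller than a multiple of $2(1-\sigma)|\nabla\phi^\top d_y|/(L\|d_y\|^2)$. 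Third, combining these gives the sufficient decrease
\[
\phi(y^k)-\phi(y^{k+1})\ge \frac{2\sigma(1-\sigma)}{L(1+\beta_{\mathrm{dir}}^2)}\|\nabla\phi(y^k)\|_2^2 .
\]
Telescoping, together with bounded-belowness, gives $\sum_k\|\nabla\phi(y^k)\|^2<\infty$, so $\|G_{\mathcal T}(x^k)\|\to0$. If the method terminates, it does so exactly because $\nabla\phi=0$, which is an interior stationary point. For an accumulation point $\bar x\in K\subset\operatorname{ri}(\simplex)$, continuity of $\nabla f$ passes the limit to $G_{\mathcal T}(\bar x)=0$, and Proposition~\ref{prop:interior-stationarity} then gives stationarity.

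For the PL rate, I will subtract $\phi^\star$ in the sufficient-decrease inequality and apply $\|\nabla\phi\|^2\ge2\mu_{\mathrm{PL}}(\phi-\phi^\star)$. This gives a contraction factor of $1-\rho_{\mathrm{PL}}$, which then translates back to $f$ by the identification above. The quadratic-convergence clause is a direct citation of \citet[Theorem~7.16]{NiuSheshmaniYau2026YAND} applied to $\phi$ near $\bar y$; this is legitimate because nondegeneracy of $\nabla^2\phi(\bar y)=U^\top\nabla^2f(\bar x)U$ is exactly tangential nondegeneracy.

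The main obstacle is making the constant in the decrease inequality match $\rho_{\mathrm{PL}}$ exactly. Two points need care. The first is the precise form of Assumption~7.1, where I expect $\|d_y\|^2\le(1+\beta_{\mathrm{dir}}^2)|\nabla\phi^\top d_y|$-type control relative to $\|\nabla\phi\|$, and which sets the Armijo lower step bound. The second is whether backtracking halves the step, which would introduce an extra factor of $1/2$ absorbed into the constant. I would first try to cite the cited paper's decrease lemma verbatim for $\phi$. Only if that fails would I rederive it, adjusting the backtracking convention so the stated constant comes out.
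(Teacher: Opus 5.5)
Your plan matches the paper's proof in structure. You reduce to $\phi$, use $UU^\top=P_{\mathcal T}$ to get $\|G_{\mathcal T}(x^k)\|_2=\|\nabla\phi(y^k)\|_2$, identify $\phi(y^k)=f(x^k)$ and $\phi^\star=f^\star$ through the affine isometry, obtain stationarity of accumulation points from continuity and Proposition~\ref{prop:interior-stationarity}, and transfer the local quadratic rate via $\|x^k-\bar x\|_2=\|y^k-\bar y\|_2$. The paper never reruns the Armijo argument. It cites Lemmas~7.3--7.4 and Theorem~7.5 of the YAND paper for stationarity, Corollary~7.8 for the rate, and Theorem~7.16 with Remark~7.17 for the local claim, which is your stated fallback.

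If you write the self-contained version, the real content of your ``main obstacle'' is a factor-of-two bookkeeping error. Under the standard halving rule, if at least one backtrack occurs, the accepted step is at least $(1-\sigma)|\nabla\phi^\top d_y|/(L\|d_y\|_2^2)$. The angle bound $\cos\angle(-\nabla\phi,d_y)\ge(1+\beta_{\mathrm{dir}}^2)^{-1/2}$ then gives
\[
\phi(y^k)-\phi(y^{k+1})\ge\frac{\sigma(1-\sigma)}{L(1+\beta_{\mathrm{dir}}^2)}\|\nabla\phi(y^k)\|_2^2,
\]
which is half of what you display. Combined with $\|\nabla\phi\|_2^2\ge 2\mu_{\mathrm{PL}}(\phi-\phi^\star)$, this yields exactly $1-\rho_{\mathrm{PL}}$. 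Your displayed decrease would instead give $1-2\rho_{\mathrm{PL}}$, so your sentence that it ``gives a contraction factor of $1-\rho_{\mathrm{PL}}$'' contradicts your own inequality. The two slips cancel, and no change of backtracking convention is needed.

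Two further points need care in a self-contained proof.
\begin{itemize}
\item \textbf{No backtracking.} The case where the first trial step $\alpha_0$ is accepted must be handled separately. For the direction $d_y=Q_yu-\nu$ of Section~\ref{sec:yand-algorithm}, one has $-\nabla\phi^\top d_y=\|\nabla\phi\|_2$ and $\|d_y\|_2^2=1+\|u\|_2^2$. So $\|d_y\|_2$ is of order one, not comparable to $\|\nabla\phi\|_2$ as you assert. This case gives only a decrease of $\sigma\alpha_0\|\nabla\phi\|_2$, which dominates the quadratic bound only if $\alpha_0\ge(1-\sigma)\|\nabla\phi\|_2/(L(1+\beta_{\mathrm{dir}}^2))$. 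For the limit $\|G_{\mathcal T}(x^k)\|_2\to0$, a $\min\{c_1\|\nabla\phi\|_2,c_2\|\nabla\phi\|_2^2\}$ decrease suffices, but the exact $\rho_{\mathrm{PL}}$ needs the quadratic one.
\item \textbf{Where the descent lemma is used.} The descent lemma is applied along the segment to the last rejected trial point. Item~3 of the assumption covers accepted steps only, so it does not place that segment in $K$ or in its smoothness neighborhood.
\end{itemize}
Citing the YAND lemmas, as the paper does, absorbs these conventions.
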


\begin{corollary}[A low-dimensional structural PL regime for reduced MVSK]
\label{cor:mvsk-structural-pl}
Fix $\tau \in (0,1/n)$ and $x^{\mathrm{ref}}\in\simplex(\tau)$, let $U\in\R^{n\times(n-1)}$ have orthonormal columns spanning ${\mathcal T}$, and define
\[
\mathcal Y_\tau:=\{y\in\R^{n-1}:x^{\mathrm{ref}}+Uy\in\simplex(\tau)\},
\qquad
\phi(y):=f(x^{\mathrm{ref}}+Uy).
\]
Let
\[
\underline\gamma_\tau:=\min_{|s|\le B_\tau}\psi''(s),
\qquad
\overline\gamma_\tau:=\max_{|s|\le B_\tau}\psi''(s).
\]
If $\underline\gamma_\tau>0$, $n\le T$, and $A$ is injective on ${\mathcal T}$, then for every $y\in\mathcal Y_\tau$,
\[
\mu_\tau I_{n-1}
\preceq
\nabla^2\phi(y)
\preceq
L_{\tau,\phi} I_{n-1},
\]
where
\[
\mu_\tau:=\frac{\underline\gamma_\tau}{T}\sigma_{\min}(AU)^2,
\qquad
L_{\tau,\phi}:=\frac{\overline\gamma_\tau}{T}\sigma_{\max}(AU)^2.
\]
Hence $\phi$ is $\mu_\tau$-strongly convex on $\mathcal Y_\tau$ and satisfies the Polyak--Lojasiewicz inequality there with the same constant. Under Assumption~\ref{ass:yand-transfer}, Theorem~\ref{thm:yand-stationarity} therefore yields the explicit linear decrease estimate
\[
f(x^{k+1})-f^\star
\le
\bigl(1-\rho_\tau\bigr)\bigl(f(x^k)-f^\star\bigr),
\qquad
\rho_\tau=
\frac{2\sigma(1-\sigma)\mu_\tau}{L_{\tau,\phi}(1+\beta_{\mathrm{dir}}^2)}.
\]
\end{corollary}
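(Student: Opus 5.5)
The plan is to reduce everything to the Hessian factorization in Proposition~\ref{prop:curvature-decomposition} together with the reduced-coordinate formulas of Proposition~\ref{prop:simplex-reduction}, and then feed the resulting PL constant into Theorem~\ref{thm:yand-stationarity}.

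\textbf{Reduced Hessian sandwich.} Fix $y\in\mathcal Y_\tau$ and set $x=x^{\mathrm{ref}}+Uy\in\simplex(\tau)$. By Proposition~\ref{prop:simplex-reduction} and Proposition~\ref{prop:curvature-decomposition},
\[
\nabla^2\phi(y)=U^\top\nabla^2 f(x)U=\frac1T (AU)^\top D(x)(AU).
\]
By the definition of $B_\tau$, each coordinate satisfies $|(Ax)_t|\le B_\tau$. Hence every diagonal entry $\psi''((Ax)_t)$ lies in $[\underline\gamma_\tau,\overline\gamma_\tau]$. Note that $\underline\gamma_\tau$ and $\overline\gamma_\tau$ are attained, since $\psi''$ is continuous on a compact interval; $B_\tau$ is finite because $\simplex(\tau)$ is compact. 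Thus $\underline\gamma_\tau I_T\preceq D(x)\preceq\overline\gamma_\tau I_T$, and congruence by $AU$ gives
\[
\frac{\underline\gamma_\tau}{T}(AU)^\top AU\preceq\nabla^2\phi(y)\preceq\frac{\overline\gamma_\tau}{T}(AU)^\top AU.
\]
Injectivity of $A$ on $\mathcal T=\operatorname{range}(U)$, together with $U^\top U=I$, makes $AU$ injective, so $\sigma_{\min}(AU)>0$. Combining this with $\sigma_{\min}(AU)^2 I\preceq (AU)^\top AU\preceq\sigma_{\max}(AU)^2 I$ yields the claimed bounds $\mu_\tau I\preceq\nabla^2\phi(y)\preceq L_{\tau,\phi}I$ with $\mu_\tau>0$. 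The hypothesis $n\le T$ is only the necessary condition for such injectivity noted in Remark~\ref{rem:curvature-decomposition}.

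\textbf{Strong convexity and PL.} The set $\mathcal Y_\tau$ is convex, being the preimage of the convex set $\simplex(\tau)$ under an affine map. Therefore the uniform Hessian lower bound gives $\mu_\tau$-strong convexity via the integral form of Taylor's theorem along segments. The PL inequality is the main subtle point, because the minimizer of $\phi$ over $\mathcal Y_\tau$ may sit on the boundary of the slice, where the gradient need not vanish. The plan is to use the following inequality, valid for all $y,y'\in\mathcal Y_\tau$:
\[
\phi(y')\ge\phi(y)+\nabla\phi(y)^\top(y'-y)+\tfrac{\mu_\tau}{2}\|y'-y\|^2\ge\phi(y)-\tfrac{1}{2\mu_\tau}\|\nabla\phi(y)\|^2 .
\]
The second step minimizes the quadratic in $y'$ over all of $\R^{n-1}$, which only lowers the right-hand side. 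Taking the infimum over $y'\in\mathcal Y_\tau$ then gives $\tfrac12\|\nabla\phi(y)\|^2\ge\mu_\tau(\phi(y)-\phi^\star)$ with $\phi^\star=\inf_{\mathcal Y_\tau}\phi$. So the constrained infimum causes no trouble.

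\textbf{Transfer to the linear rate.} Under Assumption~\ref{ass:yand-transfer}, the iterates lie in a compact $K\subset\operatorname{ri}(\simplex)$. I will interpret the corollary with $K\subseteq\simplex(\tau)$, so that $\mathcal Y_K\subseteq\mathcal Y_\tau$. The PL inequality then holds on $\mathcal Y_K$ with the same $\mu_\tau$, because $\inf_{\mathcal Y_K}\phi\ge\phi^\star$ only makes the right-hand side smaller. The smoothness constant in Assumption~\ref{ass:yand-transfer} can be taken as $L=L_{\tau,\phi}$, since the upper Hessian bound holds on the convex set $\mathcal Y_\tau$. Substituting $\mu_{\mathrm{PL}}=\mu_\tau$ and $L=L_{\tau,\phi}$ into $\rho_{\mathrm{PL}}$ from Theorem~\ref{thm:yand-stationarity} gives $\rho_\tau$.

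I expect the main obstacle to be the bookkeeping in this last step: the smoothness constant $L$ in Theorem~\ref{thm:yand-stationarity} must be the reduced one on a neighborhood of $K$, and we need $\phi(y^k)=f(x^k)$ together with $f^\star=\inf_K f$. If $K$ touches the boundary of $\simplex(\tau)$, I would enlarge to a slightly smaller margin $\tau'<\tau$ by continuity, or simply state the containment $K\subseteq\simplex(\tau)$ as part of the interpretation.
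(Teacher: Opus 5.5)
Your proposal is correct and follows essentially the same route as the paper. You obtain the reduced Hessian sandwich from $\nabla^2\phi(y)=U^\top\nabla^2 f(x)U$ and the curvature decomposition on $\simplex(\tau)$, then apply singular-value bounds for $AU$, pass from strong convexity to PL, and substitute $\mu_{\mathrm{PL}}=\mu_\tau$, $L=L_{\tau,\phi}$ into Theorem~\ref{thm:yand-stationarity}. The paper asserts the constrained PL inequality and the identification of $K$ with the slice without argument, so your explicit PL derivation (which allows the infimum to sit on the boundary of the slice) and your bookkeeping on $K\subseteq\simplex(\tau)$ and on where $L_{\tau,\phi}$ must hold are more careful than the paper's proof.
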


\begin{corollary}[A coefficient-only convexity certificate for MVSK]
\label{cor:mvsk-convexity}
A sufficient coefficient-only condition for $\underline\gamma_\tau>0$ on every slice is
\[
c_4>0,
\qquad
8c_2c_4>3c_3^2,
\]
in which case $\psi''(s)\ge 2c_2-\frac{3c_3^2}{4c_4}>0$ for all $s\in\R$. Hence $\psi$ is globally strictly convex, $\Phi(z)=T^{-1}\sum_{t=1}^T\psi(z_t)$ is convex on $\R^T$, and therefore
\[
f(x)=-c_1\mu^\top x+\Phi(Ax)
\]
is a convex objective on $\simplex$. Thus \eqref{eq:mvsk} admits a coefficient-only sufficient test for convexity that can be checked before looking at the data matrix. In particular, the same coefficient condition implies $\underline\gamma_\tau>0$ on every interior slice $\simplex(\tau)$.
\end{corollary}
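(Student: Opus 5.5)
The argument is a direct computation on the scalar quartic $\psi$, followed by a composition argument. First, $\psi''(s)=2c_2-6c_3s+12c_4s^2$ is a quadratic in $s$. When $c_4>0$ it is strictly convex in $s$, and I will minimize it by completing the square:
\[
\psi''(s)=12c_4\Bigl(s-\frac{c_3}{4c_4}\Bigr)^2+2c_2-\frac{3c_3^2}{4c_4}.
\]
The minimum over $\R$ is therefore $2c_2-\frac{3c_3^2}{4c_4}$, attained at $s=c_3/(4c_4)$. Multiplying the strict inequality $8c_2c_4>3c_3^2$ by $1/(4c_4)>0$ gives $2c_2>3c_3^2/(4c_4)$, so this minimum is strictly positive. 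Since $\psi''\ge 2c_2-\frac{3c_3^2}{4c_4}>0$ on all of $\R$, it holds in particular on every interval $[-B_\tau,B_\tau]$. Hence $\underline\gamma_\tau>0$ for every $\tau\in(0,1/n)$, where $B_\tau$ is the quantity from Proposition~\ref{prop:mvsk-explicit-constants} and is finite because $\simplex(\tau)$ is compact.

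Next I deduce global strict convexity of $\psi$ from the uniformly positive second derivative, using the standard one-variable criterion. For $\Phi(z)=T^{-1}\sum_t\psi(z_t)$, the Hessian is $T^{-1}\operatorname{Diag}(\psi''(z_t))$, which is positive definite on $\R^T$. So $\Phi$ is (strictly) convex; alternatively, it is a nonnegative sum of convex functions of single coordinates.

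For $f$, I invoke the lifted form from Proposition~\ref{prop:curvature-decomposition}, $f(x)=-c_1\mu^\top x+\Phi(Ax)$. Composition of a convex function with the linear map $x\mapsto Ax$ is convex, and adding the linear term $-c_1\mu^\top x$ preserves convexity. Hence $f$ is convex on $\R^n$, and a fortiori on the convex set $\simplex$. Equivalently, the factorization $\nabla^2 f(x)=T^{-1}A^\top D(x)A$ with $D(x)\succ0$ gives $\nabla^2 f\succeq0$.

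I should flag that $f$ is only convex, not strictly convex, since $A$ may fail to be injective (Remark~\ref{rem:curvature-decomposition}); the statement claims only convexity of $f$, so this is consistent.

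No step is a real obstacle. The only point needing care is the algebra of completing the square and the direction of the inequality, which uses $c_4>0$. I would also note that the condition is independent of $R$, which justifies the phrase ``before looking at the data matrix.''
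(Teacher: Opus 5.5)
Your proof is correct and follows essentially the same route as the paper. You complete the square in $\psi''$ to get the uniform lower bound $2c_2-\frac{3c_3^2}{4c_4}>0$ on all of $\R$, read off $\underline\gamma_\tau>0$ from it, and then pass convexity from $\psi$ to $\Phi$ (a sum of convex functions) and on to $f$ (composition with the linear map $x\mapsto Ax$ plus a linear term); your side remarks on the Hessian factorization and on $f$ being only convex when $A$ is not injective are accurate but not needed.
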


\begin{remark}[Standard CRRA calibration lies in the convex regime]
\label{rem:mvsk-crra}
Up to a positive common scaling, a standard utility-based CRRA calibration in the MVSK literature takes the form
\[
(c_1,c_2,c_3,c_4)=
\left(
1,\;
\frac{\gamma}{2},\;
\frac{\gamma(\gamma+1)}{6},\;
\frac{\gamma(\gamma+1)(\gamma+2)}{24}
\right),
\qquad
\gamma>0,
\]
where $\gamma$ is the relative risk-aversion parameter \citep{Palomar2025}. Then
\[
c_4=\frac{\gamma(\gamma+1)(\gamma+2)}{24}>0,
\qquad
8c_2c_4-3c_3^2
=
\frac{\gamma^2(\gamma+1)(\gamma+3)}{12}>0,
\]
so Corollary~\ref{cor:mvsk-convexity} applies automatically. Hence the investor-facing MVSK objective is convex on $\simplex$ under the standard CRRA-induced coefficients. In that regime, any failure of slice-wise strong convexity or of the structural PL condition must come from the data geometry, not from the preference specification.
\end{remark}

\begin{remark}[Interpretation]
\label{rem:yand-transfer}
Theorem~\ref{thm:yand-stationarity} still imports the general YAND convergence framework, but Proposition~\ref{prop:mvsk-explicit-constants} together with Corollaries~\ref{cor:mvsk-structural-pl} and \ref{cor:mvsk-convexity} turn its abstract constants into MVSK quantities. The smoothness and curvature constants factor into a data-geometry term, through $\|A\|_{\mathrm{op}}$ and the singular values of $AU$, and a preference-geometry term, through the one-dimensional quartic response $\psi$. This separation distinguishes nonconvexity created by the investor coefficients from degeneracy created by the data map. Corollary~\ref{cor:mvsk-convexity} identifies a regime in which nonconvexity is ruled out by the preference coefficients alone, before any rank condition on $A$ is imposed, and Remark~\ref{rem:mvsk-crra} shows that the standard CRRA calibration lies in this regime automatically. Corollary~\ref{cor:mvsk-structural-pl} then shows that, within this convex regime, slice-wise strong convexity and the structural PL inequality require an additional data-geometry condition, namely injectivity of $A$ on the relevant tangent space. Because $\operatorname{rank}(A)\le T-1$, the resulting PL conclusion is inherently low-dimensional on the full simplex and should be read as a face-restricted possibility in the high-dimensional regimes studied computationally. This is the paper's main theoretical sharpening beyond a pure transfer argument.
\end{remark}

\section{Computational Study}
\label{sec:computational-evidence}

\subsection{Experimental design}

The computational study answers three questions. First, what happens when data geometry deteriorates inside the coefficient-certified convex regime of Corollary~\ref{cor:mvsk-convexity}? Second, how should YAND-MVSK be configured as the asset dimension grows on a standard unrestricted-MVSK benchmark? Third, once unrestricted MVSK becomes computationally feasible at full-universe scale, does it materially change the comparison with exact MV in a realistic asset universe? This separation is deliberate. Once the investor coefficients themselves guarantee convexity, the remaining difficulty comes from the singular geometry of $AU$, not from competing local minima.

The first part is a controlled CRRA conditioning benchmark. We fix a representative CRRA calibration and worsen the positive-spectrum conditioning of the reduced data map $AU$ while keeping $T>n$. Its role is diagnostic rather than financial: once CRRA places MVSK in the coefficient-certified convex regime, deterioration in runtime or projected-simplex stationarity can be attributed to data geometry rather than to competing local minima. Because plug-in higher moments are easier to interpret in a sample-rich regime, the main paper emphasizes this $T>n$ design. A complementary $n>T$ counterpart is retained in the Appendix as a computational stress test.

The second part is the standard synthetic benchmark at $T=252$. It is used both for cross-solver comparison on the largest common grid and for identifying when the YAND implementation should switch from the direct reduced solve to the PCG-based large-scale regime.

For each asset dimension $n$ and sample length $T$, we draw a return matrix $R\in\R^{T\times n}$ with independent entries on $[-0.1,0.4]$, compute the sample mean $\mu=T^{-1}R^\top\bm{1}$, form the centered matrix $A:=R-\bm{1}\mu^\top$, and start every run from the equal-weight portfolio $x^0=\bm{1}/n$. Each return matrix is paired with three stylized coefficient-stress profiles,
\[
[10,1,10,1], \qquad [1,10,1,10], \qquad [10,10,10,10],
\]
representing return-seeking, risk-averse, and balanced scalarizations. Because a common positive scaling of $c$ does not change the optimizer, these are better interpreted as stress profiles than as economically calibrated investor types. Under Corollary~\ref{cor:mvsk-convexity}, the latter two lie in the coefficient-certified convex regime, whereas the return-seeking profile does not. The second part therefore spans both coefficient-certified convex and potentially nonconvex unrestricted-MVSK instances.

The synthetic benchmark spans asset dimensions from $n=4$ to $n=5000$, covering both the common comparison grid and the dimensional scale reached in the real-data application. This second part is used for solver validation and regime diagnosis rather than for the paper's financial claim.

The third part is the full-scale real-data study of Section~\ref{sec:when-higher-moments-matter}. Its role is economic rather than diagnostic: once unrestricted MVSK becomes computationally feasible at the full-universe scale, one can compare MVSK and exact MV directly under the same asset universe and return targets and ask when higher moments matter economically after diversification.

The first two layers are developed in this section, and Section~\ref{sec:when-higher-moments-matter} then turns to the economic layer. The Appendix complements them with the under-sampled stress counterpart, reduced-Hessian geometry diagnostics, parameter and line-search tables, the common-overlap large-scale comparison, and implementation checks, so that the solver crossover documented below can be read as a geometry effect rather than as a coding-constant artifact.

\subsection{Sample-Rich CRRA Conditioning Benchmark}

We instantiate the first empirical layer with a controlled convex benchmark at the representative CRRA calibration $\gamma=6$, which corresponds to
\[
(c_1,c_2,c_3,c_4)=\left(1,3,7,14\right).
\]
For each target conditioning level $\kappa_+(AU)\in\{1,10,10^2,10^3\}$, we construct a centered sample map whose positive singular spectrum on $AU$ attains the prescribed level while keeping $n=1000<T=2000$ and the equal-weight start in the null space of $A$. The resulting benchmark is sample-rich, coefficient-certified convex, and tangentially nondegenerate on the simplex tangent space. In that sense it isolates data geometry in the cleanest regime considered in the paper. The Appendix records the explicit matrix construction, the corresponding reduced-Hessian identity at the starting point, and the exact medians underlying this block.

Only YAND-MVSK (large) and Q-MVSK are compared in this conditioning benchmark, and each solver receives one untimed warm-up pass before three timed replications at every conditioning level. The PCG-based YAND configuration uses the same stall-recovery wrapper as in the large-scale synthetic study, with both mild and aggressive retries enabled so that a rare stalled run does not dominate the reported medians. This restriction reflects the regime evidence: the study below already shows that the direct YAND configuration is no longer the preferred implementation by the upper end of the hundreds, well before the thousand-asset block considered here, and the DC baselines are not the natural emphasis of a convex conditioning diagnostic whose purpose is to isolate data geometry once preference-driven nonconvexity has been removed.

\begin{figure*}[t]
\centering
\includegraphics[width=0.98\textwidth]{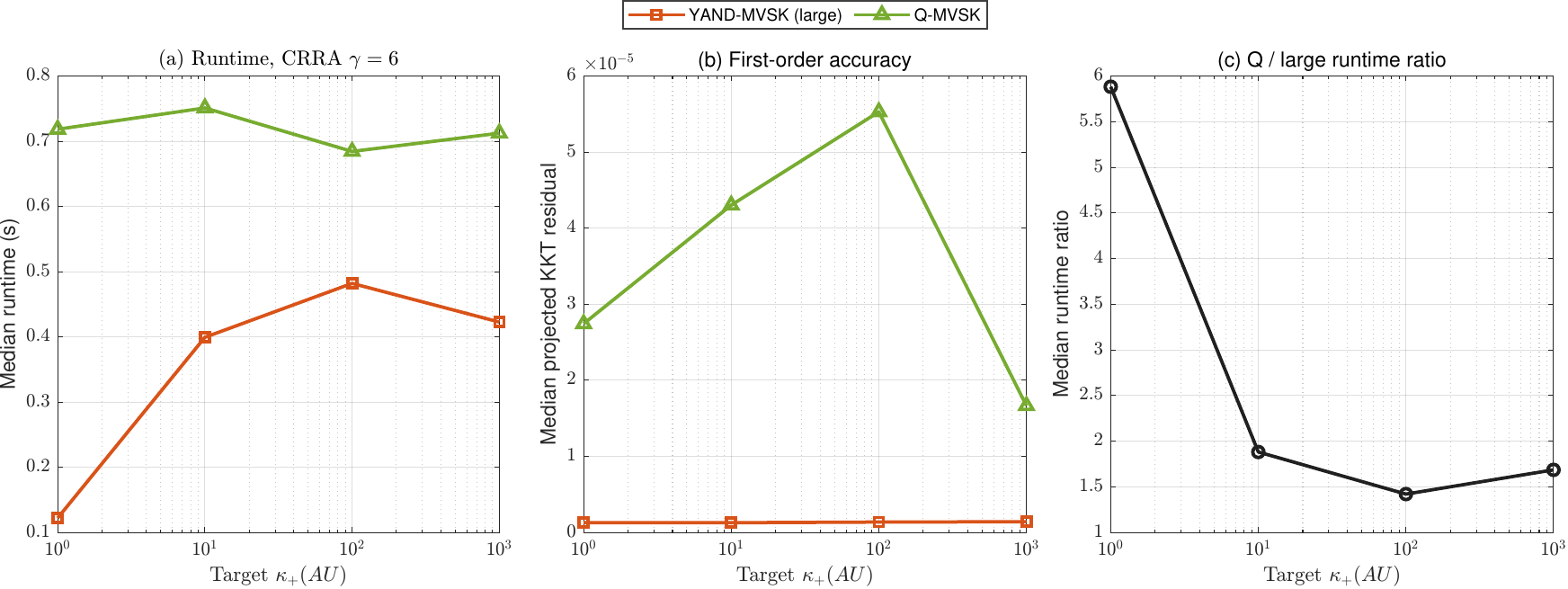}
\caption{Sample-rich controlled CRRA conditioning benchmark with representative calibration $\gamma=6$, $n=1000$, and $T=2000$. Each solver receives one untimed warm-up pass before three timed replications at every prescribed conditioning level. Panel (a) reports median runtime, panel (b) reports median projected-simplex KKT residual, and panel (c) reports the median runtime ratio Q-MVSK / YAND-MVSK (large).}
\label{fig:crra-conditioning}
\end{figure*}

Figure~\ref{fig:crra-conditioning} shows the conditioning story directly in the sample-rich regime that matters most for judging the investor-facing model. Across the full sweep $\kappa_+(AU)\in\{1,10,10^2,10^3\}$, the PCG-based YAND configuration keeps median runtime between 0.122 and 0.482 seconds and median projected-simplex residual between $1.27\times 10^{-6}$ and $1.40\times 10^{-6}$. By contrast, Q-MVSK requires between 0.684 and 0.751 seconds and remains between $1.67\times 10^{-5}$ and $5.53\times 10^{-5}$ in median projected-simplex residual. Panel (c) makes the runtime separation explicit: the median runtime ratio Q-MVSK / YAND-MVSK (large) stays between 1.4 and 5.9 over the conditioning sweep.

The benchmark therefore exhibits the desired pattern inside a coefficient-certified convex regime and in a sample-rich window: YAND-MVSK (large) is simultaneously faster and more accurate in first-order terms on an ill-conditioned unrestricted-MVSK block. Moreover, this is not a case in which Q-MVSK spends more time to buy tighter accuracy. Under the benchmark wrapper used here, Q-MVSK triggers its legacy step/objective stagnation safeguard after only 2 outer QP updates on every stored CRRA instance, before meeting the common projected-KKT target; the appendix sensitivity analysis shows that tightening those safeguards on representative CRRA and large-scale instances raises Q-MVSK runtime but leaves the projected-simplex residual well above the YAND values. Once preference-driven nonconvexity has been removed by the CRRA calibration, the remaining obstacle is purely data geometry, and this conditioning benchmark shows that YAND's advantage is already visible there before one moves to the harsher under-sampled regime. A complementary $n=5000>T=252$ CRRA stress test reported in the Appendix amplifies the same numerical separation, but that block should be read as a computational stress regime rather than as the paper's primary evidence on the investment meaning of plug-in higher moments.

\subsection{Synthetic Coefficient-Stress Benchmark Results}

\paragraph{Experimental protocol and solver settings.}
The reported results in the remainder of this section concern the second, stylized coefficient-stress part of the study. We use two YAND-MVSK configurations: a direct reduced solver on the $n\le 100$ benchmark grid and a PCG-based large-scale solver with stall recovery beyond that range. Both keep the exact sample oracle, quartic line search, projected active-set correction, and face continuation. Solution quality is evaluated ex post by the projected-simplex KKT residual, and the benchmark uses a common $10^{-6}$ target together with each baseline's native stall safeguards. The direct-versus-PCG regime comparison is reported through $n=1000$; beyond that point the synthetic study continues with YAND-MVSK (large) and the competing baselines. Full parameter settings, iteration budgets, restart rules, and the direct Armijo diagnostic are reported in the Appendix.

\begin{figure*}[t]
\centering
\includegraphics[width=0.98\textwidth]{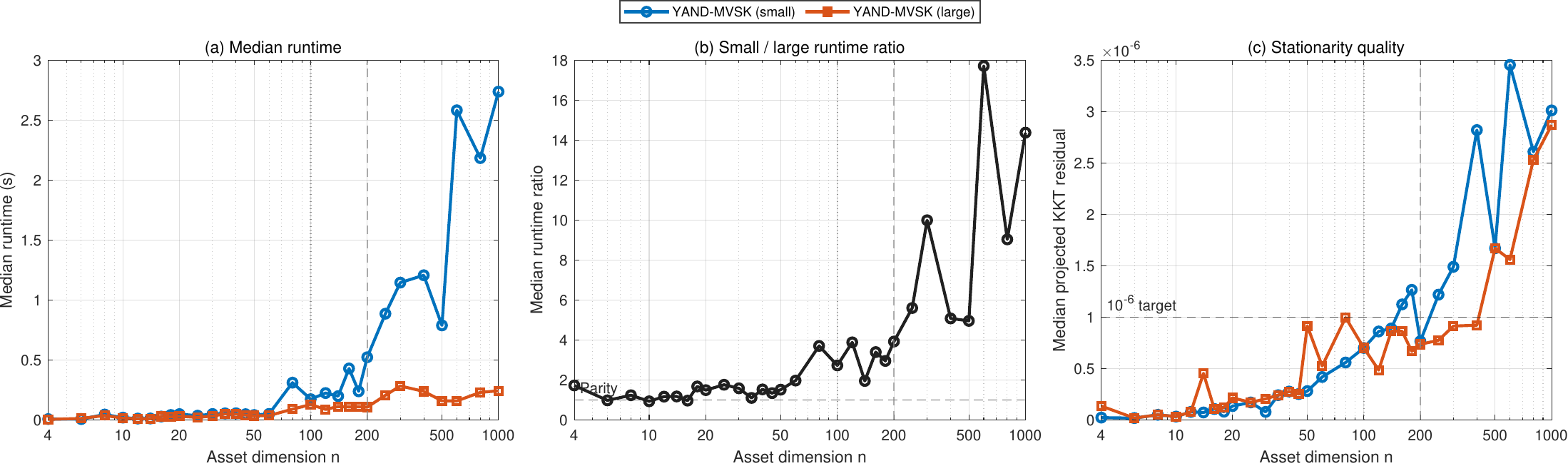}
\caption{YAND-MVSK performance by configuration on the synthetic benchmark at $T=252$. Panels report median runtime, the median instance-wise runtime ratio, and the median projected-simplex KKT residual. For $n>100$, the large-configuration curve uses stall recovery.}
\label{fig:yand-config-regimes}
\end{figure*}

\paragraph{YAND-MVSK regime comparison.}
Figure~\ref{fig:yand-config-regimes} and the supporting table in the Appendix summarize the regime split through $n=1000$. The appendix table reports the exact regime-by-regime medians behind the figure. Up to $n\le 60$, the two configurations remain close in objective value and projected-KKT quality, but no longer in runtime: the median times are $4.24\times 10^{-2}$ and $3.04\times 10^{-2}$ seconds, projected-KKT hits remain 48/48 versus 46/48, and objective outcomes are almost entirely ties. The range $80\le n\le 100$ is still the main crossover zone: YAND-MVSK (large) is already 2.74$\times$ faster, but YAND-MVSK (small) still records 6/6 projected-KKT hits versus 4/6. Beyond that point, the stall-recovery PCG mode becomes the preferred implementation. On $120\le n\le 200$, it is 3.42$\times$ faster and improves the projected-KKT hit count to 12/15 from 9/15, with 13 objective ties and 2 large-configuration wins. On $250\le n\le 500$, the runtime ratio widens to 5.34, objective outcomes remain mostly ties (10 out of 12), and the projected-KKT hit count favors the PCG mode by 7/12 to 2/12. Only once $n$ enters the upper hundreds does the $10^{-6}$ threshold become demanding for both configurations: on $600\le n\le 1000$, neither configuration records a projected-KKT hit, but the large configuration is still 14.38$\times$ faster and ties on objective value in 8 of the 9 cases.

Supporting reduced-Hessian diagnostics reported in the Appendix show that, on this same $T=252$ benchmark family, the median positive-spectrum condition number at the equal-weight start rises from about $1.32$ at $n=4$ to $17.76$ at $n=100$, while no negative eigenvalues are observed on that block. The crossover is therefore consistent with a geometry-deterioration story: even before the ultra-scale range, the direct configuration is being asked to solve increasingly anisotropic reduced problems, which is exactly the regime in which the PCG configuration should become the preferred implementation. The resulting recommendation is therefore asymmetric: the direct configuration remains a viable precision-oriented option on the $n\le 100$ benchmark, especially around $80\le n\le 100$, but the PCG-based YAND-MVSK (large) configuration is already the preferred runtime-oriented implementation by the upper end of that grid and remains so thereafter.

Table~\ref{tab:synthetic-summary} summarizes the cross-solver evidence, and Figure~\ref{fig:synthetic-overview} reports the corresponding dimension profiles. The top row uses the $T=252$, $n\le 100$ benchmark grid. The bottom row uses the full $n>100$ large-scale segment of the same benchmark, spanning asset dimensions from 120 to 5000. Both rows overlay YAND-MVSK against Q-MVSK, UBDCA, and UDCA. Because this second-layer benchmark intentionally mixes two coefficient-certified convex profiles with one profile outside the certificate, the cross-solver comparison should be read as robustness evidence across both convex and potentially nonconvex unrestricted-MVSK instances. In Table~\ref{tab:synthetic-summary}, ``Obj.\ ties'' counts runs whose objective value lies within $10^{-6}$ of the best objective attained by any compared solver on the same instance.

\begin{table}[ht]
\centering
\scriptsize
\caption{Four-solver comparison on the synthetic benchmark at $T=252$.}
\label{tab:synthetic-summary}
\setlength{\tabcolsep}{4pt}
\resizebox{\textwidth}{!}{%
\begin{tabular}{@{}llrrrrr@{}}
\toprule
$n$ range & Method & Mean time (s) & Mean iters & Mean proj. residual & Mean gap to best & Obj. ties \\
\midrule
$n\le 100$ & YAND-MVSK (small) & $0.0711$ & $41.74$ & $2.69\times 10^{-7}$ & $1.56\times 10^{-8}$ & 54/54 \\
 & Q-MVSK & $0.0107$ & $2.704$ & $1.90\times 10^{-4}$ & $7.56\times 10^{-8}$ & 54/54 \\
 & UBDCA & $0.0304$ & $42.59$ & $1.23\times 10^{-3}$ & $3.01\times 10^{-5}$ & 43/54 \\
 & UDCA & $0.591$ & $2053.5$ & $2.37\times 10^{-2}$ & $5.38\times 10^{-3}$ & 39/54 \\
\midrule
$n>100$ & YAND-MVSK (large) & $0.332$ & $98.56$ & $3.56\times 10^{-6}$ & $1.62\times 10^{-6}$ & 34/48 \\
 & Q-MVSK & $7.685$ & $3.062$ & $2.19\times 10^{-4}$ & $3.37\times 10^{-8}$ & 47/48 \\
 & UBDCA & $0.720$ & $97.71$ & $1.37\times 10^{-1}$ & $8.72\times 10^{-2}$ & 1/48 \\
 & UDCA & $3.665$ & $2083.9$ & $1.87\times 10^{-1}$ & $1.36\times 10^{-1}$ & 0/48 \\
\bottomrule
\end{tabular}
}
\end{table}

\begin{figure*}[t]
\centering
\includegraphics[width=0.98\textwidth]{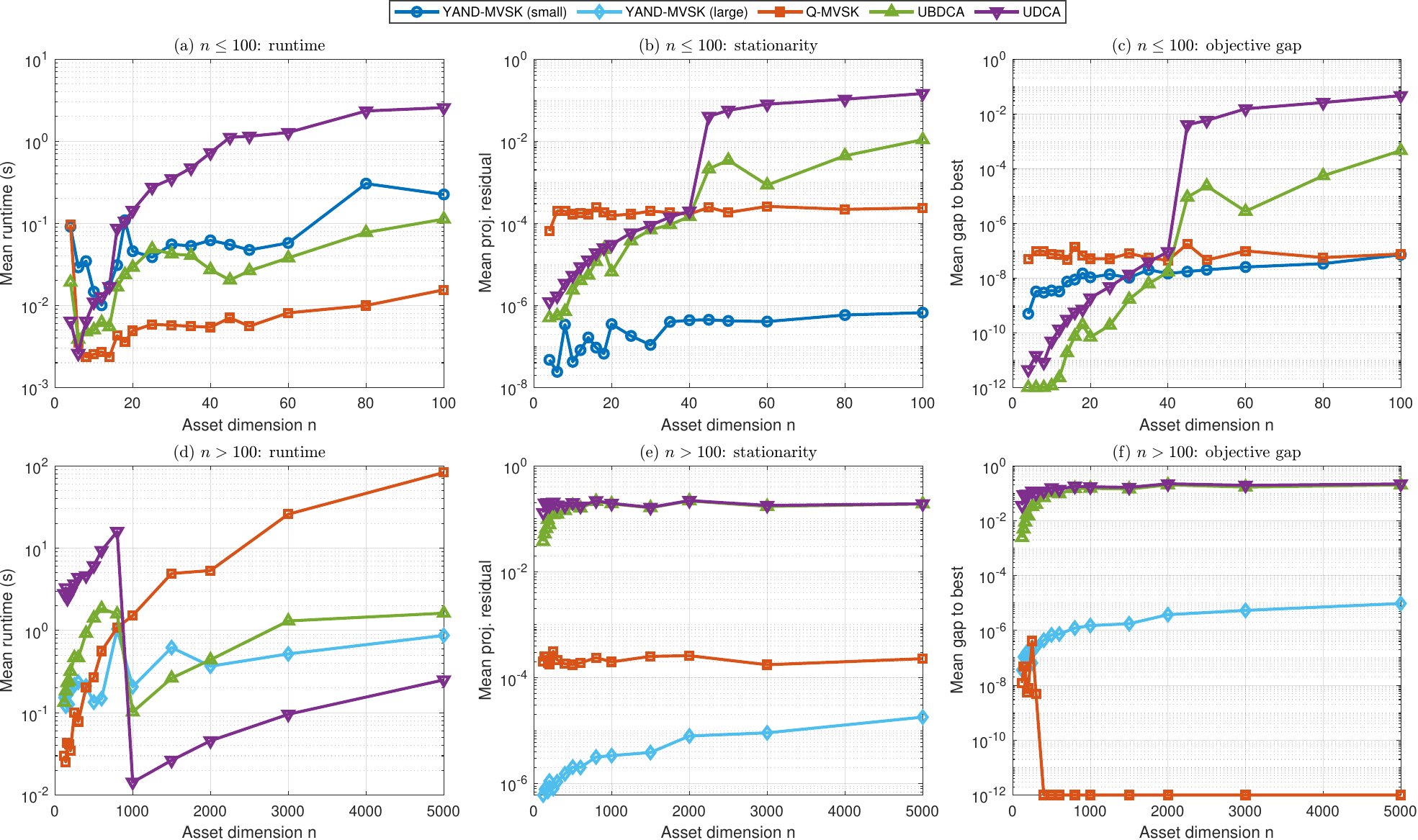}
\caption{Four-solver comparison on the synthetic benchmark at $T=252$. The top row reports the $n\le 100$ benchmark grid with YAND-MVSK (small); the bottom row reports the $n>100$ segment with YAND-MVSK (large) and stall recovery. Panels report mean runtime, mean projected-simplex KKT residual, and mean objective gap to the best method on the same instance.}
\label{fig:synthetic-overview}
\end{figure*}

\paragraph{Four-solver evidence on the benchmark grid.}
On the $n\le 100$ benchmark, Q-MVSK is the fastest method, with mean runtime $0.0107$ seconds and mean outer-iteration count 2.704 under its native step/objective stagnation safeguards. YAND-MVSK (small) matches the best observed objective on all 54 instances while keeping the mean projected-simplex residual at $2.69\times 10^{-7}$, versus $1.90\times 10^{-4}$ for Q-MVSK. Its mean runtime is $0.0711$ seconds, so it is no longer the fastest method on this grid, but it still delivers the cleanest first-order solution profile among the four methods. UBDCA is faster in raw runtime than YAND-MVSK (small), yet it ends with mean projected-simplex residual $1.23\times 10^{-3}$ and mean objective gap $3.01\times 10^{-5}$; UDCA is slower still and degrades further to $2.37\times 10^{-2}$ and $5.38\times 10^{-3}$, with mean iteration count 2053.5. The top row of Figure~\ref{fig:synthetic-overview} shows the same pattern across $n$: Q-MVSK keeps the smallest runtime profile, YAND-MVSK (small) delivers the strongest KKT profile among the four methods, and the DC baselines deteriorate as the dimension approaches the upper end of the benchmark grid. Since this block mixes two coefficient-certified convex stress profiles with one potentially nonconvex profile, that ordering is not a convex-only artifact of the benchmark design.

\paragraph{Four-solver evidence at large scale.}
On the enlarged $n>100$ segment, which extends to $n=5000$, YAND-MVSK (large) with stall recovery and Q-MVSK remain the only two methods that stay competitive on objective value. Q-MVSK retains the stronger aggregate best-observed objective statistics, with 47/48 objective ties and mean gap $3.37\times 10^{-8}$, whereas YAND-MVSK (large) records 34/48 objective ties and mean gap $1.62\times 10^{-6}$. Averaged over the full $n>100$ segment, YAND-MVSK (large) requires $0.332$ seconds, compared with $7.685$ seconds for Q-MVSK, while also delivering a much tighter mean projected-simplex residual, $3.56\times 10^{-6}$ versus $2.19\times 10^{-4}$.

A supporting overlap table in the Appendix, which isolates the common large-scale block $n\in\{120,200,400,800\}$ before the ultra-scale cases dominate the averages, already shows the same pattern: YAND-MVSK (large) and Q-MVSK are essentially tied in mean runtime ($0.265$ versus $0.260$ seconds) and almost tied on objective value, but YAND attains projected-simplex residual at most $10^{-4}$ on all 12 runs versus only 4 out of 12 for Q-MVSK. The appendix table gives the full overlap-block breakdown, and a separate sensitivity note shows that tightening Q-MVSK's legacy stagnation tolerances on representative medium- and large-scale instances raises runtime materially but improves the projected-simplex residual only modestly. Figure~\ref{fig:synthetic-overview} then shows why the full-segment averages separate so sharply: the added ultra-scale block amplifies that first-order-accuracy difference into a decisive runtime advantage. At $n=1000$, the two runtime profiles are still of the same order, but on $n\in\{1500,2000,3000,5000\}$ Q-MVSK averages about 4.9, 5.3, 25.8, and 82.9 seconds, whereas the PCG-based YAND configuration averages about 0.62, 0.37, 0.52, and 0.87 seconds. Thus, once the benchmark moves beyond the common overlap grid, the main separation is runtime and first-order accuracy rather than best-observed objective value. UBDCA and UDCA remain clearly dominated even under the common projected-KKT target and stall safeguards: their mean projected-simplex residuals remain $1.37\times 10^{-1}$ and $1.87\times 10^{-1}$, and their mean objective gaps remain $8.72\times 10^{-2}$ and $1.36\times 10^{-1}$.

\section{Economic Value of Higher Moments at Full Scale}
\label{sec:when-higher-moments-matter}

The scale result does not imply that higher moments uniformly dominate mean--variance portfolios. Its importance is methodological and economic. Before unrestricted MVSK could be solved beyond modest dimensions, the literature had limited scope for asking whether skewness and kurtosis preferences still changed outcomes materially after diversification or whether the reported gains were mainly a small-universe artifact. The scale evidence of Section~\ref{sec:computational-evidence} removes that barrier and makes a direct large-universe MV-versus-MVSK comparison feasible.

\paragraph{Real-data design.}
The real-data experiment is a full-scale economic comparison rather than a conditioning diagnostic. It uses a cleaned RESSET 5-minute A-share panel from January 2, 2020 to June 30, 2025. Missing vendor entries are filled with zero, assets with insufficient trading histories are removed, and the resulting balanced panel contains $n=5440$ stocks and $T=66412$ time periods. Thus the retained empirical panel is itself sample-rich, which matters because plug-in higher moments are easier to justify there than in an under-sampled large-universe window. We split the sample at January 1, 2024, which yields $48374$ training periods and $18038$ test periods. Covariances are estimated from demeaned training returns and expected returns from raw training means. The benchmark MV portfolio solves the exact long-only return-constrained variance problem by Gurobi. The return floor is the $q$th percentile of the in-sample asset means. The portfolios are estimated once on the training block and then held fixed over the test block, so the exercise is a static allocation comparison rather than a rolling trading strategy. The main display therefore reports the full target sweep $q\in\{0.40,0.50,0.60\}$ so that the economic role of higher moments can be read across increasingly demanding return floors rather than from a single calibration.

The MVSK comparison keeps the same long-only simplex and uses exact sample moments from the training panel. Because MVSK is preference-based rather than return-constrained, we normalize the moment coefficients by equal-weight in-sample portfolio moment magnitudes and then calibrate the mean coefficient so that the resulting MVSK portfolio lies as close as possible to the same in-sample return floor as the MV benchmark. The main text reports the kurtosis-focused profile because it gives the most direct tail-risk interpretation for the paper's economic question. The Appendix then reports side-by-side results for the skew-focused, kurtosis-focused, and balanced profiles on the same split, together with rolling-window split checks and one-time implementation-cost sensitivity for this static design. Test-period portfolio statistics are annualized using $48\times 244=11712$ 5-minute intervals per trading year.

\begin{figure*}[t]
\centering
\includegraphics[width=0.98\textwidth]{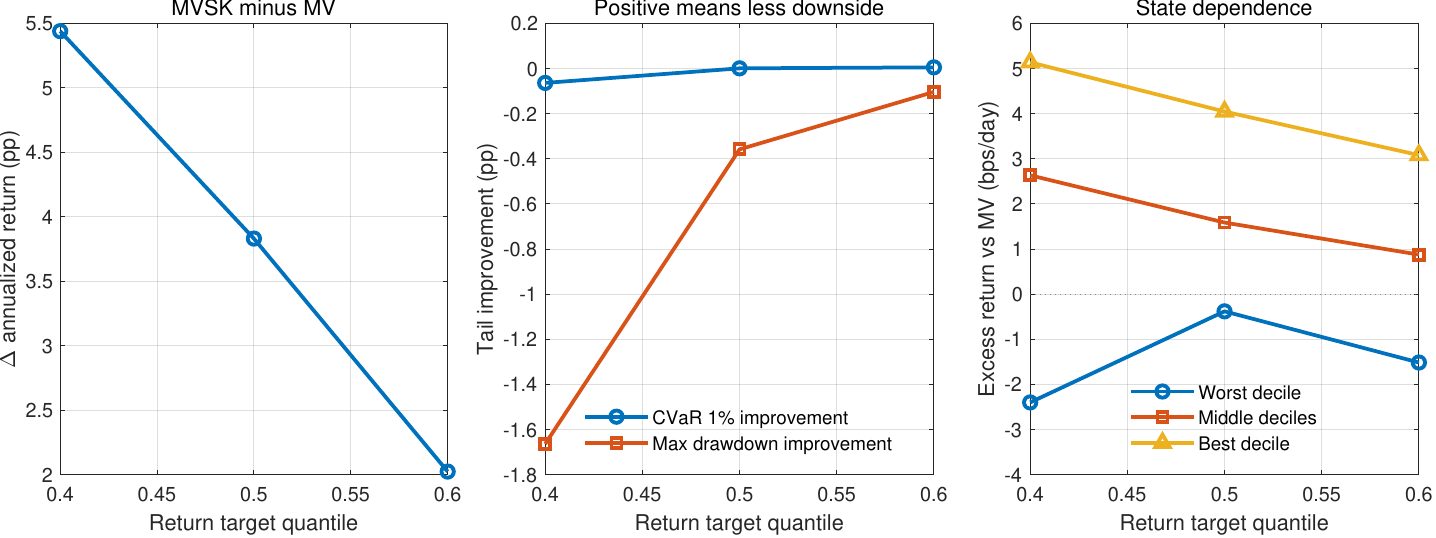}
\caption{Kurtosis-aware MVSK across return targets on the 5-minute A-share panel. Panel (a) reports annualized return gains relative to MV, panel (b) downside-risk changes, and panel (c) average daily excess returns by market-return decile.}
\label{fig:real-kurtosis-target-sweep}
\end{figure*}

\begin{table}[ht]
\centering
\scriptsize
\caption{Target sweep for kurtosis-focused MVSK on the 5-minute A-share panel. Positive $\Delta$ metrics favor MVSK.}
\label{tab:real-kurtosis-target-sweep}
\begin{tabular}{rrrrrrrrrr}
\toprule
$q$ & MV ret. & MVSK ret. & $\Delta$ ret. & $\Delta$ Sharpe & $\Delta$ CVaR$_1$ & $\Delta$ MDD & Worst & Middle & Best\\
\midrule
0.40 & 36.24 & 41.68 & 5.44 & 0.288 & 0.064 & 1.664 & -2.40 & 2.63 & 5.14\\
0.50 & 38.20 & 42.03 & 3.83 & 0.150 & -0.001 & 0.358 & -0.38 & 1.59 & 4.05\\
0.60 & 40.37 & 42.39 & 2.02 & 0.077 & -0.005 & 0.104 & -1.51 & 0.88 & 3.08\\
\bottomrule
\end{tabular}
\end{table}

\paragraph{Incremental value in the A-share panel.}
Table~\ref{tab:real-kurtosis-target-sweep} and Figure~\ref{fig:real-kurtosis-target-sweep} show that, on the baseline split of this panel, the economic value of MVSK is sharply target-dependent. At the moderate return floor $q=0.40$, kurtosis-aware MVSK raises annualized test return from $36.24\%$ to $41.68\%$ and Sharpe from $1.332$ to $1.619$, while also improving the $1\%$ CVaR by $0.064$ percentage points and reducing maximum drawdown by $1.664$ percentage points relative to exact MV. At $q=0.50$, the gains remain economically meaningful: annualized return rises by $3.83$ percentage points and Sharpe by $0.150$, while maximum drawdown still improves modestly. At the aggressive floor $q=0.60$, MVSK continues to improve annualized return by $2.02$ percentage points and Sharpe by $0.077$, but the downside advantage is largely gone: the change in $1\%$ CVaR is slightly negative and the drawdown improvement shrinks to $0.104$ percentage points.

\paragraph{Managerial implications.}
For the baseline split, the main managerial implication is that higher moments matter most when the investor's return requirement is binding but not extreme. At the moderate floor $q=0.40$, and more weakly at $q=0.50$, the fourth-moment term helps reshape a concentrated high-mean basket away from the most damaging intraday outcomes, so the investor gets more return together with some downside improvement. The portfolio-composition diagnostics point in the same direction. At $q=0.40$, the MVSK portfolio is still materially different from exact MV, with active share $11.6\%$ relative to MV and a slightly less concentrated core allocation (effective number $1.82$ versus $1.60$). Once the return floor becomes aggressive, however, the exact MV benchmark is already pushed toward a narrow set of high-mean names with substantial upside asymmetry. By $q=0.60$, the two portfolios become much closer in composition: active share falls to $8.1\%$, effective numbers are nearly identical ($2.73$ for MVSK versus $2.74$ for MV), and the top-10 weight shares are also almost the same ($67.2\%$ versus $67.5\%$). In that regime MVSK still extracts additional upside, but it no longer purchases much left-tail protection because little portfolio-design freedom remains once the return target has already driven both solutions toward the same concentrated corner. The baseline evidence therefore supports a conditional adoption rule rather than uniform replacement of MV: use MVSK when the mandate asks for more than a variance-efficient portfolio but still leaves enough design freedom for higher moments to reshape tail exposure, and do not expect much incremental downside protection once the target is so aggressive that both models are forced toward the same concentrated corner. The appendix rolling-window checks show that even this rule is market-window dependent rather than universal.

\paragraph{What MVSK is not.}
The decile panel in Figure~\ref{fig:real-kurtosis-target-sweep} also shows that, on the baseline split, MVSK is not a generic bear-market hedge. In all three target regimes, its relative performance is weakest on the worst market-return decile and strongest on the best decile, with the middle deciles also positive. The real-data gains therefore come from distributional reshaping rather than from universal crash-day outperformance. At $q=0.40$, for example, MVSK underperforms MV on the worst market decile by $2.40$ basis points per day on average, but more than compensates through gains of $2.63$ basis points in the middle deciles and $5.14$ basis points on the best decile. This pattern is economically informative: higher moments help investors redesign the payoff profile of a concentrated high-mean portfolio, but that redesign should be read as selective upside capture with reduced exposure to the most severe pathwise losses, not as a blanket insurance policy against broad market selloffs.

\paragraph{Scope and robustness.}
The real-data evidence should also be interpreted in that scope. The Appendix reports three additional checks. First, the cross-profile table shows that the target-dependence pattern on the baseline split is not unique to the kurtosis-focused profile emphasized in the main text: at $q=0.50$ the skew-focused, kurtosis-focused, and balanced profiles are nearly indistinguishable, whereas at $q=0.40$ the skew-focused and balanced profiles generate somewhat larger gains against MV. Second, rolling-window splits show that the separation from MV is state dependent. In the 2023 window the kurtosis profile remains positive across the three targets, but in the 2024 and 2025H1 windows the $q\in\{0.40,0.50\}$ allocations become almost indistinguishable from MV and the only material separation reappears at $q=0.60$, mainly through upside rather than downside protection. Third, because the portfolios are held fixed over each test block, the relevant cost check is a one-time transition from MV to MVSK rather than recurring turnover. On the baseline split, the break-even one-way implementation costs are 7416, 6110, and 3698 basis points at $q=0.40$, $0.50$, and $0.60$, so the baseline gains are not fragile to reasonable one-time trading-cost assumptions. These checks do not turn the experiment into a full trading-system evaluation, but they sharpen its interpretation: the gains are economically meaningful on the baseline split and not explained away by one-time implementation costs, yet they are market-window dependent rather than universal.

\section{Conclusion}
\label{sec:conclusion}

This paper develops a structure-exploiting YAND algorithm for unrestricted sample-moment MVSK portfolio optimization. The investor-facing model is unchanged; what changes is the numerical interface. By working directly with the return matrix, the method replaces explicit dense coskewness and cokurtosis tensors with exact sample-oracle calculations, affine-normal steps, exact quartic line search, and simplex-aware continuation. The theory supplies an exact reduced-coordinate formulation, explicit regularity constants, and a coefficient-only convexity certificate that includes the standard CRRA calibration.

The computational study shows that this reformulation changes the practical scale of unrestricted MVSK. In the controlled convex benchmark, the PCG-based configuration is faster and more accurate than Q-MVSK. On the standard small synthetic benchmark, the direct configuration remains a viable precision-oriented option, but the PCG-based configuration becomes the preferred runtime-oriented implementation by the upper end of that grid and stays competitive with the strongest unrestricted-MVSK baselines as the asset universe moves into the thousands. The economic study then shows what that scale makes possible: on a 5-minute A-share panel with 5,440 stocks, unrestricted MVSK can be compared directly with exact MV on the full universe, and the baseline split indicates that the incremental value of higher moments is strongest at moderate return targets rather than at the most aggressive ones.

The evidence therefore supports a conditional rather than universal adoption rule. On the baseline split, higher moments are most useful when the investor's return requirement is binding but still leaves room to reshape tail exposure; once the target is so aggressive that both MV and MVSK are pushed toward the same concentrated corner, the remaining gains are better read as payoff reshaping than as broad downside protection. Appendix checks also show that this economic advantage is market-window dependent rather than universal, while one-time implementation-cost adjustments are far too small to overturn the baseline gains in the static allocation design studied here. Natural next tests include rolling re-estimation with portfolio refresh, recurring trading costs, alternative target-matching rules, and regularized moment estimates.

\paragraph{Code and data availability.}
The replication package will provide code, a README file, and scripts for reproducing the synthetic experiments and rebuilding all reported tables and figures. The real-data analysis uses licensed RESSET intraday data, which cannot be redistributed. For that part of the paper, the replication materials will therefore provide data-construction code and instructions for reconstructing the sample from an authorized RESSET license.

\section*{Acknowledgments}
Y.-S. Niu's work was supported by the National Natural Science Foundation of China [Grant No.\ 42450242] and the Beijing Overseas High-Level Talent Program. A. Sheshmani's work was supported by the Beijing Natural Science Foundation [Grant No.\ BJNSF-IS24005] and the NSFC-RFIS Program [Grant No.\ W2432008]. A. Sheshmani also gratefully acknowledges support from the NSF AI Institute for Artificial Intelligence and Fundamental Interactions at the Massachusetts Institute of Technology, funded by the U.S. National Science Foundation under Cooperative Agreement PHY-2019786, as well as the Beijing Overseas High-Level Talent Program. All authors gratefully acknowledge institutional support from the Beijing Institute of Mathematical Sciences and Applications (BIMSA).

\bibliography{opre_mvsk_refs}

\clearpage
\appendix
\section*{Appendix}
\addcontentsline{toc}{section}{Appendix}

\section{Additional Derivations}

This appendix collects omitted proofs, implementation details, and supplementary numerical diagnostics.

We use the same notation as in the main paper: $R \in \R^{T \times n}$ is the sample-return matrix, $\mu$ is the sample mean, $A:=R-\bm{1}\mu^\top$ is the centered sample matrix, and $z(x):=Ax$. For a vector $u \in \R^T$, we write $u_t$ for its $t$th component and $\hadpow{u}{p}$ for its componentwise $p$th power, that is, $(\hadpow{u}{p})_t=u_t^p$. When a scalar map $g:\R\to\R$ is applied to a vector, it is understood componentwise. We use $\|\cdot\|_2$ for the Euclidean norm, $\|\cdot\|_\infty$ for the max norm, and $\|\cdot\|_{\mathrm{op}}$ for the spectral operator norm.

\subsection{Operator Structure and Curvature Decomposition}

\begin{proposition}[Exact operator representation]
\label{prop:ec-exact-operator}
For every $x \in \simplex$, the unrestricted sample-moment MVSK objective can be written exactly as
\begin{equation}
f(x)= -c_1 \mu^\top x + \frac{c_2}{T}\|z(x)\|_2^2 - \frac{c_3}{T}\sum_{t=1}^T z_t(x)^3 + \frac{c_4}{T}\sum_{t=1}^T z_t(x)^4.
\end{equation}
Moreover,
\begin{equation}
\nabla f(x)= -c_1\mu + \frac{2c_2}{T}A^\top z - \frac{3c_3}{T}A^\top \hadpow{z}{2} + \frac{4c_4}{T}A^\top \hadpow{z}{3}.
\end{equation}
and for every $v \in \R^n$,
\begin{equation}
\nabla^2 f(x)v = \frac{2c_2}{T}A^\top(Av) - \frac{6c_3}{T}A^\top(z \circ Av) + \frac{12c_4}{T}A^\top(\hadpow{z}{2} \circ Av),
\end{equation}
where $z=z(x)$.
Finally, if for directions $u,v \in \R^n$ we write
\[
\mathcal T_3(x;u,v):=D^3f(x)[u,v,\cdot] \in \R^n,
\]
then
\begin{equation}
\mathcal T_3(x;u,v)= -\frac{6c_3}{T}A^\top\!\bigl((Au)\circ(Av)\bigr) + \frac{24c_4}{T}A^\top\!\bigl(z \circ (Au)\circ(Av)\bigr),
\end{equation}
where again $z=z(x)$.
\end{proposition}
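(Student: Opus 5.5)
The plan is to reduce everything to the lifted representation $f(x)=-c_1\mu^\top x+\Phi(Ax)$ with $\Phi(z)=T^{-1}\sum_t\psi(z_t)$ and $\psi(s)=c_2s^2-c_3s^3+c_4s^4$, and then apply the chain rule for a linear map composed with a separable function.

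\textbf{Step 1: the objective.} First I rewrite each centered moment in terms of $z(x)=Ax$. Since $a_t=r_t-\mu$ is the $t$th row of $A$, we have $r_t^\top x-\mu^\top x=a_t^\top x=z_t(x)$. Hence $m_p(x)=T^{-1}\sum_t z_t(x)^p$ for $p=2,3,4$, and in particular $m_2(x)=T^{-1}\|z(x)\|_2^2$. Substituting these expressions into \eqref{eq:mvsk} gives the claimed formula for $f$ immediately. This identity is purely algebraic and holds for every $x\in\R^n$, not only on $\simplex$. That matters, because the derivatives below are taken in the ambient space $\R^n$.

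\textbf{Step 2: the chain rule.} Because $x\mapsto Ax$ is linear, $\nabla(\Phi\circ A)(x)=A^\top\nabla\Phi(Ax)$. Separability of $\Phi$ gives $\nabla\Phi(z)=T^{-1}\psi'(z)$, applied componentwise. With $\psi'(s)=2c_2s-3c_3s^2+4c_4s^3$, this yields the gradient formula once the $-c_1\mu$ term is added back.

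\textbf{Step 3: higher derivatives.} The Hessian of $\Phi$ is $T^{-1}\operatorname{Diag}(\psi''(z))$, so $\nabla^2f(x)v=T^{-1}A^\top\bigl(\psi''(z)\circ Av\bigr)$. Since $\psi''(s)=2c_2-6c_3s+12c_4s^2$, this gives the Hessian--vector product. For the third derivative, $D^3\Phi(z)[p,q,r]=T^{-1}\sum_t\psi'''(z_t)p_tq_tr_t$ with $\psi'''(s)=-6c_3+24c_4s$. Taking $p=Au$, $q=Av$, and $r=Aw$ gives $D^3f(x)[u,v,w]=T^{-1}\langle \psi'''(z)\circ(Au)\circ(Av),Aw\rangle$. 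Identifying the vector that represents this linear functional in $w$ yields the stated $\mathcal T_3$ formula. The linear term $-c_1\mu^\top x$ contributes nothing beyond the gradient.

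\textbf{Main obstacle.} No step is genuinely difficult. The one point needing care is in Step 1: the centering must be exact, using the sample mean $\mu$, so that the identity $r_t^\top x-\mu^\top x=(Ax)_t$ holds exactly. I would also note that $\mathcal T_3$ is defined as a multilinear derivative, so its representation as an $\R^n$-vector comes from the Riesz identification of the linear functional in the last argument. Symmetry of $D^3f$ makes the choice of which slot is left free irrelevant.
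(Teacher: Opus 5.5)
Your proposal is correct and follows the paper's own proof: rewrite the centered moments through $z_t(x)=a_t^\top x$, then apply the chain rule to the separable scalar quartic composed with the linear map $x\mapsto Ax$ to obtain the gradient, the Hessian action, and the third-order action. Two of your points go slightly beyond the paper. You observe that the identity holds on all of $\R^n$, which justifies taking ambient derivatives. You also identify $\mathcal T_3$ through the trilinear form, where the paper simply differentiates the Hessian action once more; this is a harmless refinement rather than a different route.
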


\proof{Proof.}
For each sample $t$, the centered portfolio return is $z_t(x)=a_t^\top x$. Hence the centered sample moments are
\[
m_2(x)=\frac{1}{T}\sum_{t=1}^T z_t(x)^2,
\quad
m_3(x)=\frac{1}{T}\sum_{t=1}^T z_t(x)^3,
\quad
m_4(x)=\frac{1}{T}\sum_{t=1}^T z_t(x)^4,
\]
while the sample mean is $m_1(x)=\mu^\top x$. Substituting these identities into the MVSK objective gives the displayed formula for $f(x)$.

Now define the scalar polynomial
\[
\phi(u):=c_2u^2-c_3u^3+c_4u^4.
\]
Then
\[
f(x)=-c_1\mu^\top x+\frac{1}{T}\sum_{t=1}^T \phi(z_t(x)).
\]
Because $z(x)=Ax$ is linear in $x$, the chain rule gives
\[
\nabla f(x)
= -c_1\mu+\frac{1}{T}A^\top \phi'(z),
\]
where $\phi'(u)=2c_2u-3c_3u^2+4c_4u^3$ is applied componentwise. This yields the stated gradient formula. Differentiating once more in direction $v$ gives
\[
\nabla^2 f(x)v
= \frac{1}{T}A^\top\!\bigl(\phi''(z)\circ Av\bigr),
\]
where $\phi''(u)=2c_2-6c_3u+12c_4u^2$. Expanding the componentwise product yields the displayed Hessian--vector representation.

For the third-order directional kernel, differentiate the Hessian action once more in direction $u$. Since $Av$ does not depend on $x$ and
\[
\phi'''(u)=-6c_3+24c_4u,
\]
the chain rule yields
\[
\mathcal T_3(x;u,v)
=
\frac{1}{T}A^\top\!\bigl(\phi'''(z)\circ (Au)\circ(Av)\bigr).
\]
Expanding $\phi'''(z)$ componentwise gives the displayed formula.
\endproof

\begin{proposition}[Lifted separability, curvature decomposition, and tangential conditioning]
\label{prop:ec-curvature-decomposition}
Define
\[
\psi(s):=c_2 s^2-c_3 s^3+c_4 s^4,
\qquad
\Phi(z):=\frac{1}{T}\sum_{t=1}^T \psi(z_t).
\]
Then
\[
f(x)=-c_1\mu^\top x+\Phi(Ax).
\]
Moreover,
\[
\nabla^2 f(x)=\frac{1}{T}A^\top D(x)A,
\qquad
D(x):=\operatorname{Diag}\!\bigl(\psi''(z_1(x)),\ldots,\psi''(z_T(x))\bigr),
\]
where
\[
\psi''(s)=2c_2-6c_3 s+12c_4 s^2.
\]
If $\mathcal X\subset \simplex$ is such that, for some constant $\widehat\gamma\ge 0$,
\[
\bigl|\psi''\bigl((Ax)_t\bigr)\bigr|\le \widehat\gamma
\qquad
\text{for all }x\in\mathcal X,\; t=1,\ldots,T,
\]
then for every $x\in\mathcal X$ and every tangent direction $d\in{\mathcal T}:=\{v\in\R^n:\bm{1}^\top v=0\}$,
\[
\bigl|d^\top \nabla^2 f(x)d\bigr|
\le
\frac{\widehat\gamma}{T}\|Ad\|_2^2.
\]
If, in addition, for some constants $0<\underline\gamma\le \overline\gamma$,
\[
\underline\gamma\le \psi''\bigl((Ax)_t\bigr)\le \overline\gamma
\qquad
\text{for all }x\in\mathcal X,\; t=1,\ldots,T,
\]
then for every $x\in\mathcal X$ and every tangent direction $d\in{\mathcal T}:=\{v\in\R^n:\bm{1}^\top v=0\}$,
\[
\frac{\underline\gamma}{T}\|Ad\|_2^2
\le
d^\top \nabla^2 f(x)d
\le
\frac{\overline\gamma}{T}\|Ad\|_2^2.
\]
Consequently, whenever $A$ is injective on ${\mathcal T}$,
\[
\kappa_{{\mathcal T}}\!\bigl(\nabla^2 f(x)\bigr)
\le
\frac{\overline\gamma}{\underline\gamma}\,\kappa_{{\mathcal T}}(A)^2,
\]
where
\[
\kappa_{{\mathcal T}}(A):=
\frac{\max\{\|Ad\|_2:\ d\in{\mathcal T},\ \|d\|_2=1\}}
{\min\{\|Ad\|_2:\ d\in{\mathcal T},\ \|d\|_2=1\}}
\]
and $\kappa_{{\mathcal T}}(\nabla^2 f(x))$ is defined analogously through the Rayleigh quotient on ${\mathcal T}$.
\end{proposition}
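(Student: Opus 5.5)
The plan is to read everything off Proposition~\ref{prop:ec-exact-operator} and then reduce each quantitative claim to elementary Rayleigh-quotient estimates.

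\emph{Lifted form and Hessian factorization.} The lifted identity $f(x)=-c_1\mu^\top x+\Phi(Ax)$ is a restatement of the first display of Proposition~\ref{prop:ec-exact-operator}, since $\psi$ coincides with the scalar polynomial used there. For the Hessian factorization, I will use the Hessian--vector formula $\nabla^2 f(x)v=T^{-1}A^\top(\psi''(z)\circ Av)$ obtained in that proof. Observing that $\psi''(z)\circ w=D(x)w$ for every $w\in\R^T$ gives $\nabla^2 f(x)v=T^{-1}A^\top D(x)Av$ for all $v$. Hence $\nabla^2 f(x)=T^{-1}A^\top D(x)A$ as matrices, and the formula for $\psi''$ is immediate.

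\emph{Curvature bounds.} For the bounds, fix $x\in\mathcal X$ and $d\in\mathcal T$, and set $w:=Ad$. Then
\[
d^\top\nabla^2 f(x)d=\frac1T w^\top D(x)w=\frac1T\sum_{t=1}^T\psi''(z_t(x))\,w_t^2 .
\]
Bounding each weight by $|\psi''(z_t)|\le\widehat\gamma$ and using the triangle inequality gives the absolute bound $\widehat\gamma T^{-1}\|w\|_2^2$. Under the two-sided hypothesis, the weights lie in $[\underline\gamma,\overline\gamma]$ and the squares $w_t^2$ are nonnegative. Summing termwise then gives the sandwich
\[
\frac{\underline\gamma}{T}\|Ad\|_2^2\le d^\top\nabla^2 f(x)d\le\frac{\overline\gamma}{T}\|Ad\|_2^2 .
\]
Tangency of $d$ is not actually needed for these steps; it only fixes the domain of the Rayleigh quotient.

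\emph{Condition-number bound.} Define
\[
\sigma_+:=\max_{d\in\mathcal T,\ \|d\|_2=1}\|Ad\|_2,\qquad
\sigma_-:=\min_{d\in\mathcal T,\ \|d\|_2=1}\|Ad\|_2 .
\]
Both extrema are attained because the unit sphere of $\mathcal T$ is compact, and injectivity of $A$ on $\mathcal T$ gives $\sigma_->0$. I read $\kappa_{\mathcal T}(\nabla^2 f(x))$ as the ratio of the maximum to the minimum of $d^\top\nabla^2 f(x)d$ over unit $d\in\mathcal T$. Equivalently, it is the ratio of the extreme eigenvalues of $U^\top\nabla^2 f(x)U$ for an orthonormal basis $U$ of $\mathcal T$. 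The sandwich bounds the maximum above by $\overline\gamma\sigma_+^2/T$ and the minimum below by $\underline\gamma\sigma_-^2/T>0$, so the compressed Hessian is positive definite and the quotient is well defined. Dividing yields
\[
\kappa_{\mathcal T}\bigl(\nabla^2 f(x)\bigr)\le\frac{\overline\gamma}{\underline\gamma}\,\frac{\sigma_+^2}{\sigma_-^2}=\frac{\overline\gamma}{\underline\gamma}\,\kappa_{\mathcal T}(A)^2 .
\]

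\emph{Main obstacle.} There is no real technical difficulty. The only point needing care is the last step, which requires fixing the precise meaning of the Rayleigh-quotient condition number on $\mathcal T$ and confirming that its denominator is strictly positive. That is exactly where injectivity and $\underline\gamma>0$ enter. The max and the min are bounded separately, so they may be attained at different directions $d$ without affecting the argument.
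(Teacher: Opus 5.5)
Your proposal is correct and follows the paper's proof essentially step for step. It obtains the lifted identity by renaming, and derives $\nabla^2 f(x)=T^{-1}A^\top D(x)A$ from the Hessian--vector formula, which is the same computation as the paper's composition of $\nabla^2\Phi$ with the linear map $x\mapsto Ax$. It gets both curvature bounds from the weighted sum $T^{-1}\sum_t\psi''((Ax)_t)(a_t^\top d)^2$, and the condition-number estimate by dividing the separately bounded extreme Rayleigh quotients on $\mathcal T$. Your added observations, that tangency is not needed for the pointwise bounds and that injectivity together with $\underline\gamma>0$ makes the denominator strictly positive, are accurate and slightly more explicit than the paper.
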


\proof{Proof.}
The identity
\[
f(x)=-c_1\mu^\top x+\Phi(Ax)
\]
is immediate from the previous proposition after renaming the scalar quartic part as $\psi$. Because $\Phi$ is separable,
\[
\nabla \Phi(z)=\frac{1}{T}\psi'(z),
\qquad
\nabla^2 \Phi(z)=\frac{1}{T}\operatorname{Diag}\!\bigl(\psi''(z_1),\ldots,\psi''(z_T)\bigr),
\]
with $\psi''(s)=2c_2-6c_3 s+12c_4 s^2$. Composing with the linear map $x\mapsto Ax$ therefore yields
\[
\nabla^2 f(x)=A^\top \nabla^2\Phi(Ax)A
=
\frac{1}{T}A^\top D(x)A.
\]

Now fix $x\in\mathcal X$ and $d\in{\mathcal T}$. Then
\[
d^\top \nabla^2 f(x)d
=
\frac{1}{T}(Ad)^\top D(x)(Ad)
=
\frac{1}{T}\sum_{t=1}^T \psi''\bigl((Ax)_t\bigr)(a_t^\top d)^2.
\]
The absolute bound on $\psi''$ immediately gives
\[
\bigl|d^\top \nabla^2 f(x)d\bigr|
\le
\frac{\widehat\gamma}{T}\sum_{t=1}^T (a_t^\top d)^2
=
\frac{\widehat\gamma}{T}\|Ad\|_2^2.
\]
If, in addition, $0<\underline\gamma\le \psi''((Ax)_t)\le \overline\gamma$, then
\[
\frac{\underline\gamma}{T}\sum_{t=1}^T (a_t^\top d)^2
\le
d^\top \nabla^2 f(x)d
\le
\frac{\overline\gamma}{T}\sum_{t=1}^T (a_t^\top d)^2,
\]
which is exactly
\[
\frac{\underline\gamma}{T}\|Ad\|_2^2
\le
d^\top \nabla^2 f(x)d
\le
\frac{\overline\gamma}{T}\|Ad\|_2^2.
\]

If $A$ is injective on ${\mathcal T}$, then the minimum in the definition of $\kappa_{{\mathcal T}}(A)$ is positive. Taking the maximum and minimum Rayleigh quotients of $\nabla^2 f(x)$ over unit vectors in ${\mathcal T}$ and using the previous sandwich bound gives
\[
\lambda_{\max}\!\bigl(\nabla^2 f(x)\vert_{\mathcal T}\bigr)
\le
\frac{\overline\gamma}{T}\max_{\substack{d\in{\mathcal T}\\ \|d\|_2=1}}\|Ad\|_2^2,
\]
\[
\lambda_{\min}\!\bigl(\nabla^2 f(x)\vert_{\mathcal T}\bigr)
\ge
\frac{\underline\gamma}{T}\min_{\substack{d\in{\mathcal T}\\ \|d\|_2=1}}\|Ad\|_2^2.
\]
Dividing the two bounds yields
\[
\kappa_{{\mathcal T}}\!\bigl(\nabla^2 f(x)\bigr)
\le
\frac{\overline\gamma}{\underline\gamma}
\frac{\max_{\|d\|_2=1,d\in{\mathcal T}}\|Ad\|_2^2}
{\min_{\|d\|_2=1,d\in{\mathcal T}}\|Ad\|_2^2}
=
\frac{\overline\gamma}{\underline\gamma}\,\kappa_{{\mathcal T}}(A)^2.
\]
\endproof

\begin{remark}[Interpretation]
\label{rem:ec-curvature-decomposition}
Proposition~\ref{prop:ec-curvature-decomposition} has two layers. First, the exact factorization $\nabla^2 f(x)=T^{-1}A^\top D(x)A$ and the absolute curvature bound show that tangential curvature magnitudes are filtered through the data map $A$ and weighted by the scalar responses $\psi''((Ax)_t)$. This decomposition remains valid on nonconvex regions: the matrix $A$ determines how strongly the return data distort Euclidean geometry, whereas $\psi$ determines the intrinsic quartic nonconvexity, sign changes, and local curvature variation of the model.

Second, on regions where $\psi''$ is uniformly positive, the same decomposition sharpens to a genuine tangential condition-number bound, separating the curvature spread $\overline\gamma/\underline\gamma$ from the geometric distortion $\kappa_{\mathcal T}(A)^2$. Outside such sign-definite regions, even if $A$ is injective on ${\mathcal T}$, mixed signs in $D(x)$ can make the tangential Hessian indefinite or singular, so no finite positive-definite condition number need exist.

This does not imply that affine-normal descent removes nonconvexity. It does explain why the affine-normal viewpoint is natural here: once the objective is written as a linear embedding followed by a separable scalar nonlinearity, the $A$-driven part of the anisotropy is exactly the part that affine-invariant geometry is intended to neutralize. This is the same mechanism emphasized in the affine-scaling results of \citet{NiuSheshmaniYau2026YAND}. There, under an invertible affine scaling model, the mapped YAND iterates and convergence constants are inherited from the unscaled base objective rather than from the artificial condition number introduced by the scaling matrix. Our MVSK decomposition is not identical to that model, but it isolates the same source of anisotropy: a linear map acts first, and the nonlinear scalar response comes afterward.

Because $\bm{1}_T^\top A=0$, one has $\operatorname{rank}(A)\le T-1$, so injectivity on ${\mathcal T}$ can occur on the full simplex only if $n\le T$. If $A$ is not injective on ${\mathcal T}$, then there exists a nonzero tangent direction $d\in{\mathcal T}$ with $Ad=0$, and therefore $\nabla^2 f(x)d=0$ for every $x$. Thus the tangential Hessian is singular on the full simplex, and the condition-number bound above should be read as meaningful only on tangent subspaces, or reduced faces, where the relevant data map is injective.
\end{remark}

\subsection{Feasible Quartic Line Search}

\begin{lemma}[Interior-safe steplength cap]
\label{lem:ec-feasibility-cap}
Fix $\tau \in [0,1/n)$ and define
\[
\simplex(\tau):=\{x \in \simplex : x_i \ge \tau \text{ for all } i\}.
\]
Let $x \in \simplex(\tau)$ and $d \in {\mathcal T}$. Define
\[
\alpha_{\max}(x,d;\tau):=
\min_{i:d_i<0}\frac{x_i-\tau}{-d_i},
\]
which is finite whenever $d\neq 0$. Then for every $\alpha \in [0,\alpha_{\max}(x,d;\tau)]$,
\[
x+\alpha d \in \simplex(\tau).
\]
\end{lemma}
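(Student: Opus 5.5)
The plan is to check the two constraints defining $\simplex(\tau)$ separately along the segment $x+\alpha d$: the affine constraint $\bm{1}^\top(x+\alpha d)=1$ and the componentwise lower bounds $(x+\alpha d)_i\ge\tau$. Nonnegativity needs no separate argument, since $\tau\ge 0$.

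The first step is finiteness of $\alpha_{\max}$. If $d\in{\mathcal T}$ and $d\neq 0$, then $\bm{1}^\top d=0$ forces $d$ to have at least one strictly negative coordinate. Indeed, if all coordinates were nonnegative and summed to zero, every coordinate would vanish. So the index set $\{i:d_i<0\}$ is nonempty, and the minimum is over finitely many finite numbers. Each such ratio $(x_i-\tau)/(-d_i)$ is nonnegative because $x\in\simplex(\tau)$, so $\alpha_{\max}\ge 0$ and the interval $[0,\alpha_{\max}]$ is well defined. For $d=0$ the claim is trivial under any convention, for instance $\alpha_{\max}=+\infty$ as the minimum over an empty set, since $x+\alpha d=x$.

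The second step is the affine constraint. For every $\alpha$, $\bm{1}^\top(x+\alpha d)=\bm{1}^\top x+\alpha\,\bm{1}^\top d=1$, because $d\in{\mathcal T}$.

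The third step handles the lower bounds, with a case split on the sign of $d_i$.
\begin{itemize}
\item If $d_i\ge 0$, then $x_i+\alpha d_i\ge x_i\ge\tau$ for all $\alpha\ge 0$.
\item If $d_i<0$, then $\alpha\le\alpha_{\max}\le (x_i-\tau)/(-d_i)$. Multiplying by $-d_i>0$ gives $-\alpha d_i\le x_i-\tau$, that is, $x_i+\alpha d_i\ge\tau$.
\end{itemize}
Combining the three steps gives $x+\alpha d\in\simplex(\tau)$.

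There is no real obstacle here. The only point requiring care is the finiteness and nonnegativity of $\alpha_{\max}$, which rests on the observation that a nonzero tangent vector must have a negative entry.
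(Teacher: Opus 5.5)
Your proof is correct and follows essentially the same route as the paper: the affine constraint from $\bm{1}^\top d=0$, finiteness of $\alpha_{\max}$ because a nonzero tangent vector must have a negative entry, and the sign split on $d_i$ for the lower bounds. Your remarks that $\alpha_{\max}\ge 0$ and that the case $d=0$ is trivial are small additions the paper leaves implicit.
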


\proof{Proof.}
Because $d \in {\mathcal T}$, we have $\bm{1}^\top d=0$, and therefore
\[
\bm{1}^\top(x+\alpha d)=\bm{1}^\top x=1
\]
for all $\alpha$. If $d\neq 0$ and every component of $d$ were nonnegative, then $\bm{1}^\top d=0$ would force $d=0$, a contradiction. Hence every nonzero tangent direction has at least one negative component, so $\alpha_{\max}(x,d;\tau)$ is finite. Thus only the lower bounds on the coordinates must be checked. If $d_i\ge 0$, then
\[
x_i+\alpha d_i \ge x_i \ge \tau.
\]
If $d_i<0$, then by definition of $\alpha_{\max}(x,d;\tau)$,
\[
\alpha \le \frac{x_i-\tau}{-d_i}
\qquad\Longrightarrow\qquad
x_i+\alpha d_i \ge \tau.
\]
Hence every coordinate of $x+\alpha d$ remains at least $\tau$, so $x+\alpha d \in \simplex(\tau)$.
\endproof

The next result proves the quartic exact line-search formula stated in the main paper. The key point is that along any fixed feasible direction, the unrestricted sample-moment MVSK objective reduces to a one-variable quartic polynomial whose coefficients are obtained directly from sample power sums.

\begin{proposition}[Quartic exact line search from sample power sums]
\label{prop:ec-quartic-exact-linesearch}
Fix $\tau \in [0,1/n)$, $x\in \simplex(\tau)$, and a nonzero tangent direction $d\in {\mathcal T}$. Let
\[
z:=Ax,
\qquad
w:=Ad,
\]
and define the mixed sample power sums
\[
s_{rs}(x,d):=\frac1T\sum_{t=1}^T z_t^r w_t^s,
\qquad
r,s\in \mathbb{N}\cup\{0\},
\quad
r+s\le 4.
\]
Let $\alpha_{\max}(x,d;\tau)$ be the feasibility cap from Lemma~\ref{lem:ec-feasibility-cap}. Then the line-restricted objective
\[
\varphi_{x,d}(\alpha):=f(x+\alpha d)
\]
admits the quartic representation
\[
\varphi_{x,d}(\alpha)=A_0+A_1\alpha+A_2\alpha^2+A_3\alpha^3+A_4\alpha^4,
\]
with
\[
A_0=f(x),
\qquad
A_1=-c_1\mu^\top d+2c_2 s_{11}(x,d)-3c_3 s_{21}(x,d)+4c_4 s_{31}(x,d),
\]
\[
A_2=c_2 s_{02}(x,d)-3c_3 s_{12}(x,d)+6c_4 s_{22}(x,d),
\]
\[
A_3=-c_3 s_{03}(x,d)+4c_4 s_{13}(x,d),
\qquad
A_4=c_4 s_{04}(x,d).
\]
Consequently,
\[
\varphi_{x,d}'(\alpha)=A_1+2A_2\alpha+3A_3\alpha^2+4A_4\alpha^3
\]
is cubic. An exact feasible line-search step on $\simplex(\tau)$ is therefore obtained by forming
\[
\mathcal C(x,d;\tau):=
\{0,\alpha_{\max}(x,d;\tau)\}
\cup
\{\alpha\in(0,\alpha_{\max}(x,d;\tau)):\varphi_{x,d}'(\alpha)=0\},
\]
and choosing any
\[
\alpha_\star \in \arg\min_{\alpha\in \mathcal C(x,d;\tau)} \varphi_{x,d}(\alpha).
\]
If $z=Ax$ is cached, the coefficients $A_1,\ldots,A_4$ can be assembled with one additional sample projection $w=Ad$ and $O(T)$ scalar operations, hence in $O(Tn)$ arithmetic operations and $O(T)$ working memory before the final cubic-root computation.
\end{proposition}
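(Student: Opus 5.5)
The plan is to work entirely in the lifted coordinates of Proposition~\ref{prop:ec-exact-operator}. Write $f(x+\alpha d)=-c_1\mu^\top x-\alpha c_1\mu^\top d+\frac1T\sum_{t=1}^T\psi(z_t+\alpha w_t)$. This is valid because $A(x+\alpha d)=z+\alpha w$ by linearity of $x\mapsto Ax$. Feasibility of $x+\alpha d$ for $\alpha\in[0,\alpha_{\max}(x,d;\tau)]$ comes from Lemma~\ref{lem:ec-feasibility-cap}. Note, though, that the polynomial identity itself holds for all real $\alpha$; the interval only matters for the line-search conclusion.

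\textbf{Expanding the quartic.} I will expand each scalar term with the binomial theorem:
\[
(z_t+\alpha w_t)^p=\sum_{j=0}^{p}\binom{p}{j}z_t^{p-j}w_t^{j}\alpha^j,
\]
for $p=2,3,4$. Averaging over $t$ turns each product $z_t^{p-j}w_t^j$ into $s_{p-j,j}(x,d)$. Collecting powers of $\alpha$ with the signs $+c_2,-c_3,+c_4$ then gives the coefficients.
\begin{itemize}
\item[] The $\alpha^0$ terms reassemble $f(x)$.
\item[] The $\alpha^1$ coefficient is $-c_1\mu^\top d+2c_2s_{11}-3c_3s_{21}+4c_4s_{31}$.
\item[] The $\alpha^2$ coefficient is $c_2s_{02}-3c_3s_{12}+6c_4s_{22}$.
\item[] The $\alpha^3$ coefficient is $-c_3s_{03}+4c_4s_{13}$.
\item[] The $\alpha^4$ coefficient is $c_4s_{04}$.
\end{itemize}
This is pure bookkeeping, and the main care needed is matching the binomial coefficients and signs.

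\textbf{Exact line search.} Differentiating the quartic termwise shows $\varphi_{x,d}'$ is the stated cubic. $\varphi_{x,d}$ is continuous on the compact interval $[0,\alpha_{\max}]$, and $\alpha_{\max}$ is finite by the lemma since $d\neq 0$. So a minimizer exists, and it is either an endpoint or an interior critical point. Every minimizer therefore lies in $\mathcal C(x,d;\tau)$.

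There is one subtlety: the cubic could vanish identically, for example when $c_3=c_4=0$ and $w=0$. Then $\varphi_{x,d}$ is affine, and I would interpret "real roots" as accepting endpoints only. Each critical point has the same value as the endpoints in that case, so choosing an endpoint loses nothing. Otherwise the cubic has at most three real roots, so $\mathcal C$ is finite and minimizing over it recovers the minimum on the interval.

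\textbf{Cost accounting.} With $z$ cached, forming $w=Ad$ costs $O(Tn)$ and $\mu^\top d$ costs $O(n)$. Each needed $s_{rs}$ is then an $O(T)$ reduction over componentwise products of $z$ and $w$. It suffices to keep the running powers $z^{\circ 2},z^{\circ 3},w^{\circ 2},w^{\circ 3},w^{\circ 4}$, which takes $O(T)$ memory. $A_0=f(x)$ is assumed already available from the oracle block, or can be recomputed in $O(T)$ from $z$.

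The main obstacle is not conceptual. It is the correct handling of the degenerate cubic case and the bookkeeping of the coefficients; I expect no real difficulty beyond that.
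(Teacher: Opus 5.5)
Your proposal is correct and follows the paper's proof essentially step for step: linearity gives $A(x+\alpha d)=z+\alpha w$, the binomial expansion produces exactly the coefficients $A_0,\ldots,A_4$, the endpoint-or-interior-critical-point argument on $[0,\alpha_{\max}(x,d;\tau)]$ justifies the candidate set $\mathcal C(x,d;\tau)$, and the cost is one extra projection $w=Ad$ plus $O(T)$ reductions. Your degenerate-case remark addresses something the paper glosses over when it calls $\mathcal C(x,d;\tau)$ finite, but your example is slightly off. The assumption $w=0$ by itself already forces $A_2=A_3=A_4=0$, so $c_3=c_4=0$ is not needed; it leaves $\varphi_{x,d}'\equiv -c_1\mu^\top d$, which vanishes identically only if also $c_1\mu^\top d=0$. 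In that case $\varphi_{x,d}$ is constant, $\mathcal C(x,d;\tau)$ is the whole interval, and the argmin over $\mathcal C(x,d;\tau)$ as defined is still exact without any reinterpretation.
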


\proof{Proof.}
For every $\alpha\in\R$,
\[
A(x+\alpha d)=Ax+\alpha Ad=z+\alpha w.
\]
Substituting $x+\alpha d$ into the exact operator representation from Proposition~\ref{prop:ec-exact-operator} therefore gives
\[
\varphi_{x,d}(\alpha)
=
-c_1\mu^\top(x+\alpha d)
+\frac{c_2}{T}\sum_{t=1}^T (z_t+\alpha w_t)^2
-\frac{c_3}{T}\sum_{t=1}^T (z_t+\alpha w_t)^3
+\frac{c_4}{T}\sum_{t=1}^T (z_t+\alpha w_t)^4.
\]
Apply the binomial identities
\[
(z_t+\alpha w_t)^2=z_t^2+2\alpha z_t w_t+\alpha^2 w_t^2,
\]
\[
(z_t+\alpha w_t)^3=z_t^3+3\alpha z_t^2w_t+3\alpha^2 z_t w_t^2+\alpha^3 w_t^3,
\]
\[
(z_t+\alpha w_t)^4=z_t^4+4\alpha z_t^3w_t+6\alpha^2 z_t^2w_t^2+4\alpha^3 z_t w_t^3+\alpha^4 w_t^4.
\]
After summing over $t$ and using the definition of $s_{rs}(x,d)$, the coefficients of like powers of $\alpha$ are exactly
\[
A_0
=
-c_1\mu^\top x
+c_2 s_{20}(x,d)
-c_3 s_{30}(x,d)
+c_4 s_{40}(x,d)
=
f(x),
\]
\[
A_1=-c_1\mu^\top d+2c_2 s_{11}(x,d)-3c_3 s_{21}(x,d)+4c_4 s_{31}(x,d),
\]
\[
A_2=c_2 s_{02}(x,d)-3c_3 s_{12}(x,d)+6c_4 s_{22}(x,d),
\]
\[
A_3=-c_3 s_{03}(x,d)+4c_4 s_{13}(x,d),
\qquad
A_4=c_4 s_{04}(x,d).
\]
This proves the quartic representation of $\varphi_{x,d}$. Differentiating term by term gives
\[
\varphi_{x,d}'(\alpha)=A_1+2A_2\alpha+3A_3\alpha^2+4A_4\alpha^3,
\]
which is cubic.

By Lemma~\ref{lem:ec-feasibility-cap}, every $\alpha\in[0,\alpha_{\max}(x,d;\tau)]$ yields a feasible portfolio $x+\alpha d\in \simplex(\tau)$. Since $\varphi_{x,d}$ is continuous on the closed interval $[0,\alpha_{\max}(x,d;\tau)]$ and differentiable on its interior, any minimizer on that interval must be either an endpoint or an interior stationary point. Hence every exact feasible minimizer belongs to $\mathcal C(x,d;\tau)$, and minimizing over that finite candidate set yields an exact feasible line-search step.

Finally, if $z=Ax$ is already cached at the current iterate, then forming $w=Ad$ requires one matrix--vector multiplication by $A$, hence $O(Tn)$ arithmetic operations. The scalar sums $s_{rs}(x,d)$ are then obtained from componentwise products of $z$ and $w$ in $O(T)$ arithmetic operations and $O(T)$ working memory. The remaining cubic-root calculation is constant-size and therefore does not alter the leading-order complexity claim.
\endproof

\begin{remark}[Focus of the current paper]
\label{rem:ec-focus}
The main paper focuses on the exact sample-oracle YAND solver because that is already sufficient to remove the dense quartic bottleneck and to make unrestricted MVSK operational at materially larger scales. Accordingly, this appendix records the exact-oracle proofs and the YAND-specific implementation details first.
\end{remark}

\section{Proofs Omitted from the Main Paper}

\subsection{Stationarity and Reduced-Coordinate Representation}

\begin{definition}[Tangent-gradient residual on the simplex interior]
\label{def:ec-stationarity}
Let
\[
{\mathcal T}:=\{v \in \R^n : \bm{1}^{\top}v=0\},
\qquad
P_{\mathcal T}:=I_n-\frac{1}{n}\bm{1}\bm{1}^{\top}.
\]
For $x \in \operatorname{ri}(\simplex)$, define
\[
G_{\mathcal T}(x):=P_{\mathcal T}\nabla f(x).
\]
\end{definition}

\begin{proposition}[Interior stationarity criterion]
\label{prop:ec-interior-stationarity}
For $x \in \operatorname{ri}(\simplex)$,
\[
G_{\mathcal T}(x)=0
\qquad\Longleftrightarrow\qquad
\begin{gathered}
x \text{ is a first-order stationary point of the}\\
\text{simplex-constrained MVSK problem.}
\end{gathered}
\]
\end{proposition}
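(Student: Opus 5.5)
The plan is to make ``first-order stationary point'' precise through the KKT system of the simplex-constrained problem \eqref{eq:mvsk} and then use interiority to remove the multipliers on the nonnegativity constraints. Since the constraints $x\ge 0$, $\bm{1}^\top x=1$ are linear, no constraint qualification issue arises. I will take stationarity to mean that there exist $\lambda\in\R$ and $s\in\R^n_+$ with
\[
\nabla f(x)+\lambda\bm{1}-s=0,\qquad s_ix_i=0\ \text{ for all } i.
\]
As a cross-check I will also use the equivalent variational form $\nabla f(x)^\top(y-x)\ge 0$ for all $y\in\simplex$.

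For the forward direction, assume $G_{\mathcal T}(x)=0$. Then $\nabla f(x)=\frac1n\bm{1}\bm{1}^\top\nabla f(x)$ is a multiple of $\bm{1}$, namely $\nabla f(x)=c\bm{1}$ with $c=\frac1n\bm{1}^\top\nabla f(x)$. Setting $\lambda=-c$ and $s=0$ satisfies the KKT system, and complementarity holds trivially. Equivalently, $\nabla f(x)^\top(y-x)=c\,\bm{1}^\top(y-x)=0$ for every $y\in\simplex$.

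For the reverse direction, assume the KKT system holds at $x\in\operatorname{ri}(\simplex)$. Because $x_i>0$ for every $i$, complementarity forces $s=0$, so $\nabla f(x)=-\lambda\bm{1}$. Since $P_{\mathcal T}\bm{1}=0$, it follows that $G_{\mathcal T}(x)=0$.

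The variational form gives the same conclusion. Interiority means $x+\varepsilon v\in\simplex$ for every $v\in{\mathcal T}$ and all sufficiently small $\varepsilon$ of either sign. Hence $\nabla f(x)^\top v=0$ for all $v\in{\mathcal T}$, so $\nabla f(x)\in{\mathcal T}^\perp=\operatorname{span}(\bm{1})$. Because $P_{\mathcal T}$ is the orthogonal projector onto ${\mathcal T}$, this yields $G_{\mathcal T}(x)=0$.

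The only real care point is fixing the definition of stationarity and noting that interiority is exactly what kills the inequality multipliers. Beyond that, the argument is routine linear algebra with $P_{\mathcal T}$.
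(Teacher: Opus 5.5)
Your proof is correct and follows the same route as the paper: write the KKT system, use interiority ($x_i>0$ for all $i$) to eliminate the nonnegativity multipliers, and observe that stationarity then reduces to $\nabla f(x)\in\operatorname{span}(\bm{1})={\mathcal T}^\perp$, i.e., $P_{\mathcal T}\nabla f(x)=0$. Your explicit complementarity step and the variational cross-check only spell out what the paper compresses into ``the only active simplex constraint is $\bm{1}^\top x=1$.''
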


\proof{Proof.}
Because $x \in \operatorname{ri}(\simplex)$, the only active simplex constraint is $\bm{1}^\top x=1$. Therefore the KKT condition for the simplex-constrained MVSK problem at $x$ is
\[
\nabla f(x)+\lambda \bm{1}=0
\]
for some scalar $\lambda$, which is equivalent to $\nabla f(x)$ being orthogonal to ${\mathcal T}$. This is the same as $P_{\mathcal T}\nabla f(x)=0$, that is, $G_{\mathcal T}(x)=0$.
\endproof

\begin{proposition}[Affine-subspace reduction]
\label{prop:ec-simplex-reduction}
Fix $x^{\mathrm{ref}} \in \operatorname{ri}(\simplex)$ and let $U \in \R^{n \times (n-1)}$ have orthonormal columns spanning ${\mathcal T}$. Define
\[
\mathcal Y:=\{y \in \R^{n-1} : x^{\mathrm{ref}}+Uy \in \operatorname{ri}(\simplex)\},
\qquad
\phi(y):=f(x^{\mathrm{ref}}+Uy).
\]
Then, for every $x=x^{\mathrm{ref}}+Uy \in \operatorname{ri}(\simplex)$,
\[
\nabla \phi(y)=U^\top \nabla f(x),
\qquad
\nabla^2 \phi(y)=U^\top \nabla^2 f(x)U.
\]
Moreover,
\[
G_{\mathcal T}(x)=0
\qquad\Longleftrightarrow\qquad
\nabla \phi(y)=0.
\]
\end{proposition}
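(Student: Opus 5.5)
The plan is to treat $\phi$ as the composition of $f$ with the affine map $\iota(y):=x^{\mathrm{ref}}+Uy$, whose Jacobian is the constant matrix $U$, and apply the multivariate chain rule twice. Since $f$ is a polynomial (Proposition~\ref{prop:ec-exact-operator}), it is $C^\infty$ on all of $\R^n$. We may therefore differentiate $f$ at $x=\iota(y)$ without worrying about the constraint set; the restriction to $\operatorname{ri}(\simplex)$ only ensures that $x$ is a point of interest.

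\textbf{First-order formula.} For any $h\in\R^{n-1}$, write
\[
\phi(y+h)=f(x+Uh)=f(x)+\nabla f(x)^\top Uh+o(\|h\|_2).
\]
This gives $\nabla\phi(y)=U^\top\nabla f(x)$.

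\textbf{Second-order formula.} Differentiate $y\mapsto U^\top\nabla f(\iota(y))$ once more. Because $U^\top$ is constant and the inner map has Jacobian $U$, the chain rule gives $\nabla^2\phi(y)=U^\top\nabla^2 f(x)U$. Equivalently, one can compare the second-order Taylor terms $h^\top U^\top\nabla^2 f(x)Uh$ and use symmetry.

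\textbf{Stationarity equivalence.} This is the only step needing a small argument; it is a short linear-algebra fact rather than a real obstacle. The columns of $U$ are an orthonormal basis of ${\mathcal T}$, so $UU^\top$ is the orthogonal projector onto ${\mathcal T}$. Since ${\mathcal T}=\bm{1}^\perp$, this projector equals $P_{\mathcal T}=I_n-\frac1n\bm{1}\bm{1}^\top$. Hence
\[
G_{\mathcal T}(x)=P_{\mathcal T}\nabla f(x)=U\bigl(U^\top\nabla f(x)\bigr)=U\nabla\phi(y).
\]
Because $U$ has orthonormal columns, it is injective, and indeed $\|U v\|_2=\|v\|_2$. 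Therefore $G_{\mathcal T}(x)=0$ if and only if $\nabla\phi(y)=0$, and in fact $\|G_{\mathcal T}(x)\|_2=\|\nabla\phi(y)\|_2$.

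To justify $UU^\top=P_{\mathcal T}$, note that both matrices are symmetric and idempotent with range ${\mathcal T}$. Alternatively, complete $U$ to the orthogonal matrix $[U,\ \bm{1}/\sqrt n]$, which gives $UU^\top+\frac1n\bm{1}\bm{1}^\top=I_n$.

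Combined with Proposition~\ref{prop:ec-interior-stationarity}, this shows that interior first-order stationarity of \eqref{eq:mvsk} coincides with $\nabla\phi(y)=0$.
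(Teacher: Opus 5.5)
Your proof is correct and follows essentially the same route as the paper. Both apply the chain rule to the affine map $y\mapsto x^{\mathrm{ref}}+Uy$, whose Jacobian is $U$, and then use $UU^\top=P_{\mathcal T}$ together with the injectivity of $U$ to get the stationarity equivalence. Your explicit justification of $UU^\top=P_{\mathcal T}$, and the norm identity $\|G_{\mathcal T}(x)\|_2=\|\nabla\phi(y)\|_2$ (which the paper only invokes later, in the transfer theorem), are sound additions.
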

\proof{Proof.}
The gradient and Hessian identities are immediate from the chain rule, because the map $y \mapsto x^{\mathrm{ref}}+Uy$ is affine with Jacobian $U$. Thus
\[
\nabla \phi(y)=U^\top \nabla f(x),
\qquad
\nabla^2 \phi(y)=U^\top \nabla^2 f(x)U.
\]
Since the columns of $U$ form an orthonormal basis of ${\mathcal T}$, we have
\[
UU^\top=P_{\mathcal T}.
\]
Hence
\[
G_{\mathcal T}(x)=0
\quad\Longleftrightarrow\quad
UU^\top \nabla f(x)=0
\quad\Longleftrightarrow\quad
U^\top \nabla f(x)=0
\quad\Longleftrightarrow\quad
\nabla \phi(y)=0.
\]
\endproof

\begin{proposition}[Reduced-coordinate exact oracle]
\label{prop:ec-reduced-oracle}
Fix $x^{\mathrm{ref}} \in \operatorname{ri}(\simplex)$ and let $U \in \R^{n \times (n-1)}$ have orthonormal columns spanning ${\mathcal T}:=\{v \in \R^n : \bm{1}^\top v = 0\}$. Define
\[
\phi(y):=f(x^{\mathrm{ref}}+Uy),
\qquad
x=x^{\mathrm{ref}}+Uy.
\]
Then for every $y \in \R^{n-1}$ and directions $u,v \in \R^{n-1}$,
\[
\nabla \phi(y)=U^\top \nabla f(x),
\qquad
\nabla^2 \phi(y)u=U^\top \nabla^2 f(x)(Uu),
\]
\[
\mathcal T_{3,\phi}(y;u,v):=D^3\phi(y)[u,v,\cdot]
=
U^\top \mathcal T_3(x;Uu,Uv).
\]
If $A$ is stored explicitly and the tangent basis $U$ is represented densely, one evaluation of $\phi(y)$, $\nabla \phi(y)$, one reduced Hessian--vector action, and one reduced directional third-order action each requires $O(Tn+n^2)$ arithmetic operations and no dense high-order moment tensor.
\end{proposition}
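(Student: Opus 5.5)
The plan is to treat $\phi$ as the composition of $f$ with the affine map $\iota(y):=x^{\mathrm{ref}}+Uy$, whose Jacobian is the constant matrix $U$ and whose higher derivatives vanish. With this view, every derivative of $\phi$ comes from the multilinear chain rule for affine precompositions:
\[
D^k\phi(y)[u_1,\ldots,u_k]=D^kf(\iota(y))[Uu_1,\ldots,Uu_k],
\qquad k=1,2,3.
\]
Since $f$ is a polynomial on all of $\R^n$, this holds for every $y\in\R^{n-1}$, not only on $\mathcal Y$. The cases $k=1,2$ give $\nabla\phi(y)=U^\top\nabla f(x)$ and $\nabla^2\phi(y)u=U^\top\nabla^2 f(x)(Uu)$. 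This is the same computation as in Proposition~\ref{prop:ec-simplex-reduction}, now applied to a vector $u$.

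For the third-order kernel, I would identify the vector $D^3\phi(y)[u,v,\cdot]\in\R^{n-1}$ through its pairing with an arbitrary $w\in\R^{n-1}$:
\[
\langle D^3\phi(y)[u,v,\cdot],w\rangle = D^3f(x)[Uu,Uv,Uw]=\langle \mathcal T_3(x;Uu,Uv),Uw\rangle=\langle U^\top\mathcal T_3(x;Uu,Uv),w\rangle .
\]
Because $w$ is arbitrary, this gives $\mathcal T_{3,\phi}(y;u,v)=U^\top\mathcal T_3(x;Uu,Uv)$. The explicit form of $\mathcal T_3$ then comes from Proposition~\ref{prop:ec-exact-operator}.

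For the complexity count, I would list the operations each oracle needs.
\begin{enumerate}
\item \emph{Lifting.} Forming $x=x^{\mathrm{ref}}+Uy$, $Uu$ and $Uv$ costs one dense $n\times(n-1)$ matrix--vector product each, so $O(n^2)$.
\item \emph{Ambient evaluation.} By Corollary~\ref{cor:oracle-complexity}, evaluating $f(x)$, $\nabla f(x)$, one Hessian--vector product, or one third-order action costs $O(Tn)$. Each needs only $z=Ax$, the products $Au$ and $Av$ (each $O(Tn)$), elementwise operations in $O(T)$, and one multiplication by $A^\top$ in $O(Tn)$.
\item \emph{Projection.} Applying $U^\top$ to the resulting vector in $\R^n$ costs a further $O(n^2)$.
\end{enumerate}
Summing these gives $O(Tn+n^2)$ for each oracle. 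Every object formed along the way is a vector in $\R^T$, $\R^n$ or $\R^{n-1}$, or the stored matrices $A$ and $U$, so no moment tensor is ever built.

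The only delicate step is stating precisely how the third-order vector $D^3\phi(y)[u,v,\cdot]$ is identified; the duality pairing above settles it. Everything else is routine bookkeeping.
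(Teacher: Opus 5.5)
Your proposal is correct and follows essentially the same route as the paper. You obtain all three derivative identities from the chain rule through the affine map $y\mapsto x^{\mathrm{ref}}+Uy$ with constant Jacobian $U$, and you count the cost as $O(Tn)$ sample-oracle work plus $O(n^2)$ for the dense multiplies by $U$ and $U^\top$. Your duality-pairing identification of $D^3\phi(y)[u,v,\cdot]$ simply makes explicit the step the paper states as ``differentiating once more in direction $v$.''
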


\proof{Proof.}
The value identity is immediate from the definition of $\phi$. Because the map $y \mapsto x^{\mathrm{ref}}+Uy$ is affine with Jacobian $U$, the chain rule gives
\[
\nabla \phi(y)=U^\top \nabla f(x),
\qquad
\nabla^2 \phi(y)=U^\top \nabla^2 f(x)U.
\]
Applying the Hessian to a direction $u$ yields the displayed reduced Hessian--vector identity. Differentiating once more in direction $v$ and using linearity of the affine map gives
\[
D^3\phi(y)[u,v,\cdot]
=
U^\top D^3 f(x)[Uu,Uv,\cdot]
=
U^\top \mathcal T_3(x;Uu,Uv).
\]
The complexity claim follows by combining these identities with the exact sample-oracle formulas from Proposition~\ref{prop:ec-exact-operator}. Each reduced oracle query needs matrix--vector multiplies with $A$ or $A^\top$, basis multiplies with $U$ or $U^\top$, and elementwise vector operations. For a dense basis $U$, the sample-oracle work contributes $O(Tn)$ arithmetic operations and the basis changes contribute $O(n^2)$, giving a total of $O(Tn+n^2)$. The statement does not rely on $T$ dominating $n$; if a more structured tangent basis is used, the $O(n^2)$ term can be reduced accordingly.
\endproof

\subsection{Regularity, Convexity Certificates, Transfer, and Structural PL}

\begin{proposition}[Explicit MVSK regularity constants on an interior simplex slice]
\label{prop:ec-explicit-constants}
Fix $\tau \in (0,1/n)$ and define
\[
B_\tau:=\sup_{x\in\simplex(\tau)}\|Ax\|_\infty.
\]
Then for every $x,y\in\simplex(\tau)$,
\[
\|\nabla^2 f(x)\|_{\mathrm{op}}
\le
L_\tau:=
\frac{\|A\|_{\mathrm{op}}^2}{T}
\bigl(2c_2+6c_3B_\tau+12c_4B_\tau^2\bigr),
\]
and
\[
\|\nabla^2 f(x)-\nabla^2 f(y)\|_{\mathrm{op}}
\le
M_\tau\|x-y\|_2,
\qquad
M_\tau:=
\frac{\|A\|_{\mathrm{op}}^3}{T}
\bigl(6c_3+24c_4B_\tau\bigr).
\]
In particular, $f$ is $L_\tau$-smooth and has $M_\tau$-Lipschitz Hessian on $\simplex(\tau)$. Moreover,
\[
B_\tau\le \|A\|_{\mathrm{op}},
\]
so one may use the coarser bounds
\[
L_\tau
\le
\frac{\|A\|_{\mathrm{op}}^2}{T}
\bigl(2c_2+6c_3\|A\|_{\mathrm{op}}+12c_4\|A\|_{\mathrm{op}}^2\bigr),
\]
\[
M_\tau
\le
\frac{\|A\|_{\mathrm{op}}^3}{T}
\bigl(6c_3+24c_4\|A\|_{\mathrm{op}}\bigr).
\]
\end{proposition}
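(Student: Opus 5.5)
The plan is to work entirely from the curvature factorization of Proposition~\ref{prop:ec-curvature-decomposition}, $\nabla^2 f(x)=T^{-1}A^\top D(x)A$ with $D(x)=\operatorname{Diag}(\psi''((Ax)_t))$. The key step is to bound the operator norm of a diagonal matrix by the largest absolute diagonal entry. This gives
\[
\|\nabla^2 f(x)\|_{\mathrm{op}}\le \frac{\|A\|_{\mathrm{op}}^2}{T}\max_t|\psi''((Ax)_t)|.
\]
For $x\in\simplex(\tau)$ every entry satisfies $|(Ax)_t|\le B_\tau$. The triangle inequality then yields
\[
|\psi''(s)|=|2c_2-6c_3s+12c_4s^2|\le 2c_2+6c_3B_\tau+12c_4B_\tau^2 \qquad\text{for } |s|\le B_\tau,
\]
using $c_i\ge 0$. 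This is the bound $L_\tau$. The $L_\tau$-smoothness conclusion follows because $\simplex(\tau)$ is convex: the mean-value form of the gradient difference along segments gives $\|\nabla f(x)-\nabla f(y)\|_2\le L_\tau\|x-y\|_2$.

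For the Hessian Lipschitz bound, write
\[
\nabla^2 f(x)-\nabla^2 f(y)=T^{-1}A^\top\bigl(D(x)-D(y)\bigr)A.
\]
Each diagonal entry is a difference $\psi''(z_t)-\psi''(w_t)$ with $z=Ax$, $w=Ay$, both in $[-B_\tau,B_\tau]$ because the slice is convex. By the mean value theorem the difference equals $\psi'''(\xi_t)(z_t-w_t)$ for some intermediate $\xi_t$. Since $|\psi'''(\xi)|=|-6c_3+24c_4\xi|\le 6c_3+24c_4B_\tau$, we get
\[
\|D(x)-D(y)\|_{\mathrm{op}}\le(6c_3+24c_4B_\tau)\,\|A(x-y)\|_\infty.
\]
Bounding $\|A(x-y)\|_\infty\le\|A(x-y)\|_2\le\|A\|_{\mathrm{op}}\|x-y\|_2$ and multiplying by the factor $\|A\|_{\mathrm{op}}^2/T$ gives $M_\tau$.

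The last step is $B_\tau\le\|A\|_{\mathrm{op}}$. I would use $\|Ax\|_\infty\le\|Ax\|_2\le\|A\|_{\mathrm{op}}\|x\|_2$ together with the fact that on the simplex $\|x\|_2\le\|x\|_1=1$. Substituting this bound into $L_\tau$ and $M_\tau$ gives the coarser bounds, since both are monotone in $B_\tau$.

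None of these steps is a real obstacle. The only point that needs care is the Lipschitz step, where the intermediate points $\xi_t$ must stay in $[-B_\tau,B_\tau]$. That holds because $\xi_t$ lies between $z_t$ and $w_t$, and both endpoints are bounded by $B_\tau$ in absolute value. Convexity of the slice guarantees this even if one prefers to argue along the segment from $y$ to $x$.
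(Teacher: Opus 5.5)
Your proposal is correct and follows the paper's proof essentially step for step. The shared steps are the factorization $\nabla^2 f(x)=T^{-1}A^\top D(x)A$, the diagonal operator-norm bound combined with the triangle inequality on $\psi''$, the componentwise mean-value argument with $|\psi'''|\le 6c_3+24c_4B_\tau$ and $\|A(x-y)\|_\infty\le\|A\|_{\mathrm{op}}\|x-y\|_2$, and $B_\tau\le\|A\|_{\mathrm{op}}$ via $\|x\|_2\le\|x\|_1=1$; your added remark that $L_\tau$-smoothness follows by integrating the Hessian bound along segments of the convex slice is a small clarification the paper leaves implicit.
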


\proof{Proof.}
Recall from Proposition~\ref{prop:ec-curvature-decomposition} that
\[
\nabla^2 f(x)=\frac{1}{T}A^\top D(x)A,
\qquad
D(x)=\operatorname{Diag}\!\bigl(\psi''((Ax)_1),\ldots,\psi''((Ax)_T)\bigr).
\]
Therefore
\[
\|\nabla^2 f(x)\|_{\mathrm{op}}
\le
\frac{1}{T}\|A\|_{\mathrm{op}}^2\|D(x)\|_{\mathrm{op}}
=
\frac{1}{T}\|A\|_{\mathrm{op}}^2
\max_{1\le t\le T}\bigl|\psi''((Ax)_t)\bigr|.
\]
Because $|(Ax)_t|\le \|Ax\|_\infty\le B_\tau$ on $\simplex(\tau)$ and
\[
\psi''(s)=2c_2-6c_3s+12c_4s^2,
\]
we obtain
\[
\max_{|s|\le B_\tau}|\psi''(s)|
\le
2c_2+6c_3B_\tau+12c_4B_\tau^2,
\]
which yields the stated bound for $\|\nabla^2 f(x)\|_{\mathrm{op}}$.

For the Hessian difference,
\[
\nabla^2 f(x)-\nabla^2 f(y)
=
\frac{1}{T}A^\top\!\bigl(D(x)-D(y)\bigr)A,
\]
so
\[
\|\nabla^2 f(x)-\nabla^2 f(y)\|_{\mathrm{op}}
\le
\frac{1}{T}\|A\|_{\mathrm{op}}^2\|D(x)-D(y)\|_{\mathrm{op}}.
\]
Since $x,y\in\simplex(\tau)$, every component of $Ax$ and $Ay$ lies in $[-B_\tau,B_\tau]$. The mean-value theorem applied componentwise to $\psi''$ therefore gives
\[
\|D(x)-D(y)\|_{\mathrm{op}}
\le
\max_{|s|\le B_\tau}|\psi'''(s)|\,\|A(x-y)\|_\infty.
\]
Because
\[
\psi'''(s)=-6c_3+24c_4s,
\]
we have
\[
\max_{|s|\le B_\tau}|\psi'''(s)|
\le
6c_3+24c_4B_\tau.
\]
Also,
\[
\|A(x-y)\|_\infty
\le
\|A(x-y)\|_2
\le
\|A\|_{\mathrm{op}}\|x-y\|_2.
\]
Combining the last three displays yields the claimed Hessian-Lipschitz bound.

Finally, for every $x\in\simplex(\tau)\subset\simplex$,
\[
\|Ax\|_\infty\le \|Ax\|_2\le \|A\|_{\mathrm{op}}\|x\|_2\le \|A\|_{\mathrm{op}},
\]
because $\|x\|_2\le \|x\|_1=1$ on the simplex. Hence $B_\tau\le \|A\|_{\mathrm{op}}$, and substituting this inequality into the previous formulas gives the coarse bounds.
\endproof

\begin{proposition}[Automatic MVSK regularity on an interior simplex slice]
\label{prop:ec-mvsk-regularity}
Fix $\tau \in (0,1/n)$ and define
\[
\simplex(\tau):=\{x \in \simplex : x_i \ge \tau \text{ for all } i\}.
\]
Then $\simplex(\tau)$ is a nonempty compact subset of $\operatorname{ri}(\simplex)$. Moreover, the sample-moment MVSK objective $f$ is a polynomial of degree at most four on $\R^n$. Consequently, there exist an open neighborhood $\mathcal N_\tau$ of $\simplex(\tau)$ and a constant $L_\tau>0$ such that $f$ is $L_\tau$-smooth on $\mathcal N_\tau$ and bounded below on $\mathcal N_\tau$. In particular, if an exact-oracle YAND implementation is safeguarded so that all iterates and accepted Armijo steps remain in $\simplex(\tau)$, then items~1 and~3 of Theorem~\ref{thm:ec-yand-stationarity} hold with $K=\simplex(\tau)$.
\end{proposition}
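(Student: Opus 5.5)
The plan is to handle the three claims in order: compactness and nonemptiness of $\simplex(\tau)$, polynomial structure of $f$, and then the existence of a neighborhood on which $f$ is $L_\tau$-smooth and bounded below. The last claim then transfers to items~1 and~3 of Theorem~\ref{thm:ec-yand-stationarity}.

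\textbf{Nonemptiness and compactness.} Because $\tau<1/n$, the barycenter $\bm{1}/n$ has every coordinate $1/n>\tau$, so it lies in $\simplex(\tau)$. The set $\simplex(\tau)$ is an intersection of finitely many closed half-spaces with the hyperplane $\bm{1}^\top x=1$, hence closed. It is bounded because it sits inside $\simplex$. Every point has all coordinates at least $\tau>0$, so $\simplex(\tau)\subset\operatorname{ri}(\simplex)$.

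\textbf{Polynomial structure.} Proposition~\ref{prop:ec-exact-operator} writes $f(x)=-c_1\mu^\top x+T^{-1}\sum_t\psi(a_t^\top x)$ with $\psi$ a quartic. So $f$ is a polynomial of degree at most four on all of $\R^n$, and in particular it is $C^\infty$.

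\textbf{Neighborhood, smoothness, lower bound.} This is the step needing a concrete choice. I would take $\mathcal N_\tau$ to be the open Euclidean $1$-neighborhood of $\simplex(\tau)$ in $\R^n$. Its closure $\overline{\mathcal N_\tau}$ is compact, since it is a bounded closed set.
\begin{itemize}
\item Continuity of $f$ on this compact set gives a finite minimum, so $f$ is bounded below on $\mathcal N_\tau$.
\item The Hessian $\nabla^2 f=T^{-1}A^\top D(x)A$ is continuous, so $L:=\max_{\overline{\mathcal N_\tau}}\|\nabla^2 f(x)\|_{\mathrm{op}}$ is finite.
\item For the Lipschitz-gradient inequality on a possibly nonconvex $\mathcal N_\tau$, I would use the convex hull. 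The $1$-neighborhood of the convex set $\simplex(\tau)$ is itself convex, being a Minkowski sum of convex sets. So the mean-value argument along segments $\nabla f(x)-\nabla f(y)=\int_0^1\nabla^2 f(y+s(x-y))(x-y)\,ds$ applies and gives $\|\nabla f(x)-\nabla f(y)\|_2\le L\|x-y\|_2$.
\end{itemize}

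\textbf{Explicit constant.} For an explicit value, one can bound $\|Ax\|_\infty\le\|A\|_{\mathrm{op}}(1+1)$ on $\mathcal N_\tau$, using $\|x\|_2\le 2$ there. Then, as in Proposition~\ref{prop:ec-explicit-constants}, one may take $L_\tau=T^{-1}\|A\|_{\mathrm{op}}^2\bigl(2c_2+12c_3\|A\|_{\mathrm{op}}+48c_4\|A\|_{\mathrm{op}}^2\bigr)$, adding a small positive constant if needed so that $L_\tau>0$. The name $L_\tau$ is shared with Proposition~\ref{prop:ec-explicit-constants}, but the statement only requires existence, so this enlarged constant is acceptable.

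\textbf{Transfer to the theorem.} Set $K=\simplex(\tau)$. Item~1 asks for a compact $K\subset\operatorname{ri}(\simplex)$ containing all iterates, with $f$ $L$-smooth and bounded below on a neighborhood of $K$. The safeguard puts every iterate in $K$, and the previous steps supply compactness, the inclusion in $\operatorname{ri}(\simplex)$, and the neighborhood properties. Item~3 asks that accepted Armijo steps remain in $K$, which is exactly the safeguarding hypothesis.

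The main obstacle is bookkeeping: the neighborhood must be convex so that the segment-wise smoothness bound is legitimate. Choosing the Minkowski $1$-neighborhood settles this.
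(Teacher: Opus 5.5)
Your proposal is correct and follows essentially the same route as the paper: the barycenter for nonemptiness, closed-and-bounded for compactness, the polynomial structure, a compact closed $\delta$-neighborhood on which $\|\nabla^2 f\|_{\mathrm{op}}$ is bounded and $f$ attains a minimum, and then the transfer to items~1 and~3 with $K=\simplex(\tau)$. Your refinements are harmless: fixing $\delta=1$, checking explicitly that the neighborhood is convex so the segment-wise mean-value argument is legitimate (the paper leaves this implicit), and giving an explicit, strictly positive $L_\tau$.
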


\proof{Proof.}
Because $\tau \in (0,1/n)$, the uniform portfolio $(1/n)\bm{1}$ belongs to $\simplex(\tau)$, so the set is nonempty. It is closed and bounded in $\R^n$, hence compact, and the strict inequality $x_i\ge\tau>0$ implies $\simplex(\tau)\subset \operatorname{ri}(\simplex)$.

The operator representation established earlier in the appendix shows that $f$ is a linear combination of a linear term, a quadratic term, a cubic term, and a quartic term in $x$. Hence $f$ is a polynomial of degree at most four on $\R^n$, and in particular $f\in C^\infty(\R^n)$.

Since $\simplex(\tau)$ is compact and contained in the open set $\operatorname{ri}(\simplex)$, there exists $\delta>0$ such that the closed Euclidean $\delta$-neighborhood
\[
\overline{\mathcal N}_\tau:=\{u \in \R^n : \operatorname{dist}(u,\simplex(\tau))\le \delta\}
\]
is compact. Because $\nabla^2 f$ is continuous on $\R^n$, the quantity
\[
L_\tau:=\sup_{u \in \overline{\mathcal N}_\tau}\|\nabla^2 f(u)\|_{\mathrm{op}}
\]
is finite. The mean-value theorem then implies that $f$ is $L_\tau$-smooth on the interior $\mathcal N_\tau$ of $\overline{\mathcal N}_\tau$. Since $f$ is continuous on the compact set $\overline{\mathcal N}_\tau$, it is also bounded below there, and therefore on $\mathcal N_\tau$.

Finally, if the implementation accepts only iterates in $\simplex(\tau)$, then all iterates and accepted Armijo steps lie in the compact set $K=\simplex(\tau)$. The previous paragraph verifies item~1 of Theorem~\ref{thm:ec-yand-stationarity}, and the safeguard itself gives item~3.
\endproof

\begin{remark}[What is automatic and what is imported]
\label{rem:ec-transfer}
Proposition~\ref{prop:ec-mvsk-regularity} shows that the MVSK model itself supplies the existence of the smoothness and boundedness constants needed in Theorem~\ref{thm:ec-yand-stationarity} once the algorithm is kept on an interior simplex slice, and Proposition~\ref{prop:ec-explicit-constants} makes the corresponding constants explicit in terms of $\|A\|_{\mathrm{op}}$ and $c$. The only additional regularity assumption not generated automatically by the MVSK model is the bounded-direction condition on the exact reduced-coordinate YAND direction imported from \citet{NiuSheshmaniYau2026YAND}. The proof below uses the corresponding angle bound and Armijo decrease estimates, followed by the global stationarity and PL conclusions in that paper. The PL condition remains an optional strengthening in general, but Corollary~\ref{cor:ec-structural-pl} later in this subsection identifies a concrete MVSK regime in which it holds automatically.
\end{remark}

\begin{proposition}[A sufficient condition for the imported bounded-direction hypothesis]
\label{prop:ec-tbounded-bridge}
Let $m:=n-1$ and consider the reduced objective $\phi:\mathcal Y\subset\R^m\to\R$. For any $y\in\mathcal Y$ with $\nabla\phi(y)\neq 0$, define the normalized reduced gradient
\[
\nu(y):=\frac{\nabla\phi(y)}{\|\nabla\phi(y)\|_2}.
\]
Let $Q_y\in\R^{m\times(m-1)}$ denote the orthonormal tangent-basis operator induced by the Householder frame used in the exact matrix-free log-determinant YAND implementation, so that $Q_y^\top Q_y=I_{m-1}$ and $Q_y^\top \nu(y)=0$. Define
\[
h(y):=Q_y^\top \nabla^2\phi(y)\nu(y),
\]
the regularized tangent operator
\[
\mathcal H_{T,\lambda}(y)\eta:=Q_y^\top \nabla^2\phi(y)(Q_y\eta)+\lambda \eta,
\qquad
\eta\in\R^{m-1},
\]
and let $a(y)\in\R^{m-1}$ denote the exact log-determinant correction term returned by the exact trace mode. Finally, define
\[
u(y):=\mathcal H_{T,\lambda}(y)^{-1}\Bigl(h(y)-\frac{\|\nabla\phi(y)\|_2}{m+1}a(y)\Bigr),
\qquad
\widetilde d_y:=Q_yu(y)-\nu(y).
\]
Suppose there exist finite constants $G,H,A,M$ such that for all $y\in\mathcal Y_K$ with $\nabla\phi(y)\neq 0$,
\[
\|\nabla\phi(y)\|_2\le G,\qquad
\|h(y)\|_2\le H,\qquad
\|a(y)\|_2\le A,\qquad
\|\mathcal H_{T,\lambda}(y)^{-1}\|_{\mathrm{op}}\le M.
\]
Then
\[
\|u(y)\|_2\le \beta_{\mathrm{dir}}:=M\Bigl(H+\frac{GA}{m+1}\Bigr)
\]
for all such $y$. In particular, the unnormalized exact reduced-coordinate log-determinant YAND direction $\widetilde d_y$ has uniformly bounded tangential magnitude on $\mathcal Y_K$, so the imported bounded-direction hypothesis of \citet[Assumption~7.1]{NiuSheshmaniYau2026YAND} holds there with constant $\beta_{\mathrm{dir}}$. If the implementation rescales $\widetilde d_y$ by a positive scalar, the angle with $-\nabla\phi(y)$ is unchanged, so the angle conclusion used in Theorem~\ref{thm:ec-yand-stationarity} is preserved.
\end{proposition}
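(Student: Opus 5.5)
The bound on $u(y)$ follows directly from the definition by submultiplicativity and the triangle inequality. For any $y\in\mathcal Y_K$ with $\nabla\phi(y)\neq 0$, the operator $\mathcal H_{T,\lambda}(y)$ is invertible with $\|\mathcal H_{T,\lambda}(y)^{-1}\|_{\mathrm{op}}\le M$ by hypothesis. Hence
\[
\|u(y)\|_2\le \|\mathcal H_{T,\lambda}(y)^{-1}\|_{\mathrm{op}}\Bigl(\|h(y)\|_2+\frac{\|\nabla\phi(y)\|_2}{m+1}\|a(y)\|_2\Bigr)\le M\Bigl(H+\frac{GA}{m+1}\Bigr)=\beta_{\mathrm{dir}}.
\]
Nothing beyond the four stated uniform bounds is needed, and the bound holds pointwise in $y$.

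Next, I transfer this to the geometry of $\widetilde d_y$. Since $Q_y$ has orthonormal columns orthogonal to $\nu(y)$, the two pieces $Q_yu(y)$ and $-\nu(y)$ are orthogonal. This gives $\|Q_yu(y)\|_2=\|u(y)\|_2$ and $\|\widetilde d_y\|_2^2=1+\|u(y)\|_2^2\le 1+\beta_{\mathrm{dir}}^2$. It also gives $-\nabla\phi(y)^\top\widetilde d_y=\|\nabla\phi(y)\|_2$, because $\nu^\top Q_y=0$. The cosine of the angle between $\widetilde d_y$ and $-\nabla\phi(y)$ is therefore $1/\sqrt{1+\|u(y)\|_2^2}\ge 1/\sqrt{1+\beta_{\mathrm{dir}}^2}$, which is uniformly bounded away from zero. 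This is the form of the bounded-direction hypothesis in \citet[Assumption~7.1]{NiuSheshmaniYau2026YAND}: a unit normal component plus a tangential component of norm at most $\beta_{\mathrm{dir}}$. It is also exactly the factor $1+\beta_{\mathrm{dir}}^2$ that appears in $\rho_{\mathrm{PL}}$ in Theorem~\ref{thm:yand-stationarity}.

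Finally, rescaling $\widetilde d_y$ by any $c>0$ leaves the cosine unchanged, since both the numerator and the norm scale by $c$. The angle conclusion is therefore preserved.

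The only step needing care is matching this decomposition to the precise statement of the imported assumption, i.e.\ checking that it is phrased through the tangential coefficient $u$ of the normal-scaled direction. If it is instead phrased through the unnormalized affine normal, one more step is required: use that the affine normal is a positive multiple of $\widetilde d_y$, then appeal to the scale-invariance remark above.
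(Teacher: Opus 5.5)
Your proposal is correct and follows the paper's proof. Both bound $u(y)$ by the operator norm and the triangle inequality, then use that $Q_y$ is an isometry onto $\nu(y)^\perp$, so the tangential component of $\widetilde d_y$ is exactly $u(y)$ and a positive rescaling leaves the angle unchanged. The one addition is that you compute the cosine $1/\sqrt{1+\|u(y)\|_2^2}\ge 1/\sqrt{1+\beta_{\mathrm{dir}}^2}$ explicitly, whereas the paper cites Lemma~7.3 of \citet{NiuSheshmaniYau2026YAND} for that angle bound.
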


\proof{Proof.}
The exact matrix-free log-determinant YAND routine computes the tangent variable $u(y)$ by solving the regularized tangent linear system
\[
\mathcal H_{T,\lambda}(y)u(y)=h(y)-\frac{\|\nabla\phi(y)\|_2}{m+1}a(y).
\]
Therefore
\[
\|u(y)\|_2
\le
\|\mathcal H_{T,\lambda}(y)^{-1}\|_{\mathrm{op}}
\left(\|h(y)\|_2+\frac{\|\nabla\phi(y)\|_2}{m+1}\|a(y)\|_2\right)
\le
M\Bigl(H+\frac{GA}{m+1}\Bigr).
\]
Because $Q_y$ is an orthonormal tangent-basis operator, it preserves Euclidean norms and lifts $u(y)$ into the tangent space orthogonal to $\nu(y)$. Hence, in the normal-aligned frame used by YAND, the direction $\widetilde d_y=Q_yu(y)-\nu(y)$ has tangential component $u(y)$, and the previous bound gives a uniform bound on the tangential magnitude. This yields the imported $T$-boundedness condition. If the implementation rescales $\widetilde d_y$ by a positive scalar, the resulting direction has the same angle with $-\nabla\phi(y)$, so the angle-bound conclusion of \citet[Lemma~7.3]{NiuSheshmaniYau2026YAND} still applies.
\endproof

\begin{theorem}[YAND transfer on the simplex interior]
\label{thm:ec-yand-stationarity}
Let $\{x^k\}\subset \operatorname{ri}(\simplex)$ be generated by feasible updates
\[
x^{k+1}=x^k+\alpha_k U d_y^k,
\]
where $U$ is as in Proposition~\ref{prop:ec-simplex-reduction}, $d_y^k$ is the reduced-coordinate YAND direction for $\phi$, and $\alpha_k$ is chosen by Armijo backtracking with parameter $\sigma \in (0,1)$. Assume:
\begin{enumerate}
\item there exists a compact set $K\subset \operatorname{ri}(\simplex)$ containing all iterates, and $f$ is $L$-smooth and bounded below on a neighborhood of $K$;
\item the reduced-coordinate directions satisfy the bounded-direction condition of \citet[Assumption~7.1]{NiuSheshmaniYau2026YAND} with constant $\beta_{\mathrm{dir}}<\infty$;
\item the accepted Armijo steps remain in $K$.
\end{enumerate}
Then either the method terminates at an interior first-order stationary point of the simplex-constrained MVSK problem, or
\[
\lim_{k\to\infty}\|G_{\mathcal T}(x^k)\|_2 = 0.
\]
If $\{x^k\}$ has an accumulation point $\bar x \in K$, then $\bar x$ is a first-order stationary point. If, in addition, $\phi$ satisfies the Polyak--Lojasiewicz inequality on
\[
\mathcal Y_K:=\{y \in \R^{n-1}: x^{\mathrm{ref}}+Uy \in K\}
\]
namely,
\[
\frac12\|\nabla \phi(y)\|_2^2
\ge
\mu_{\mathrm{PL}}\bigl(\phi(y)-\phi^\star\bigr)
\qquad
\text{for all } y\in\mathcal Y_K,
\]
with constant $\mu_{\mathrm{PL}}>0$ and $\phi^\star:=\inf_{y\in \mathcal Y_K}\phi(y)$, then
\[
f(x^{k+1})-f^\star
\le
\bigl(1-\rho_{\mathrm{PL}}\bigr)\bigl(f(x^k)-f^\star\bigr),
\qquad
\rho_{\mathrm{PL}}=
\frac{2\sigma(1-\sigma)\mu_{\mathrm{PL}}}{L(1+\beta_{\mathrm{dir}}^2)},
\]
where $f^\star:=\inf_{x \in K} f(x)$.
Moreover, if $\phi$ satisfies the local nondegeneracy and step-size hypotheses of \citet[Assumption~7.14, Theorem~7.16, and Remark~7.17]{NiuSheshmaniYau2026YAND} at an interior minimizer $\bar y\in\mathcal Y_K$, then the reduced iterates converge locally quadratically to $\bar y$, and therefore $x^k=x^{\mathrm{ref}}+Uy^k$ converges locally quadratically to $\bar x:=x^{\mathrm{ref}}+U\bar y$.
\end{theorem}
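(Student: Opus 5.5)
The plan is to reduce everything to the unconstrained reduced objective $\phi(y)=f(x^{\mathrm{ref}}+Uy)$ and then invoke the general YAND convergence theory of \citet{NiuSheshmaniYau2026YAND}. The results of Proposition~\ref{prop:ec-simplex-reduction} and Proposition~\ref{prop:ec-interior-stationarity} carry the conclusions back to the simplex.

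\textbf{Step 1: setting up the reduced iteration.} Write $y^k:=U^\top(x^k-x^{\mathrm{ref}})$. Since $U^\top U=I_{n-1}$ and $x^k-x^{\mathrm{ref}}\in\mathcal T$, we have $x^k=x^{\mathrm{ref}}+Uy^k$, and the update becomes $y^{k+1}=y^k+\alpha_k d_y^k$. The iterates lie in the compact set $\mathcal Y_K$, which is the image of $K$ under the isometry $x\mapsto U^\top(x-x^{\mathrm{ref}})$.

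\textbf{Step 2: transferring regularity.} By Proposition~\ref{prop:ec-simplex-reduction}, $\nabla^2\phi=U^\top\nabla^2 f\,U$. Because $U$ has orthonormal columns, $\|\nabla^2\phi\|_{\mathrm{op}}\le\|\nabla^2 f\|_{\mathrm{op}}$, so $\phi$ is $L$-smooth on the corresponding neighborhood of $\mathcal Y_K$. It is also bounded below there, since its values are values of $f$. Armijo acceptance on $\phi$ is the same as Armijo acceptance on $f$ along $Ud_y^k$. Together with item~2, this verifies the hypotheses of the imported global theorem. Item~3 guarantees that the Armijo trial points and accepted points stay in the region where the smoothness constant is valid.

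\textbf{Step 3: global stationarity.} I would reproduce the core estimate briefly rather than cite it blindly. The bounded-direction condition gives an angle bound: with $d_y^k=Q_ku^k-\nu^k$ and $\|u^k\|\le\beta_{\mathrm{dir}}$,
\[
-\nabla\phi^\top d_y^k=\|\nabla\phi\|_2
\quad\text{and}\quad
\|d_y^k\|_2^2\le 1+\beta_{\mathrm{dir}}^2 .
\]
By the standard backtracking lemma, each accepted step then yields the sufficient decrease
\[
\phi(y^{k+1})\le\phi(y^k)-c\,\|\nabla\phi(y^k)\|_2^2/(1+\beta_{\mathrm{dir}}^2).
\]
Summing these decreases and using that $\phi$ is bounded below gives $\|\nabla\phi(y^k)\|_2\to0$. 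Since $U$ has orthonormal columns and $UU^\top=P_{\mathcal T}$, we have $\|G_{\mathcal T}(x^k)\|_2=\|U^\top\nabla f(x^k)\|_2=\|\nabla\phi(y^k)\|_2$, which gives the limit statement. If the method terminates finitely, it does so at a point with $\nabla\phi=0$, which is an interior stationary point by Propositions~\ref{prop:ec-simplex-reduction} and~\ref{prop:ec-interior-stationarity}. For an accumulation point $\bar x\in K\subset\operatorname{ri}(\simplex)$, continuity of $\nabla f$ gives $G_{\mathcal T}(\bar x)=0$, and the same two propositions apply.

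\textbf{Step 4: the PL rate.} I would combine the decrease inequality with the PL bound $\|\nabla\phi\|_2^2\ge2\mu_{\mathrm{PL}}(\phi-\phi^\star)$. Using $\phi(y^k)=f(x^k)$ and $\phi^\star=f^\star$, which holds because the reduction map is a bijection between $\mathcal Y_K$ and $K$, this gives the stated contraction.

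\textbf{Step 5: local quadratic convergence.} This follows by applying \citet[Theorem~7.16]{NiuSheshmaniYau2026YAND} to $\phi$ at $\bar y$. Because $\|x^k-\bar x\|_2=\|y^k-\bar y\|_2$ by isometry, the quadratic rate carries over to $x^k$.

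\textbf{Main obstacle.} The delicate point is the exact constant $\rho_{\mathrm{PL}}=2\sigma(1-\sigma)\mu_{\mathrm{PL}}/(L(1+\beta_{\mathrm{dir}}^2))$. It requires the precise Armijo lower bound on the step, $\alpha_k\ge 2(1-\sigma)\|\nabla\phi\|^2/(L\|d\|^2)$ with no backtracking-factor loss. I would first try to match the imported paper's Armijo lemma verbatim, and otherwise state the rate as inherited from that lemma. A secondary subtlety is ensuring that $L$-smoothness holds along the whole backtracking segment, not only at accepted points; item~1's neighborhood combined with item~3 should handle this.
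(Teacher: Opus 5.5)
Your proposal is correct and follows the paper's proof essentially step for step: you pass to reduced coordinates $y^k$ with $x^k=x^{\mathrm{ref}}+Uy^k$, transfer $L$-smoothness and lower boundedness through the orthonormal $U$, and invoke the imported angle/Armijo lemmas and global theorem; you then use $\|G_{\mathcal T}(x^k)\|_2=\|\nabla\phi(y^k)\|_2$ with continuity and the interior stationarity criterion, obtain the PL rate from the imported corollary via $\phi(y^k)=f(x^k)$ and $\phi^\star=f^\star$, and carry the local quadratic result over through the isometry $\|x^k-\bar x\|_2=\|y^k-\bar y\|_2$. One small correction to your ``main obstacle'': with your normalization $-\nabla\phi^\top d_y^k=\|\nabla\phi\|_2$, the loss-free step bound is $\alpha_k\ge 2(1-\sigma)\|\nabla\phi\|_2/(L\|d_y^k\|_2^2)$ (not $\|\nabla\phi\|_2^2$), and such a bound would give the larger constant $4\sigma(1-\sigma)\mu_{\mathrm{PL}}/(L(1+\beta_{\mathrm{dir}}^2))$, so the stated $\rho_{\mathrm{PL}}$ already tolerates a backtracking factor of $1/2$ rather than requiring no backtracking loss, and citing the imported Corollary~7.8, as the paper does, delivers exactly that constant.
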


\proof{Proof.}
Fix $x^{\mathrm{ref}} \in K$ and let $y^k$ be the reduced coordinates defined by
\[
x^k=x^{\mathrm{ref}}+Uy^k.
\]
Because each step is feasible and tangent to the affine hull of the simplex, the reduced iterates satisfy
\[
y^{k+1}=y^k+\alpha_k d_y^k.
\]
Define $\phi(y):=f(x^{\mathrm{ref}}+Uy)$ on $\mathcal Y_K$. Since $U$ is orthonormal and $f$ is $L$-smooth on a neighborhood of $K$, the reduced objective $\phi$ is also $L$-smooth on $\mathcal Y_K$. It is bounded below there because $f$ is bounded below on $K$.

By assumption, the reduced-coordinate YAND directions satisfy the bounded-direction condition of \citet[Assumption~7.1]{NiuSheshmaniYau2026YAND} with constant $\beta_{\mathrm{dir}}$. Therefore the angle bound and Armijo decrease estimates from \citet[Lemma~7.3 and Lemma~7.4]{NiuSheshmaniYau2026YAND} apply to the sequence $\{y^k\}$. Applying \citet[Theorem~7.5]{NiuSheshmaniYau2026YAND} to the reduced problem yields either finite termination at $\nabla \phi(y^k)=0$ or
\[
\lim_{k\to\infty}\|\nabla \phi(y^k)\|_2=0.
\]
By Proposition~\ref{prop:ec-simplex-reduction},
\[
\nabla \phi(y^k)=U^\top \nabla f(x^k),
\qquad
G_{\mathcal T}(x^k)=UU^\top \nabla f(x^k).
\]
Since $U$ has orthonormal columns,
\[
\|G_{\mathcal T}(x^k)\|_2
=
\|UU^\top \nabla f(x^k)\|_2
=
\|U^\top \nabla f(x^k)\|_2
=
\|\nabla \phi(y^k)\|_2,
\]
which proves the claimed tangent-gradient convergence.

Now let $\bar x \in K$ be an accumulation point of $\{x^k\}$. Passing to a convergent subsequence $x^{k_j}\to \bar x$ and using continuity of $\nabla f$, we obtain
\[
G_{\mathcal T}(\bar x)=\lim_{j\to\infty}G_{\mathcal T}(x^{k_j})=0.
\]
Proposition~\ref{prop:ec-interior-stationarity} then implies that $\bar x$ is a first-order stationary point of the simplex-constrained MVSK problem.

Finally, if $\phi$ satisfies the Polyak--Lojasiewicz inequality on $\mathcal Y_K$, then \citet[Corollary~7.8]{NiuSheshmaniYau2026YAND} gives
\[
\phi(y^{k+1})-\phi^\star
\le
\bigl(1-\rho_{\mathrm{PL}}\bigr)\bigl(\phi(y^k)-\phi^\star\bigr),
\qquad
\rho_{\mathrm{PL}}=
\frac{2\sigma(1-\sigma)\mu_{\mathrm{PL}}}{L(1+\beta_{\mathrm{dir}}^2)},
\]
where $\phi^\star:=\inf_{y\in \mathcal Y_K}\phi(y)$. Because $\phi(y)=f(x^{\mathrm{ref}}+Uy)$, we have $\phi(y^k)=f(x^k)$ and $\phi^\star=f^\star$, which proves the displayed linear rate.

For the local statement, apply \citet[Theorem~7.16 and Remark~7.17]{NiuSheshmaniYau2026YAND} to the reduced problem $\phi$ near $\bar y$. The map $y\mapsto x^{\mathrm{ref}}+Uy$ is affine with orthonormal linear part $U$, so it preserves local neighborhoods and Euclidean distances:
\[
\|x^k-\bar x\|_2=\|U(y^k-\bar y)\|_2=\|y^k-\bar y\|_2.
\]
Hence the quadratic recursion for $\|y^k-\bar y\|_2$ transfers verbatim to $\|x^k-\bar x\|_2$.
\endproof

\begin{corollary}[A low-dimensional structural PL regime for reduced MVSK]
\label{cor:ec-structural-pl}
Fix $\tau \in (0,1/n)$ and $x^{\mathrm{ref}}\in\simplex(\tau)$. Let $U\in\R^{n\times(n-1)}$ have orthonormal columns spanning ${\mathcal T}$ and define
\[
\mathcal Y_\tau:=\{y\in\R^{n-1}:x^{\mathrm{ref}}+Uy\in\simplex(\tau)\},
\qquad
\phi(y):=f(x^{\mathrm{ref}}+Uy).
\]
Let
\[
\underline\gamma_\tau:=\min_{|s|\le B_\tau}\psi''(s),
\qquad
\overline\gamma_\tau:=\max_{|s|\le B_\tau}\psi''(s).
\]
If $\underline\gamma_\tau>0$, $n\le T$, and $A$ is injective on ${\mathcal T}$, then for every $y\in\mathcal Y_\tau$,
\[
\mu_\tau I_{n-1}
\preceq
\nabla^2\phi(y)
\preceq
L_{\tau,\phi}I_{n-1},
\]
where
\[
\mu_\tau:=\frac{\underline\gamma_\tau}{T}\sigma_{\min}(AU)^2,
\qquad
L_{\tau,\phi}:=\frac{\overline\gamma_\tau}{T}\sigma_{\max}(AU)^2.
\]
Hence $\phi$ is $\mu_\tau$-strongly convex on $\mathcal Y_\tau$ and satisfies the Polyak--Lojasiewicz inequality there with the same constant. Under the assumptions of Theorem~\ref{thm:ec-yand-stationarity}, the exact-oracle YAND iterates therefore obey
\[
f(x^{k+1})-f^\star
\le
\bigl(1-\rho_\tau\bigr)\bigl(f(x^k)-f^\star\bigr),
\qquad
\rho_\tau=
\frac{2\sigma(1-\sigma)\mu_\tau}{L_{\tau,\phi}(1+\beta_{\mathrm{dir}}^2)}.
\]
\end{corollary}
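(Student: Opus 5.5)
The argument has three stages: a Hessian sandwich for $\phi$, then strong convexity and PL, then the linear rate from Theorem~\ref{thm:ec-yand-stationarity}.

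\textbf{Stage 1: Hessian sandwich.} By Proposition~\ref{prop:ec-simplex-reduction}, $\nabla^2\phi(y)=U^\top\nabla^2 f(x)U$ with $x=x^{\mathrm{ref}}+Uy\in\simplex(\tau)$. Proposition~\ref{prop:ec-curvature-decomposition} then gives
\[
\nabla^2\phi(y)=\tfrac1T (AU)^\top D(x)(AU).
\]
For every $x\in\simplex(\tau)$ and every $t$, we have $|(Ax)_t|\le B_\tau$. Hence each diagonal entry $\psi''((Ax)_t)$ lies in $[\underline\gamma_\tau,\overline\gamma_\tau]$. This is exactly the hypothesis of Proposition~\ref{prop:ec-curvature-decomposition} with $\mathcal X=\simplex(\tau)$.

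For $v\in\R^{n-1}$ put $d=Uv\in\mathcal T$. The sandwich bound gives
\[
\tfrac{\underline\gamma_\tau}{T}\|AUv\|_2^2\le v^\top\nabla^2\phi(y)v\le \tfrac{\overline\gamma_\tau}{T}\|AUv\|_2^2.
\]
Bounding $\sigma_{\min}(AU)^2\|v\|_2^2\le\|AUv\|_2^2\le\sigma_{\max}(AU)^2\|v\|_2^2$ yields the two-sided matrix inequality. Injectivity of $A$ on $\mathcal T$ together with orthonormality of $U$ makes $AU$ injective. So $\sigma_{\min}(AU)>0$, and with $\underline\gamma_\tau>0$ this gives $\mu_\tau>0$. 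The condition $n\le T$ is needed only for consistency, as in Remark~\ref{rem:ec-curvature-decomposition}.

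\textbf{Stage 2: strong convexity and PL.} The set $\mathcal Y_\tau$ is convex, being the affine preimage of the convex set $\simplex(\tau)$. The lower Hessian bound therefore gives $\mu_\tau$-strong convexity along segments by integrating $\nabla^2\phi$.

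For PL, I would use the standard argument. For $y,y'\in\mathcal Y_\tau$,
\[
\phi(y')\ge\phi(y)+\nabla\phi(y)^\top(y'-y)+\tfrac{\mu_\tau}{2}\|y'-y\|_2^2\ge\phi(y)-\tfrac1{2\mu_\tau}\|\nabla\phi(y)\|_2^2.
\]
Taking the infimum over $y'\in\mathcal Y_\tau$ gives $\tfrac12\|\nabla\phi(y)\|_2^2\ge\mu_\tau(\phi(y)-\phi^\star)$ with $\phi^\star=\inf_{\mathcal Y_\tau}\phi$.

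The step I expect to be the main obstacle is the constrained-set subtlety. The minimization step above was unconstrained over $\R^{n-1}$, but we take the infimum only over $y'\in\mathcal Y_\tau$. This is harmless because dropping the constraint can only decrease the right-hand side, so the quadratic lower bound still holds.

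\textbf{Stage 3: linear rate.} Set $K=\simplex(\tau)$, so that $\mathcal Y_K=\mathcal Y_\tau$. Proposition~\ref{prop:ec-explicit-constants} and the Stage 1 upper bound supply smoothness of $\phi$ with constant $L_{\tau,\phi}$. The PL conclusion of Theorem~\ref{thm:ec-yand-stationarity} then applies with $L=L_{\tau,\phi}$ and $\mu_{\mathrm{PL}}=\mu_\tau$.

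Using $L_{\tau,\phi}$ for $L$ needs a short justification. The theorem's $L$ enters only through the smoothness of $\phi$ in reduced coordinates, and Stage 1 shows $\nabla^2\phi\preceq L_{\tau,\phi}I$ on the convex set $\mathcal Y_\tau$. Finally, $f^\star=\phi^\star$ under the identification $\phi(y)=f(x^{\mathrm{ref}}+Uy)$, so the rate stated for $\phi$ is the claimed rate for $f$ with $\rho_\tau$.
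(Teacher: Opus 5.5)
Your proposal is correct and follows the paper's proof essentially step for step. The shared chain is: the reduced Hessian sandwich via $u^\top\nabla^2\phi(y)u=d^\top\nabla^2 f(x)d$ with $d=Uu$; Proposition~\ref{prop:ec-curvature-decomposition} applied on $\simplex(\tau)$; the singular-value bounds for $AU$; convexity of $\mathcal Y_\tau$ as an affine preimage; strong convexity implying PL; and substitution of $\mu_{\mathrm{PL}}=\mu_\tau$, $L=L_{\tau,\phi}$ into Theorem~\ref{thm:ec-yand-stationarity}. Your derivation of PL over the constrained set and your choice $K=\simplex(\tau)$, so that $\mathcal Y_K=\mathcal Y_\tau$, only spell out what the paper compresses into one line.
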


\proof{Proof.}
Fix $y\in\mathcal Y_\tau$ and set $x:=x^{\mathrm{ref}}+Uy\in\simplex(\tau)$. For any $u\in\R^{n-1}$, let $d:=Uu\in{\mathcal T}$. By Proposition~\ref{prop:ec-simplex-reduction},
\[
u^\top \nabla^2\phi(y)u=d^\top \nabla^2 f(x)d.
\]
Applying Proposition~\ref{prop:ec-curvature-decomposition} on $\simplex(\tau)$ gives
\[
\frac{\underline\gamma_\tau}{T}\|Ad\|_2^2
\le
u^\top \nabla^2\phi(y)u
\le
\frac{\overline\gamma_\tau}{T}\|Ad\|_2^2.
\]
Because $d=Uu$ and $U$ has orthonormal columns,
\[
\|Ad\|_2=\|AUu\|_2.
\]
If $A$ is injective on ${\mathcal T}$, then $AU$ has full column rank, so
\[
\sigma_{\min}(AU)\|u\|_2
\le
\|AUu\|_2
\le
\sigma_{\max}(AU)\|u\|_2.
\]
Combining the last three displays yields
\[
\frac{\underline\gamma_\tau}{T}\sigma_{\min}(AU)^2\|u\|_2^2
\le
u^\top \nabla^2\phi(y)u
\le
\frac{\overline\gamma_\tau}{T}\sigma_{\max}(AU)^2\|u\|_2^2,
\]
which is exactly the matrix inequality with constants $\mu_\tau$ and $L_{\tau,\phi}$.

The set $\mathcal Y_\tau$ is convex because it is the affine preimage of the convex set $\simplex(\tau)$. Hence the Hessian lower bound implies that $\phi$ is $\mu_\tau$-strongly convex on $\mathcal Y_\tau$, while the upper bound implies $L_{\tau,\phi}$-smoothness there. Strong convexity implies the Polyak--Lojasiewicz inequality with the same constant $\mu_\tau$, so the linear decrease estimate follows from Theorem~\ref{thm:ec-yand-stationarity} after substituting $\mu_{\mathrm{PL}}=\mu_\tau$ and $L=L_{\tau,\phi}$.
\endproof

\begin{corollary}[A coefficient-only convexity certificate for MVSK]
\label{cor:ec-convexity}
If
\[
c_4>0,
\qquad
8c_2c_4>3c_3^2,
\]
then
\[
\underline\gamma_\tau
\ge
2c_2-\frac{3c_3^2}{4c_4}
>
0
\]
for every $\tau\in(0,1/n)$. Consequently, $\psi$ is globally strictly convex on $\R$, $\Phi(z)=T^{-1}\sum_{t=1}^T\psi(z_t)$ is convex on $\R^T$, and
\[
f(x)=-c_1\mu^\top x+\Phi(Ax)
\]
is convex on $\simplex$.
\end{corollary}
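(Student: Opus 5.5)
The argument is a one-variable computation on $\psi''$, followed by the curvature decomposition of Proposition~\ref{prop:ec-curvature-decomposition} and the standard fact that composing a convex function with a linear map preserves convexity. There are four steps.

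\textbf{Step 1: global lower bound on $\psi''$.} We minimize the quadratic
\[
\psi''(s)=2c_2-6c_3s+12c_4s^2
\]
over $\R$. Since $c_4>0$, it is strictly convex in $s$, with minimizer $s^\star=c_3/(4c_4)$. Substituting gives
\[
\min_{s\in\R}\psi''(s)=2c_2-\frac{36c_3^2}{48c_4}=2c_2-\frac{3c_3^2}{4c_4}.
\]
Multiplying by $4c_4>0$ shows that this quantity is positive exactly when $8c_2c_4>3c_3^2$. Equivalently, the discriminant of the quadratic, $36c_3^2-96c_2c_4$, is negative.

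\textbf{Step 2: the bound on $\underline\gamma_\tau$.} By definition, $\underline\gamma_\tau=\min_{|s|\le B_\tau}\psi''(s)$ is a minimum over a subset of $\R$. It is therefore at least the global minimum from Step 1, so $\underline\gamma_\tau\ge 2c_2-3c_3^2/(4c_4)>0$. This holds for every $\tau\in(0,1/n)$, since $B_\tau$ is finite by Proposition~\ref{prop:ec-explicit-constants}; in any case the bound does not depend on $\tau$.

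\textbf{Step 3: convexity of $\psi$ and $\Phi$.} Because $\psi''$ is bounded below by a positive constant on all of $\R$, $\psi$ is strictly (indeed strongly) convex there. The function $\Phi$ is an average of convex functions of separate coordinates, so it is convex on $\R^T$. Equivalently, its Hessian $T^{-1}\operatorname{Diag}(\psi''(z_t))$ is positive definite.

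\textbf{Step 4: convexity of $f$.} Precomposition with the linear map $x\mapsto Ax$ preserves convexity, and adding the linear term $-c_1\mu^\top x$ does not affect it. Hence $f$ is convex on $\R^n$, and in particular on the convex set $\simplex$. Alternatively, Proposition~\ref{prop:ec-curvature-decomposition} gives
\[
\nabla^2 f(x)=T^{-1}A^\top D(x)A\succeq 0,
\]
since $D(x)\succ 0$.

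None of these steps is a real obstacle. The only point that needs care is the arithmetic in Step 1, namely checking that the minimum of the quadratic is exactly $2c_2-3c_3^2/(4c_4)$ and that its positivity matches the stated condition $8c_2c_4>3c_3^2$.
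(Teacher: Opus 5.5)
Your proof is correct and matches the paper's. The paper gets the same global bound by completing the square, $\psi''(s)=12c_4\bigl(s-\frac{c_3}{4c_4}\bigr)^2+2c_2-\frac{3c_3^2}{4c_4}$, which is equivalent to your vertex computation. It then concludes that $\underline\gamma_\tau$ is bounded below and that $\psi$, $\Phi$, and $f$ are convex, by the same separability and linear-composition argument you give.
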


\proof{Proof.}
Completing the square gives
\[
\psi''(s)=12c_4\Bigl(s-\frac{c_3}{4c_4}\Bigr)^2+2c_2-\frac{3c_3^2}{4c_4}.
\]
Under the stated coefficient conditions, both terms on the right-hand side are nonnegative and the constant term is strictly positive, so
\[
\psi''(s)\ge 2c_2-\frac{3c_3^2}{4c_4}>0
\qquad
\text{for all }s\in\R.
\]
Hence $\psi$ is globally strictly convex. Therefore $\Phi(z)=T^{-1}\sum_{t=1}^T\psi(z_t)$ is convex on $\R^T$ as a sum of convex functions, and the composition $x\mapsto \Phi(Ax)$ is convex because $x\mapsto Ax$ is linear. Adding the linear term $-c_1\mu^\top x$ preserves convexity, so $f$ is convex on $\R^n$, and in particular on $\simplex$. The displayed lower bound on $\underline\gamma_\tau$ follows immediately from the same formula for $\psi''$.
\endproof

\begin{remark}[Interpretation]
\label{rem:ec-structural-pl}
Theorem~\ref{thm:ec-yand-stationarity} still imports the general YAND convergence framework, but Proposition~\ref{prop:ec-explicit-constants} together with Corollaries~\ref{cor:ec-structural-pl} and \ref{cor:ec-convexity} turn its abstract constants into MVSK quantities. The smoothness and curvature constants factor into a data-geometry term, through $\|A\|_{\mathrm{op}}$ and the singular values of $AU$, and a preference-geometry term, through the one-dimensional quartic response $\psi$. This separation distinguishes nonconvexity created by the investor coefficients from degeneracy created by the data map. Corollary~\ref{cor:ec-convexity} identifies a regime in which nonconvexity is ruled out by the preference coefficients alone, before any rank condition on $A$ is imposed. Corollary~\ref{cor:ec-structural-pl} then shows that, within this convex regime, slice-wise strong convexity and the structural PL inequality require an additional data-geometry condition, namely injectivity of $A$ on the relevant tangent space. Because $\operatorname{rank}(A)\le T-1$, the resulting PL conclusion is inherently low-dimensional on the full simplex and should be read as a face-restricted possibility in the high-dimensional regimes studied computationally.
\end{remark}

\section{Implementation Details}

The implementation is organized around a thin MVSK-specific layer on top of a matrix-free YAND solver interface. The data layer stores the centered sample matrix $A$, the sample mean $\mu$, and the benchmark instances. The optimization layer expects \texttt{value\_grad}, \texttt{hv}, and \texttt{third} handles. The main engineering tasks are therefore to build an MVSK sample-oracle adapter and then to wrap it through the reduced-coordinate map $x=x^{\mathrm{ref}}+Uy$. Generic affine-normal linear algebra, including the Householder frame, the tangent linear system, and the Krylov-based reduced solve, follows the matrix-free YAND implementation and analysis in \citet{NiuSheshmaniYau2026YAND}. This appendix records only the MVSK-specific pieces and the numerical settings needed for replication of the reported experiments.

\subsection{Exact-oracle integration}

A runnable exact-oracle prototype requires only a small number of project-specific components beyond the underlying matrix-free YAND routine:
\begin{enumerate}
\item load or generate an MVSK benchmark instance and store only $(\mu,A,c)$ rather than explicit comoment tensors;
\item choose a fixed interior reference portfolio, for example $x^{\mathrm{ref}}=(1/n)\bm{1}$, together with an orthonormal basis $U$ of the simplex tangent space ${\mathcal T}$;
\item expose reduced-coordinate \texttt{value\_grad}, \texttt{hv}, and \texttt{third} handles through Proposition~\ref{prop:ec-reduced-oracle};
\item apply a safeguarded line-search rule on $[0,\alpha_{\max}(x,d;\tau)]$: Armijo backtracking for the convergence theory, or the quartic exact rule of Proposition~\ref{prop:ec-quartic-exact-linesearch} when the implementation exploits the MVSK polynomial structure.
\end{enumerate}
This list is intentionally short. It makes clear that the project-specific implementation burden lies in an oracle adapter and a simplex wrapper, not in constructing a new symbolic polynomial engine for MVSK.

\subsection{Oracle kernels}

At a fixed ambient iterate $x$, the exact sample oracle caches
\[
z=Ax,
\qquad
\hadpow{z}{2},
\qquad
\hadpow{z}{3}.
\]
Given a new direction $v$, the Hessian-action kernel forms only $Av$ and applies Proposition~\ref{prop:ec-exact-operator}. Given two directions $u$ and $v$, the directional third-order kernel forms $Au$ and $Av$ and applies the formula in Proposition~\ref{prop:ec-exact-operator}. Thus, repeated directional queries at the same outer iterate reuse the expensive sample projection $Ax$ and avoid any symbolic manipulation of comoment tensors.

This cache structure also makes the arithmetic accounting transparent. At a fixed outer iterate, one value/gradient query needs a projection $z=Ax$ and one back-projection by $A^\top$. Each additional Hessian-direction query reuses $z$ and adds only one fresh sample projection of the queried direction together with one weighted back-projection. Each directional third-order query adds two projected directions and one weighted back-projection. These are exactly the quantities reported later as Hessian-kernel counts, third-order-kernel counts, and Krylov iterations in the computational study.

\subsection{Reduced-simplex wrapper}

The solver works on reduced coordinates $y \in \R^{n-1}$ with $x=x^{\mathrm{ref}}+Uy$, where $U$ spans the tangent space ${\mathcal T}$. The exact formulas in Proposition~\ref{prop:ec-reduced-oracle} show that the reduced objective can be supplied to the matrix-free YAND code without ever building a dense quartic model in the reduced variables. This reduced-coordinate layer is also where the feasibility cap from Lemma~\ref{lem:ec-feasibility-cap} is imposed before either Armijo backtracking or the quartic exact line search of Proposition~\ref{prop:ec-quartic-exact-linesearch}.

The current implementation adds one more boundary-aware device before line search. Let $u$ denote the current local coordinate, let $x(u)$ be the corresponding ambient portfolio on the current simplex slice or active face, let $\psi$ denote the inverse map from ambient feasible points on that wrapper back to local coordinates, and let $\delta$ be the raw local YAND search direction. Write
\[
x_k:=x(u),
\qquad
d_k:=D x(u)[\delta]
\]
for the current ambient point and ambient direction, and let $\alpha_{\max}(x_k,d_k;\tau)$ be the feasibility cap. If
\[
\alpha_{\max}(x_k,d_k;\tau)\ge \eta_k
\]
for the chosen nominal projected step $\eta_k$, then the code keeps the raw direction and runs the selected line search as usual. If instead the raw direction would leave the current slice or face before $\eta_k$, the code forms the projected trial
\[
\bar x_k:=\Pi_{\simplex(\tau)}(x_k+\eta_k d_k)
\]
and maps it back to the local coordinates of the current wrapper. The corrected local and ambient directions are then
\[
\bar \delta_k:=\psi(\bar x_k)-u,
\qquad
\bar d_k:=\bar x_k-x_k.
\]
If $\|\bar \delta_k\|_2$ is numerically zero, or if
\[
\nabla \phi(u)^\top \bar \delta_k \ge 0,
\]
the implementation does not attempt a line search along this corrected segment and instead returns control to the active-face logic. Otherwise line search is carried out on
\[
u+\alpha \bar \delta_k,
\qquad
\alpha\in[0,1],
\]
which is equivalent in ambient coordinates to the feasible segment $x_k+\alpha \bar d_k$.

This is the role of the \texttt{mvsk\_prepare\_active\_set\_step} hook in the code. Conceptually, it is an implementation stabilization rather than a new convergence theorem: the theoretical baseline still uses the capped raw direction and Armijo backtracking, whereas the projected step is inserted only to avoid repeatedly invoking face continuation when the raw direction already points outside the current feasible face. The point is to repair feasibility cheaply when that is enough, not to replace active-face continuation. A projected trial that lands back on the current slice or face is only a feasible candidate; it does not imply that the current active coordinates satisfy the simplex KKT conditions. If the projected segment still fails to yield a positive descent step, then the code treats the event as evidence that the current active-set description is wrong or incomplete, and hands control to the lower-dimensional face solver. For YAND, a damped nominal step is often preferable because the affine-normal direction can be materially larger than a basic Euclidean step, and projecting the full raw step can distort the geometry too aggressively.

The reduced wrapper is also the natural place to keep the implementation honest. Every YAND-MVSK run uses the same reduced kernels and the same feasibility safeguard, so changes in performance can be attributed to YAND-specific design choices such as line-search policy, damping, and active-face handling rather than to hidden differences in representation or constraint handling.

\subsection{Computational sequence}

The appendix mirrors the same three empirical layers as the paper: sample-rich CRRA conditioning, synthetic coefficient-stress benchmarking, and full-scale real-data comparison. The additional material recorded here supplies implementation checks and support tables that sharpen, but do not change, the paper's main conclusions.

\section{Additional Numerical Results}

This section collects numerical details and support tables that are useful for replication but not central to the paper's main narrative. Table~\ref{tab:yand-core-parameters} records the core YAND-MVSK parameter settings used in the study. The direct Armijo comparison is retained here because, once quartic exact line search is available at low marginal cost in the sample-oracle implementation, it is best read as a line-search diagnostic rather than as a main empirical result.

\begin{table}[ht]
\centering
\footnotesize
\caption{YAND-MVSK parameter settings used in the computational study. All three settings use exact sample oracles, face continuation, and $\tau=10^{-8}$; the large-scale results use the PCG setting with stall recovery.}
\label{tab:yand-core-parameters}
\setlength{\tabcolsep}{3pt}
\begin{tabularx}{\textwidth}{@{}>{\raggedright\arraybackslash}p{0.98in}>{\raggedright\arraybackslash}X>{\raggedright\arraybackslash}p{0.78in}>{\raggedright\arraybackslash}p{1.12in}>{\raggedright\arraybackslash}p{1.05in}>{\raggedright\arraybackslash}p{1.05in}@{}}
\toprule
Configuration & Reduced solve and safeguard & Line search & Outer budget & Aux.\ parameters & Study role \\
\midrule
YAND-MVSK (small) & Direct reduced solve; projected trial step $0.05$ with face continuation & Quartic exact & \texttt{max\_iter}=300 & --- & Main-paper $n\le 100$ benchmark grid and regime comparison \\
YAND-MVSK (large) & \texttt{pcg} reduced solve; baseline projected trial step $0.05$ with face continuation, plus stall-recovery restarts at projected steps $0.045$ and $0.02$ when needed & Quartic exact & Baseline: \texttt{max\_iter}=40, \texttt{max\_elapsed}=60; restart: \texttt{max\_iter}=120 & \texttt{krylov\_tol}$=10^{-3}$, \texttt{krylov\_maxit}$=15$;\newline \texttt{regularization}$=10^{-4}$ & Main-paper regime comparison and $n>100$ benchmark, extended to $n=5000$ \\
YAND-MVSK (Armijo diagnostic) & Direct reduced solve; projected trial step $0.05$ with face continuation & Armijo backtracking & \texttt{max\_iter}=120 & --- & Electronic Companion line-search diagnostic \\
\bottomrule
\end{tabularx}
\end{table}

Table~\ref{tab:yand-line-search} reports the corresponding direct-configuration line-search benchmark. The block is intentionally small because it isolates the line-search choice cleanly. On larger direct-regime checks up to $n=100$, the difference between Armijo and exact quartic line search remains modest and instance-dependent, with exact quartic typically improving stationarity slightly but not generating a uniform runtime gain on every instance (on a representative $n\in\{20,40,60,80,100\}$ check, Armijo averages $0.126$ seconds, $73.4$ iterations, and projected-simplex residual $6.10\times 10^{-7}$, whereas exact quartic averages $0.162$ seconds, $79.1$ iterations, and residual $4.56\times 10^{-7}$). Beyond that range, a direct Armijo-versus-exact comparison becomes confounded with the broader regime shift away from the direct configuration itself. For this reason, the compact diagnostic suffices, and the main text uses exact line search because it is structurally natural in the sample-oracle implementation and does not underperform Armijo on the retained diagnostic.

\begin{table}[ht]
\centering
\scriptsize
\caption{Direct-configuration line-search comparison on the synthetic benchmark at $T=252$ with $n\in\{4,8,12\}$.}
\label{tab:yand-line-search}
\setlength{\tabcolsep}{4pt}
\begin{tabular}{@{}lrrrr@{}}
\toprule
Setting & Mean time (s) & Median time (s) & Mean iterations & Mean projected KKT residual \\
\midrule
Direct + Armijo & $1.40\times 10^{-2}$ & $9.93\times 10^{-3}$ & 29.8 & $3.61\times 10^{-7}$ \\
Direct + Exact quartic & $1.17\times 10^{-2}$ & $4.18\times 10^{-3}$ & 19.0 & $8.21\times 10^{-8}$ \\
\bottomrule
\end{tabular}
\end{table}

\paragraph{Sample-rich CRRA conditioning benchmark.}
This table records the exact medians for the sample-rich convex block summarized by the conditioning figure in the main paper. We fix $\gamma=6$, prescribe $\kappa_+(AU)\in\{1,10,10^2,10^3\}$ through the positive singular spectrum of $AU$, and set $n=1000<T=2000$. The construction uses $A=Q\,\mathrm{diag}(s)\,W^\top$ with $W=UV$, so the nonzero spectrum of $AU$ is controlled exactly while $AU$ can still be full column rank on the simplex tangent space. Because the benchmark also enforces $Ax^0=0$ at the equal-weight start, the positive-spectrum reduced Hessian satisfies $\kappa_+(H_0)=\kappa_+(AU)^2$. Table~\ref{tab:crra-conditioning} reports medians across three timed replications after one untimed warm-up pass per solver. Only YAND-MVSK (large) and Q-MVSK are compared, the PCG-based YAND configuration uses the same stall-recovery wrapper as in the large-scale synthetic study, and throughout this benchmark Q-MVSK triggers the wrapper's \texttt{kkt\_stall} label after 2 outer QP updates at every conditioning level. Here \texttt{kkt\_stall} is not a native Q-MVSK certificate; it means that the common projected-simplex KKT target is still unmet when Q-MVSK's legacy step/objective stagnation safeguard fires.

\begin{table}[ht]
\centering
\scriptsize
\caption{Sample-rich controlled CRRA conditioning benchmark with representative calibration $\gamma=6$, $n=1000$, and $T=2000$. Each row aggregates three synthetic instances constructed to prescribe the positive-spectrum conditioning of $AU$ inside the coefficient-certified CRRA convex regime, after one untimed warm-up pass per solver.}
\label{tab:crra-conditioning}
\setlength{\tabcolsep}{4pt}
\begin{tabular}{@{}rrrrrrr@{}}
\toprule
$\kappa_+(AU)$ & $\kappa_+(H_0)$ & Large t & Q t & Large KKT & Q KKT & Q / large \\
\midrule
$10^{0}$ & $10^{0}$ & 0.122 & 0.718 & 1.28e-06 & 2.74e-05 & 5.88 \\
$10^{1}$ & $10^{2}$ & 0.399 & 0.751 & 1.27e-06 & 4.30e-05 & 1.88 \\
$10^{2}$ & $10^{4}$ & 0.482 & 0.684 & 1.36e-06 & 5.53e-05 & 1.42 \\
$10^{3}$ & $10^{6}$ & 0.423 & 0.712 & 1.40e-06 & 1.67e-05 & 1.69 \\
\bottomrule
\end{tabular}
\end{table}

\paragraph{Under-sampled high-dimensional CRRA stress benchmark.}
Table~\ref{tab:crra-conditioning-stress} records the complementary $n>T$ stress counterpart. We again fix $\gamma=6$ and prescribe the same conditioning levels, but set $n=5000>T=252$. The construction still controls the nonzero spectrum of $AU$ exactly even though full tangential nondegeneracy is unavailable, which is why this block is interpreted as a computational stress test rather than as the paper's primary model-validation regime. The table reports medians across three timed replications after one untimed warm-up pass per solver, and throughout this benchmark Q-MVSK triggers the wrapper's \texttt{kkt\_stall} label after 2 outer QP updates at every level.

\begin{table}[ht]
\centering
\scriptsize
\caption{Under-sampled high-dimensional CRRA stress benchmark with representative calibration $\gamma=6$, $n=5000$, and $T=252$. Each row aggregates three synthetic instances constructed to prescribe the positive-spectrum conditioning of $AU$ inside the coefficient-certified CRRA convex regime, after one untimed warm-up pass per solver.}
\label{tab:crra-conditioning-stress}
\setlength{\tabcolsep}{4pt}
\begin{tabular}{@{}rrrrrrr@{}}
\toprule
$\kappa_+(AU)$ & $\kappa_+(H_0)$ & Large t & Q t & Large KKT & Q KKT & Q / large \\
\midrule
$10^{0}$ & $10^{0}$ & 0.569 & 30.270 & 5.02e-06 & 8.71e-05 & 53.16 \\
$10^{1}$ & $10^{2}$ & 1.270 & 30.407 & 5.37e-06 & 1.78e-04 & 23.95 \\
$10^{2}$ & $10^{4}$ & 1.291 & 32.407 & 5.99e-06 & 3.20e-05 & 25.11 \\
$10^{3}$ & $10^{6}$ & 1.394 & 31.870 & 6.61e-06 & 9.53e-05 & 22.86 \\
\bottomrule
\end{tabular}
\end{table}

\paragraph{Supporting geometry diagnostics.}
Although the main paper frames the first empirical layer as a dedicated CRRA conditioning benchmark, the same $T=252$ benchmark family already exhibits the same numerical mechanism in reduced form. Table~\ref{tab:ec-hessian-conditioning} reports reduced-Hessian spectra at the equal-weight start, aggregated across the three stylized coefficient-stress profiles. Over the $n\le 100$ block, the median positive-spectrum condition number rises from $1.32$ at $n=4$ to $17.76$ at $n=100$, while no negative eigenvalues are detected on any of the stored instances reported there. This does not certify global convexity of the unrestricted problem, but it does show that the reduced problems become materially more anisotropic before the ultra-scale range is reached. That is precisely the geometry change behind the paper's crossover from the direct YAND configuration to the PCG configuration.

\begin{table}[ht]
\centering
\scriptsize
\caption{Supporting reduced-Hessian conditioning diagnostic on the $T=252$ synthetic benchmark at the equal-weight start, aggregated across the three stylized coefficient-stress profiles. The table reports medians across the three stored instances for each asset dimension.}
\label{tab:ec-hessian-conditioning}
\begin{tabular}{@{}rrrrr@{}}
\toprule
$n$ & Median $\lambda_{\min}(H_{\mathrm{red}})$ & Median $\lambda_{\max}(H_{\mathrm{red}})$ & Median $\kappa_+(H_{\mathrm{red}})$ & Negative eigs. \\
\midrule
4 & $0.3470$ & $0.4577$ & $1.32$ & $0/3$ \\
20 & $0.2703$ & $0.5947$ & $2.22$ & $0/3$ \\
40 & $0.1669$ & $0.7578$ & $4.57$ & $0/3$ \\
60 & $0.1196$ & $0.8603$ & $7.21$ & $0/3$ \\
80 & $0.0791$ & $0.9797$ & $12.42$ & $0/3$ \\
100 & $0.0599$ & $1.0575$ & $17.76$ & $0/3$ \\
\bottomrule
\end{tabular}
\end{table}

\paragraph{Common-overlap large-scale comparison.}
Table~\ref{tab:yand-post100-baselines} isolates the common large-scale overlap block $n\in\{120,200,400,800\}$ before the ultra-scale instances dominate the averages in the paper. On this block, YAND-MVSK (large) and Q-MVSK are essentially tied in mean runtime, $0.265$ versus $0.260$ seconds, and tied on objective value on most instances, yet YAND attains projected-simplex residual at most $10^{-4}$ on all 12 runs compared with only 4 of 12 for Q-MVSK. The paper's large-scale conclusion should therefore not be read as an artifact of the $n=3000$ and $5000$ timing gap alone: tighter first-order accuracy is already visible on the common overlap block, and the ultra-scale cases merely magnify that difference into a decisive runtime advantage.

\begin{table}[ht]
\centering
\scriptsize
\caption{Representative large-scale comparison against incumbent unrestricted-MVSK solvers on the $T=252$ synthetic benchmark with $n\in\{120,200,400,800\}$ and three stylized coefficient-stress profiles per dimension. The YAND row uses the PCG-based large configuration. ``Mean gap to best'' is the average objective difference from the best method on the same instance; ``residual hits'' counts runs with projected-simplex KKT residual at most $10^{-4}$; ``objective ties'' counts runs within $10^{-6}$ of the best objective value on the same instance.}
\label{tab:yand-post100-baselines}
\setlength{\tabcolsep}{4pt}
\begin{tabular}{@{}lrrrrr@{}}
\toprule
Method & Mean time (s) & Mean proj. residual & Mean gap to best & Residual hits & Obj. ties \\
\midrule
YAND-MVSK (large) & $0.265$ & $9.78\times 10^{-6}$ & $4.68\times 10^{-7}$ & 12/12 & 10/12 \\
Q-MVSK & $0.260$ & $2.01\times 10^{-4}$ & $4.37\times 10^{-9}$ & 4/12 & 12/12 \\
UBDCA & $0.588$ & $1.17\times 10^{-1}$ & $5.86\times 10^{-2}$ & 1/12 & 1/12 \\
UDCA & $5.23$ & $1.75\times 10^{-1}$ & $1.06\times 10^{-1}$ & 0/12 & 0/12 \\
\bottomrule
\end{tabular}
\end{table}

\paragraph{Q-MVSK stall-sensitivity check.}
The wrapper label \texttt{kkt\_stall} should therefore be read carefully. It does not mean that Q-MVSK has reached a projected-KKT point comparable to YAND; it means that the common projected-simplex target remains unmet when the legacy Q-MVSK safeguard decides that the outer SCA sequence has become nearly stationary in step size or objective value. Representative sensitivity runs show that this is not merely an artifact of an unusually aggressive benchmark stop. On the sample-rich CRRA instance with $n=1000$, $T=2000$, $\gamma=6$, and target condition number $10^3$, tightening the Q-MVSK safeguard tolerances from $10^{-6}$ to $10^{-10}$ increases the outer-QP count from 2 to 3 but leaves the projected-simplex residual essentially unchanged at $1.31\times 10^{-5}$; disabling the safeguard and forcing 20 updates still leaves the residual at $1.31\times 10^{-5}$ while raising runtime from about $0.87$ seconds to $6.79$ seconds. On representative RAND instances with $(n,T)=(800,252)$ and $(1500,252)$, the same perturbation roughly triples runtime while improving the projected-simplex residual only from $9.15\times 10^{-5}$ to $7.74\times 10^{-5}$ and from $3.00\times 10^{-4}$ to $8.90\times 10^{-5}$, respectively. These checks do not claim that Q-MVSK has been exhaustively retuned, but they do show that the paper's first-order gap is not created by an arbitrary early stop in the benchmark wrapper.

\paragraph{Additional real-data robustness.}
The main paper's A-share exercise is a static allocation comparison: portfolios are estimated once on a training block and then held fixed over the subsequent test block. This appendix therefore records three additional checks that sharpen the interpretation of the baseline split without turning the exercise into a fully specified trading system.

\paragraph{Cross-profile transparency on the baseline split.}
Table~\ref{tab:real-mvsk-profile-target} compares the skew-focused, kurtosis-focused, and balanced MVSK profiles on the same baseline split and target sweep, with each profile recalibrated from the same MV starting point. The displayed target-dependence pattern is not unique to the kurtosis profile. At $q=0.50$, the three profiles are almost indistinguishable against MV: annualized-return gains are all about $3.83$ percentage points, Sharpe gains are all about $0.150$, and active shares are all about $9.3\%$. At $q=0.40$, however, the skew-focused and balanced profiles outperform the kurtosis profile on both return and downside metrics. The main paper therefore focuses on the kurtosis specification for interpretability of the tail-risk channel, not because it is uniformly dominant across profile choices.

\begin{table}[ht]
\centering
\scriptsize
\caption{Cross-profile comparison on the baseline real-data split. Positive $\Delta$ metrics favor MVSK relative to exact MV; active share is measured relative to MV.}
\label{tab:real-mvsk-profile-target}
\begin{tabular}{rlrrrrr}
\toprule
$q$ & Profile & $\Delta$ ret. & $\Delta$ Sharpe & $\Delta$ CVaR$_1$ & $\Delta$ MDD & A.S. vs MV\\
\midrule
0.40 & Skew & 6.05 & 0.359 & 0.094 & 2.323 & 0.133\\
0.40 & Kurt & 5.44 & 0.288 & 0.064 & 1.664 & 0.116\\
0.40 & Balanced & 5.77 & 0.330 & 0.083 & 2.072 & 0.127\\
0.50 & Skew & 3.83 & 0.150 & -0.001 & 0.359 & 0.093\\
0.50 & Kurt & 3.83 & 0.150 & -0.001 & 0.358 & 0.093\\
0.50 & Balanced & 3.83 & 0.150 & -0.001 & 0.360 & 0.093\\
0.60 & Skew & 2.26 & 0.088 & -0.005 & 0.133 & 0.083\\
0.60 & Kurt & 2.02 & 0.077 & -0.005 & 0.104 & 0.081\\
0.60 & Balanced & 2.23 & 0.087 & -0.005 & 0.130 & 0.083\\
\bottomrule
\end{tabular}
\end{table}

\paragraph{Rolling-window split robustness.}
Table~\ref{tab:real-kurtosis-split-robustness} re-estimates the kurtosis-focused allocation on three rolling windows. The 2023 window retains positive separation across all three return targets, with the largest Sharpe gain at $q=0.50$. The 2024 and 2025H1 windows tell a different story: for $q=0.40$ and $0.50$, MVSK becomes almost indistinguishable from exact MV, whereas at $q=0.60$ it separates again through higher return and Sharpe but without improved left-tail protection. The economic signal is therefore market-window dependent. This does not undo the baseline-split evidence in the paper, but it does rule out a universal reading in which moderate-target gains should be expected in every subsequent state of the A-share market.

\begin{table}[ht]
\centering
\scriptsize
\caption{Rolling-window split robustness for the kurtosis-focused MVSK portfolio on the 5-minute A-share panel. Each window re-estimates moments and retargets the mean coefficient on its own training block; positive $\Delta$ metrics favor MVSK.}
\label{tab:real-kurtosis-split-robustness}
\begin{tabular}{lrrrrrrr}
\toprule
Window & $q$ & Test yrs. & $\Delta$ ret. & $\Delta$ Sharpe & $\Delta$ CVaR$_1$ & $\Delta$ MDD & A.S. vs MV\\
\midrule
2023 & 0.40 & 1.03 & 1.02 & 0.049 & 0.001 & 0.840 & 0.099\\
2023 & 0.50 & 1.03 & 2.75 & 0.159 & 0.007 & 1.560 & 0.082\\
2023 & 0.60 & 1.03 & 2.11 & 0.142 & 0.002 & 0.375 & 0.055\\
2024 & 0.40 & 1.03 & -0.00 & -0.000 & -0.000 & -0.000 & 0.000\\
2024 & 0.50 & 1.03 & -0.00 & -0.001 & -0.000 & -0.000 & 0.000\\
2024 & 0.60 & 1.03 & 7.82 & 0.815 & -0.021 & -0.211 & 0.119\\
2025H1 & 0.40 & 0.51 & 0.00 & -0.004 & -0.000 & -0.000 & 0.000\\
2025H1 & 0.50 & 0.51 & 0.00 & -0.001 & -0.000 & -0.000 & 0.000\\
2025H1 & 0.60 & 0.51 & 4.58 & 0.057 & -0.019 & -0.121 & 0.065\\
\bottomrule
\end{tabular}
\end{table}

\paragraph{One-time implementation-cost sensitivity.}
Because the portfolios are held fixed over each test block, the relevant cost adjustment is the one-time transition from MV to MVSK, measured by active share relative to MV, rather than recurring turnover. Table~\ref{tab:real-kurtosis-cost-sensitivity} reports that the baseline split's gross annualized growth-rate gains remain essentially unchanged under one-way implementation costs of 10, 25, and 50 basis points. The corresponding break-even one-way costs are about 7416, 6110, and 3698 basis points for $q=0.40$, $0.50$, and $0.60$. Thus the baseline-split gains are not a fragile artifact of ignoring realistic one-time trading frictions, although recurring rebalancing costs would matter in a truly dynamic design.

\begin{table}[ht]
\centering
\scriptsize
\caption{One-time implementation-cost sensitivity for the static kurtosis-focused MVSK allocation. Costs are applied to the turnover needed to move from MV to MVSK, which equals active share relative to MV; net columns report annualized growth-rate differences after the stated one-way cost.}
\label{tab:real-kurtosis-cost-sensitivity}
\begin{tabular}{rrrrrrr}
\toprule
$q$ & A.S. vs MV & Gross $\Delta$ g & B.E. cost & Net $\Delta$ g @10bp & Net $\Delta$ g @25bp & Net $\Delta$ g @50bp\\
\midrule
0.40 & 0.116 & 8.31 & 7416.5 & 8.30 & 8.28 & 8.26\\
0.50 & 0.093 & 5.52 & 6109.7 & 5.51 & 5.50 & 5.48\\
0.60 & 0.081 & 2.93 & 3697.5 & 2.92 & 2.91 & 2.89\\
\bottomrule
\end{tabular}
\end{table}

\end{document}